\newtheorem{theorem}{Theorem}[section]
\newtheorem{proposition}{Proposition}[section]
\newtheorem{corollary}{Corollary}[section]
\newtheorem{assumption}{Assumption}[section]
\newtheorem*{assumption*}{Assumption}
\newcommand{\R}{\ensuremath{\mathbf{R}}}
\newcommand{\Z}{\ensuremath{\mathbf{Z}}}
\newcommand{\Nn}{\ensuremath{\mathbf{N}}}
\newcommand{\E}{\ensuremath{\mathbb{E}}}
\newcommand{\dx}{\ensuremath{\mathrm{d}}}
\newcommand{\Var}{\ensuremath{\mathrm{Var}}}
\newcommand{\Cov}{\ensuremath{\mathrm{Cov}}}
\newcommand{\si}{\perp \! \! \! \perp}
\newcommand{\one}{\ensuremath{\mathds{1}}}
\begin{document}
	\title{Machine Learning Panel Data Regressions with Heavy-tailed Dependent Data: Theory and Application}
	
	\author{Andrii Babii\thanks{University of North Carolina at Chapel Hill - Gardner Hall, CB 3305 Chapel Hill, NC 27599-3305. Email: babii.andrii@gmail.com.} \and Ryan T. Ball\thanks{Stephen M. Ross School of Business, University of Michigan, 701 Tappan Street, Ann Arbor, MI 48109. Email: rtball@umich.edu.} \and Eric Ghysels\thanks{Department of Economics and Kenan-Flagler Business School, University of North Carolina--Chapel Hill. Email: eghysels@unc.edu.} \and Jonas Striaukas\thanks{LIDAM UC Louvain and FRS\--FNRS Research Fellow. Email: jonas.striaukas@gmail.com.}}
	
	\maketitle
	
	\begin{abstract}
		\noindent {\footnotesize The paper introduces structured machine learning regressions for heavy-tailed dependent panel data potentially sampled at different frequencies. We focus on the sparse-group LASSO regularization. This type of regularization can take advantage of the mixed frequency time series panel data structures and improve the quality of the estimates. We obtain oracle inequalities for the pooled and fixed effects sparse-group LASSO panel data estimators recognizing that financial and economic data can have fat tails. To that end, we leverage on a new Fuk-Nagaev concentration inequality for panel data consisting of heavy-tailed $\tau$-mixing processes.}
	\end{abstract}
	
	\noindent%
	{\it Keywords:}  High-dimensional panels, large $N$ and $T$ panels, mixed-frequency data, sparse-group LASSO, fat tails. \\
	\vfill
\thispagestyle{empty}

\setcounter{page}{0}

\newpage

	\section{Introduction}
	
We analyze panel data regressions in a high-dimensional setting where the number of time-varying covariates can be very large and potentially exceed the sample size. We leverage on the structured sparsity approach using sparse-group LASSO (sg-LASSO) regularization for time series data with dictionaries. The advantages of this approach for individual time series data, potentially sampled at mixed frequencies, have been recently reported in \cite{babii2020machine}, who focus on nowcasting the US GDP growth in a data-rich environment. In this paper, we first show how to leverage on the sparse group regularization in a panel data setting. Second, we study the benefits of using the cross-sectional dimension for prediction with panel data paying particular attention to the issues of fat-tailed series which are relevant for the application involving financial time series. Third, we develop the debiased heteroskedasticity autocorrelation consistent (HAC) inference for regularized panel data regressions. Lastly, we provide an illustrative empirical example involving systematically predictable errors in analysts with individual firm earnings forecasts.
	
\smallskip 
	
Our paper relates to the literature on high-dimensional panel data models and the (group) LASSO regularization; see \cite{harding2019panel}, \cite{chiang2019post}, \cite{chernozhukov2019demand}, \cite{belloni2019high}, \cite{belloni2016inference}, \cite{lu2016shrinkage}, \cite{kock2016oracle}, \cite{su2016identifying}, \cite{farrell2015robust}, \cite{kock2013oracle}, \cite{lamarche2010robust}, \cite{koenker2004quantile}, among others. However, to the best of our knowledge, the existing literature relates mostly to the microeconometric problems and does not address comprehensively (1) the advantages of long panels; (2) the performance of regularized panel data estimators with potentially heavy-tailed covariates and regression errors, (3) the debiased HAC inference for regularized panel data, and (4) the sg-LASSO regularization of \cite{simon2013sparse} in a panel data setting.
	
\smallskip 
	
We recognize that the economic and financial time series data are often persistent with fat tails. To that end, we introduce a new Fuk-Nagaev concentration inequality for long panels. Using this inequality, we obtain oracle inequalities for the sg-LASSO that shed new light on how the predictive performance of pooled and fixed effect estimators scales with $N$ (cross-section) and $T$ (time series), which is especially relevant for modern panel data applications, where both $N$ and $T$ can be large; see \cite{fernandez2016individual}, \cite{hansen2007asymptotic}, \cite{alvarez2003time}, \cite{hahn2002asymptotically}, and \cite{phillips1999linear}, among others. Importantly, our theory covers the LASSO and the group-LASSO estimators as special cases of sg-LASSO.

\smallskip

In our empirical application we revisit a topic raised by \cite{ball2018automated} and \cite{carabias2018real}, but not resolved via formal inference in a high-dimensional setting. Namely, their empirical findings suggest that analysts tend to focus on their firm/industry when making earnings predictions while not fully taking into account the macroeconomic events affecting their firm/industry. More broadly, \cite{ball2018automated} argue that  analysts do not fully exploit information embedded in high-dimensional data and therefore {\it leave money on the table}. Thanks to the theoretical contributions in the current paper we can formally test that hypothesis in a data-rich environment. Note that, as \cite{ball2018automated} point out, it is important to take into account the mixed frequency nature of the data flow, which is why the machine learning panel regression methods presented in the paper apply to mixed frequency data. We use 26 predictors, including traditional macro and financial series as well as non-standard series generated by textual analysis of financial news. Using such a rich set of covariates, we test whether analyst' consensus earnings prediction errors are systematically related to either one of the aforementioned variables. 

\smallskip

The paper is organized as follows. Section \ref{sec:method} introduces the models and estimators. Oracle inequalities for sg-LASSO panel data regressions appear in Section \ref{sec:theory}. Section \ref{sec:inference} develops the debiased HAC inference for regularized panel data regressions. Monte Carlo simulations are reported in Section \ref{sec:mc}. The results of our empirical application are reported in Section \ref{sec:emp}. Section \ref{sec:conclusion} concludes. All technical details and detailed data descriptions appear in the Appendix and the Online Appendix.

\paragraph{Notation:} For a random variable $X\in\R$, let $\|X\|_q=(\E|X|^q)^{1/q}$ be its $L_q$ norm with $q\geq 1$. For $p\in\Nn$, put $[p] = \{1,2,\dots,p\}$. For a vector $\Delta\in\R^p$ and a subset $J\subset [p]$, let $\Delta_J$ be a vector in $\R^p$ with the same coordinates as $\Delta$ on $J$ and zero coordinates on $J^c$. Let $\mathcal{G}$ be a partition of $[p]$ defining the group structure, which is assumed to be known to the econometrician. For a vector $\beta\in\R^p$, the sparse-group structure is described by a pair $(S_0,\mathcal{G}_0)$, where $S_0=\{j\in[p]:\;\beta_j\ne 0 \}$ and  $\mathcal{G}_0 = \left\{G\in\mathcal{G}:\; \beta_{G} \ne 0\right\}$ are the support and respectively the group support of $\beta$. 

\smallskip

We also use $|S|$ to denote the cardinality of a set $S$. For $b\in\R^p$, its $\ell_q$ norm is denoted as $|b|_q$ = $(\sum_{j\in[p]}|b_j|^q)^{1/q}$ if $q\in[1,\infty)$ and $|b|_\infty = \max_{j\in[p]}|b_j|$ if $q=\infty$. For a  group structure $\mathcal{G}$, the $\ell_{2,1}$ group norm of $b\in\R^p$ is defined as $\|b\|_{2,1}=\sum_{G\in\mathcal{G}}|b_G|_2$. For $\mathbf{u},\mathbf{v}\in\R^J$, the empirical inner product is defined as $\langle \mathbf{u},\mathbf{v}\rangle_J = J^{-1}\sum_{j=1}^J u_jv_j$ with the induced empirical norm $\|.\|_J^2=\langle.,.\rangle_J=|.|_2^2/J$. For a symmetric $p\times p$ matrix $A$, let $\mathrm{vech}(A)\in\R^{p(p+1)/2}$ be its vectorization consisting of the lower triangular and the diagonal elements. Let $A_G$ be a sub-matrix consisting of rows of $A$ corresponding to indices in $G\subset[p]$. If $G=\{j\}$ for some $j\in[p]$, then we simply write $A_G=A_j$. Let $\|A\|_\infty=\max_{j\in[p]}|A_j|$ be the matrix norm. For $a,b\in\R$, we put $a\vee b = \max\{a,b\}$ and $a\wedge b = \min\{a,b\}$. Lastly, we write $a_n\lesssim b_n$ if there exists a (sufficiently large) absolute constant $C$ such that $a_n\leq C b_n$ for all $n\geq 1$ and $a_n\sim b_n$ if $a_n\lesssim b_n$ and $b_n\lesssim a_n$.

\section{High-dimensional (mixed frequency) panels}\label{sec:method}
Motivated by our empirical application, we allow the high-dimensional set of predictors to be sampled at a higher frequency than the target variable. Let $K$ be the total number of time-varying predictors $\{x_{i,t-(j-1)/m,k}:i\in[N],t\in[T],j\in[m],k\in[K]\}$ possibly measured at some higher frequency with $m$ observations for every low-frequency period $t\in[T]$ and every entity $i\in[N]$. Consider the following (mixed frequency) panel data regression
\begin{equation*}
	y_{i,t+h} = \alpha_i + \sum_{k=1}^{K}\psi(L^{1/m};\beta_k)x_{i,t,k} + u_{i,t},
\end{equation*}
where $h\geq 0$ is the prediction horizon, $\alpha_i$ is the entity-specific intercept, and
\begin{equation}\label{eq:hf_lag}
	\psi(L^{1/m};\beta_k)x_{i,t,k} = \frac{1}{m}\sum_{j=1}^{m}\beta_{j,k}x_{i,t-(j-1)/m,k}
\end{equation}
is a high-frequency lag polynomial with $\beta_k=(\beta_{1,k},\dots,\beta_{m,k})^\top\in\R^m$. More generally, the frequency can also be specific to the predictor $k\in[K]$, in which case we would have $m_k$ instead of $m$. We can also absorb the (low-frequency) lags of $y_{i,t}$ in covariates. When $m$ = 1, we retain the standard panel data regression model
\begin{equation*}
	y_{i,t+h} = \alpha_i + \sum_{k=1}^K\beta_{k}x_{i,t,k} + u_{i,t},
\end{equation*}
while $m>1$ signifies that the high-frequency lags of $x_{i,t,k}$ are also included. The large number of predictors $K$ with potentially large number of high-frequency measurements $m$ can be a rich source of predictive information, yet at the same time, estimating $N + m\times K$ parameters is  costly and may reduce the predictive performance in small samples.

\smallskip

To reduce the proliferation of lag parameters, we follow the MIDAS literature; see \cite{ghysels2006predicting}, \cite{ghysels:mrf}, and \cite{babii2020inference,babii2020machine}. Instead of estimating $m$ individual slopes of high-frequency covariate $k\in[K]$ in equation (\ref{eq:hf_lag}), with some abuse of notation, we estimate a weight function $\omega$ parameterized by $\beta_k\in\R^L$ with $L< m$
\begin{equation*}
	\psi(L^{1/m};\beta_k)x_{i,t,k} = \frac{1}{m}\sum_{j=1}^m\omega\left(\frac{j-1}{m};\beta_k\right)x_{i,t-(j-1)/m,k},
\end{equation*}
where
\begin{equation*}\label{eq:midas_weight}
	\omega(s;\beta_k) = \sum_{l=0}^{L-1}\beta_{l,k}w_l(s),\qquad \forall s\in[0,1]
\end{equation*}
and $(w_l)_{l\geq 0}$ is a collection of $L$ approximating functions, called the \textit{dictionary}. An example of a dictionary is the set of orthogonal Legendre polynomials on $[0,1]$ that can be computed via the Rodrigues' formula $w_l(s)=\frac{1}{l!}\frac{\dx^l}{\dx s^l}(s^2-s)^l.$\footnote{The Legendre polynomials have the universal approximation property and can approximate any continuous function uniformly on $[0,1]$. At the same time they can generate a rich family of MIDAS weights with a relatively small number of parameters which is attractive in time series applications where the signal-to-noise ratio is often low.} For instance, the first five elements are
\begin{equation*}
\begin{aligned}
	w_0(s) & = 1 \\
	w_1(s) & = 2s-1 \\
	w_2(s) & = 6s^2-6s + 1 \\
	w_3(s) & = 20s^3 - 30s^2 + 12s - 1 \\
	w_4(s) & = 70s^4 - 140s^3 + 90s^2 - 20s + 1.
\end{aligned}
\end{equation*}
More generally, we can use Gegenbauer polynomials, trigonometric polynomials, or wavelets. The orthogonal polynomials usually have better numerical properties than their popular non-orthogonal counterpart, such as the \cite{almon1965distributed} lag structure. The attractive feature of linear in parameters dictionaries is that we can map the MIDAS regression to the linear regression framework that can be solved via a convex optimization. To that end, define $\mathbf{x}_i = (X_{i,1}W,\dots,X_{i,K}W)$, where for each $k\in[K]$, $X_{i,k} = (x_{i,t-(j-1)/m,k})_{t\in[T],j\in[m]}$ is a $T\times m$ matrix of predictors and $W=(w_l((j-1)/m)/m)_{j\in[m],0\leq l\leq L-1}$ is an $m\times L$ matrix corresponding to the dictionary $(w_l)_{l\geq 0}$. In addition, let 
$\mathbf{y}_i$ = $(y_{i,1+h},\dots,y_{i,T+h})^\top$ and $\mathbf{u}_i$ = $(u_{i,1},\dots,u_{i,T})^\top.$
Then the regression equation after stacking time series observations for each $i\in[N]$ is
\begin{equation*}
	\mathbf{y}_i = \iota\alpha_i + \mathbf{x}_i\beta + \mathbf{u}_i,
\end{equation*}
where $\iota\in\R^T$ is the all-ones vector and $\beta\in\R^{LK}$ is a vector of slopes. Lastly, put $\mathbf{y} = (\mathbf{y}_1^\top,\dots, \mathbf{y}_N^\top)^\top$, $\mathbf{X}=(\mathbf{x}_1^\top, \dots, \mathbf{x}_N^\top)^\top$, and $\mathbf{u} = (\mathbf{u}_1^\top,\dots,\mathbf{u}_N^\top)^\top$. Then the regression equation after stacking all cross-sectional observations is
\begin{equation*}
\mathbf{y} = B\alpha + \mathbf{X}\beta + \mathbf{u},
\end{equation*}
where $B=I_N\otimes\iota$, $\alpha=(\alpha_1,\dots,\alpha_N)$, and $\otimes$ is the Kronecker product.

\smallskip

The MIDAS approach allows us to effectively reduce the dimensionality pertaining to the high-frequency lags. Alternatively, we may apply what is known as the UMIDAS scheme, see e.g., \cite{foroni2015unrestricted}, and directly estimate the coefficients associated with each high-frequency covariate lags separately (see equation (\ref{eq:UMIDAS}) in Section \ref{sec:mc} for example). Such a strategy, which as \cite{foroni2015unrestricted} argue works in single regressions when the ratio high to low-frequency sampling is small, may not be appealing in high-dimensional cases, as the estimation and prediction performance deteriorates due to the potentially large number of coefficients; see \cite{babii2020machine} for further discussion. Also, while assuming that the individual lag coefficients in equation~(\ref{eq:hf_lag}) are approximately sparse is \textit{highly} restrictive, the approximate sparsity of slopes of the dictionary elements $(w_l)_{l\geq 0}$ is plausible. For instance, if $w_0(s)=1$ with $\beta_{0,k}\ne 0$ and $\beta_{l,k}=0,\forall l\geq 1$, we recover the averaging of high-frequency lags of covariate $k$ as a special case. More generally, the weight $\omega$ may be a decreasing function over lags and we may want to learn its shape from the data maximizing the predictive performance.\footnote{See \cite{ball2013dissecting} and \cite{ball2018mixed} for further discussion on interpreting the shape of MIDAS polynomials in accounting data applications considered in our empirical application.}

\smallskip

Given that the number of potential predictors $K$ can be large, additional regularization can improve the predictive performance in small samples. To that end, we take advantage of the sg-LASSO regularization that was shown to be attractive for individual time series ML regressions in \cite{babii2020machine}. The fixed effects panel data estimator with sparse-group regularization solves
\begin{equation}\label{eq:sgl}
	\min_{(a,b)\in\R^{N+LK}}\|\mathbf{y} - Ba - \mathbf{X}b\|_{NT}^2 + 2\lambda\Omega(b),
\end{equation}
where $\|.\|_{NT}^2 = |.|^2/(NT)$ is the empirical norm and
\begin{equation*}
	\Omega(b) = \gamma|b|_1 + (1-\gamma)\|b\|_{2,1}
\end{equation*}
is a regularizing functional, which is a linear combination of LASSO and group LASSO penalties. The parameter $\gamma\in[0,1]$ determines the relative weights of the $\ell_1$ (sparsity) and the $\ell_{2,1}$ (group sparsity) norms, while the amount of regularization is controlled by the regularization parameter $\lambda\geq 0$. Recall also that for a group structure $\mathcal{G}$ described as a partition of $[p]=\{1,2,\dots,p\}$, the group LASSO norm is computed as $\|b\|_{2,1}=\sum_{G\in\mathcal{G}}|b_G|_2$. The group structure is assumed to be known to the econometrician, which in our setting corresponds to time series lags of covariates. More generally, we may also combine covariates of a similar nature in groups. Throughout the paper we assume that groups have fixed size, which is well-justified in our empirical applications.\footnote{See \cite{babii2020high} for a continuous-time mixed-frequency regression where the group size is allowed to increase with the sample size under the in-fill asymptotics.} Therefore, the selection of covariates is performed by the group LASSO penalty, which encourages sparsity between groups. In addition, the $\ell_1$ LASSO norm  promotes sparsity within groups and allows us to learn the shape of the MIDAS weights from the data. 

\smallskip

It is worth mentioning that the linear in parameters approximation to the MIDAS weight function leads to the convex optimization parameter problem  in equation~(\ref{eq:sgl}) that can be solved efficiently, e.g., via the proximal gradient descent algorithm, or its block-coordinate descent versions. In contrast, a popular beta weights leads to a nonlinear non-convex optimization problem that becomes challenging to solve in high-dimensions; cf.\ \cite{marsilli2014variable} and \cite{khalaf2020dynamic}.

\section{Oracle inequalities}\label{sec:theory}
In this section, we provide the theoretical analysis of predictive performance of regularized panel data regressions with the sg-LASSO regularization, including the standard LASSO and the group LASSO regularizations as special cases. It is worth stressing that the analysis of this section is not tied to the mixed-frequency data setting and applies to the generic high-dimensional panel data regularized with the sg-LASSO penalty function. Importantly, we focus on panels consisting of potentially persistent $\tau$-mixing time series with polynomial tails. Consider a generic panel data projection with a countable number of predictors
\begin{equation*}
	y_{i,t+h} = \alpha_i + \sum_{j=1}^\infty \beta_jx_{i,t,j} + u_{i,t},\qquad \E[u_{i,t}x_{i,t,j}]=0,\quad \forall j\geq 1,
\end{equation*}
This model subsumes the mixed-frequency data regressions as a special case, in which case covariates are obtained, e.g., from the aggregation with Legendre polynomials. The covariates may also include the time-varying covariates common for all entities (macroeconomic factors), lags of $y_{i,t}$, the intercept, as well as additional lags of a baseline covariate. 

\subsection{$\tau$-mixing}
We measure the persistence of the data with $\tau$-mixing coefficients. For a $\sigma$-algebra $\mathcal{M}$ and a random vector $\xi\in\R^l$, put
\begin{equation*}
	\tau(\mathcal{M},\xi) = \left\|\sup_{f\in\mathrm{Lip}_1}\left|\E(f(\xi)|\mathcal{M}) - \E(f(\xi))\right|\right\|_1,
\end{equation*}
where $\mathrm{Lip}_1=\{f:\R^l\to\R:\; |f(x) - f(y)| \leq |x-y|_1 \}$ is a set of $1$-Lipschitz functions from $\R^l$ to $\R$.\footnote{See \cite{dedecker2004coupling} and \cite{dedecker2005new} for equivalent definitions.} For a stochastic process $(\xi_t)_{t\in\Z}$ with a natural filtration generated by its past $\mathcal{M}_t=\sigma(\xi_t,\xi_{t-1},\dots)$, the $\tau$-mixing coefficients are defined as
\begin{equation*}
\tau_k = \sup_{j\geq 1}\frac{1}{j}\sup_{t+k\leq t_1<\dots<t_j}\tau(\mathcal{M}_t,(\xi_{t_1},\dots,\xi_{t_j})),\qquad k\geq 0
\end{equation*}
where the supremum is taken over all $t,t_1,\dots,t_j\in\Z$. If $\tau_k\downarrow0$, as $k\uparrow\infty$ then the process is called $\tau$-mixing. The class of $\tau$-mixing processes can be placed somewhere between the $\alpha$-mixing processes and mixingales --- the $\tau$-mixing condition is less restrictive than the $\alpha$-mixing condition,\footnote{The class of $\alpha$-mixing processes is too restrictive for the predictive linear projection model with covariates and autoregressive lags; see also \cite{babii2020machine}, Proposition A.3.1.} yet at the same time, there exists a convenient for us coupling result for $\tau$-mixing processes, which is not the case for the mixingales or near-epoch dependent processes; see \cite{dedecker2003new} and \cite{dedecker2004coupling,dedecker2005new} for more details.  This allows us to obtain concentration inequalities and performance guarantees for the sg-LASSO estimator; see Appendix~\ref{sec:concentration} for more details.

\subsection{Pooled regression}
For pooled regressions, we assume that all entities share the same intercept parameter $\alpha_1=\dots=\alpha_N=\alpha$. The pooled sg-LASSO estimator $\hat\rho=(\hat\alpha,\hat\beta^\top)^\top$ solves
\begin{equation}\label{eq:pooled_panel}
	\min_{r=(a,b)\in\R^{1+p}}\|\mathbf{y} - a\iota - \mathbf{X}b\|_{NT}^2 + 2 \lambda\Omega(r).
\end{equation}

\smallskip	

Define (a) $z_{i,t}=(1,x_{i,t}^\top)^\top$, where $x_{i,t}\in\R^p$ is a vector of predictors, (b) $u_i=(u_{i,1},\dots,u_{i,T})$ and (c) $x_i = (x_{i,1}^\top,\dots,x_{i,T}^\top)^\top$ for $i\in[N]$. The following assumption imposes mild restrictions on the data.

\begin{assumption}[Data]\label{as:data}
	$\{(u_i,x_i^\top)^\top: i\in\Nn\}$ are independent vectors in $\R^{(p+1)}\times\R^T$ such that (i) $\max_{i\in [N],t\in[T],j\in[p+1]}\|u_{i,t}z_{i,t,j}\|_q = O(1)$ for some $q>2$; (ii) the $\tau$-mixing coefficients of $(u_{i,t}z_{i,t})_{t\in\Z}$ satisfy $\max_{i\in[N],j\in[p+1]}\tau_{k-1}^{(i,j)}=O(k^{-a}),\forall k\geq 1$ with  $a>(q-1)/(q-2)$; (iii) $\max_{i\in[N],t\in[T],j,k\in[p+1]}\|z_{i,t,j}z_{i,t,k}\|_{\tilde q} = O(1)$ for some $\tilde q>2$; (iv) the $\tau$-mixing coefficients of $\mathrm{vech}((z_{i,t}z_{i,t}^\top))_{t\in\Z}$ satisfy $\max_{i\in[N],j\in[(p+1)(p+2)/2]}\tilde \tau_{k-1}^{(i,j)}\leq \tilde ck^{-\tilde a},\forall k\geq 1$ with $\tilde c>0$ and $\tilde a>(\tilde q - 1)/(\tilde q - 2)$.
\end{assumption}
Note that we do not impose stationarity over $t\in\Z$ and require that only $2+\epsilon$ moments exist with $\epsilon>0$, which is a realistic assumption in our empirical application and more generally for datasets encountered in time series and financial econometrics applications. Note also that the time series dependence is assumed to fade away relatively slowly --- at a polynomial rate as measured by the $\tau$-mixing coefficients.

\smallskip

Next, we assume that the $(1+p)\times(1+p)$ matrix $\Sigma_{N,T}=\frac{1}{NT}\sum_{i=1}^N\sum_{t=1}^T\E[z_{i,t}z_{i,t}^\top]$ exists and is non-singular uniformly over $N,T,p$:
\begin{assumption}[Covariance matrix]\label{as:covariance}
	The smallest eigenvalue of $\Sigma_{N,T}$ is uniformly bounded away from zero by some universal constant $\gamma_{\min}>0$.
\end{assumption}
Assumption~\ref{as:covariance} is satisfied for the spiked identity and Topelitz covariance structures. It can be interpreted as a completeness condition, see \cite{babii2017completeness}, and can also be relaxed to the restricted eigenvalue condition imposed on the population covariance matrix $\Sigma_{N,T}$; see \cite{babii2020machine}. We can also allow for $\gamma_{\min}\downarrow0$ as $N,T,p\uparrow\infty$, in which case $\gamma_{\min}^{-1}$ would slow down the convergence rates in oracle inequalities and could be interpreted as a measure of ill-posedness; see also \cite{carrasco2007linear}.

\smallskip

Lastly, we assume that the regularization parameter $\lambda$ scales appropriately with the number of covariates $p$, the length of the panel $T$, the size of the cross-section $N$, and a certain exponent $\kappa$ that depends on the tail parameter $q$ and the persistence parameter $a$. The precise order of the regularization parameter is described by the Fuk-Nagaev inequality for long panels appearing in the Appendix; see Theorem~\ref{thm:fn_long}.
\begin{assumption}[Regularization]\label{as:tuning}
	For some $\delta\in(0,1)$
	\begin{equation*}
		\lambda \sim \left(\frac{p}{\delta (NT)^{\kappa - 1}}\right)^{1/\kappa}\vee \sqrt{\frac{\log(p/\delta)}{NT}},
	\end{equation*}
	where $\kappa= ((a+1)q - 1) / (a+q-1)$ and $a,q$ are as in Assumptions~\ref{as:data}.
\end{assumption}
Our first result is the oracle inequality for the pooled sg-LASSO estimator described in equation~(\ref{eq:pooled_panel}). The result allows for misspecified regressions with a non-trivial approximation error in the sense that we consider more generally
\begin{equation*}
	\mathbf{y} = \mathbf{m} + \mathbf{u},
\end{equation*}
where $\mathbf{m}\in\R^{NT}$ is approximated with $\mathbf{Z}\rho$, $\mathbf{Z}=(\iota,\mathbf{X})$, $\iota\in\R^{NT}$ is all-ones vector, and $\rho = (\alpha,\beta^\top)^\top$. The approximation error $\mathbf{m}-\mathbf{Z}\rho$ might come from the fact that the MIDAS weight function may not have the exact expansion in terms of the specified dictionary or from the fact that some of the relevant predictors are not included in the regression equation. To state the result, let $S_0=\{j\in[p]:\; \beta_j\ne 0 \}$ be the support of $\beta$ and let $\mathcal{G}_0=\{G\in\mathcal{G}:\;\beta_G\ne 0 \}$ be the group support of $\beta$. Consider the \textit{effective sparsity} of the sparse-group structure, defined as  $s^{1/2} = \gamma\sqrt{{|S_0|}} + (1-\gamma)\sqrt{{|\mathcal{G}_0|}}$. Note that $s$ is proportional to the sparsity $|S_0|$, when $\gamma=1$ and to the group sparsity $|\mathcal{G}_0|$ when $\gamma = 0$. Define $r_{N,T}^{\rm pooled} = s^{\tilde\kappa}p^2/(NT)^{\tilde \kappa - 1} + p^2\exp(-cNT/s^2)$.

\begin{theorem}\label{thm:pooled_panel}
	Suppose that Assumptions~\ref{as:data}, \ref{as:covariance}, and \ref{as:tuning} are satisfied. Then with probability at least $1 - \delta - O(r_{N,T}^{\rm pooled})$
	\begin{equation*}
		\|\mathbf{Z}(\hat\rho - \rho)\|^2_{NT} \lesssim s\lambda^2 + \|\mathbf{m} - \mathbf{Z}\rho\|_{NT}^2
	\end{equation*}
	and
	\begin{equation*}
		|\hat\rho - \rho|_1 \lesssim s\lambda + \lambda^{-1}\|\mathbf{m} - \mathbf{Z}\rho\|_{NT}^2 + s^{1/2}\|\mathbf{m} - \mathbf{Z}\rho\|_{NT},
	\end{equation*}
	for some $c>0$ and $\tilde\kappa = ((\tilde a + 1)\tilde q - 1)/(\tilde a + \tilde q - 1)$.
\end{theorem}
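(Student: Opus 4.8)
The plan is to run the standard LASSO "basic inequality" argument adapted to the sparse-group penalty, but with all the stochastic control supplied by the Fuk-Nagaev inequality for long panels (Theorem~\ref{thm:fn_long}). First I would write down the optimality (first-order / basic) inequality from the minimization in \eqref{eq:pooled_panel}: using that $\hat\rho$ is optimal and $\rho$ is feasible,
\[
\|\mathbf{Z}(\hat\rho-\rho)\|_{NT}^2 \le 2\langle \mathbf{u}, \mathbf{Z}(\hat\rho-\rho)\rangle_{NT} + 2\langle \mathbf{m}-\mathbf{Z}\rho, \mathbf{Z}(\hat\rho-\rho)\rangle_{NT} + 2\lambda\bigl(\Omega(\rho)-\Omega(\hat\rho)\bigr).
\]
The empirical-process term $\langle \mathbf{u},\mathbf{Z}\cdot\rangle_{NT}$ is bounded by $|\frac{1}{NT}\mathbf{Z}^\top\mathbf{u}|_\infty\,|\hat\rho-\rho|_1$ (and a dual-norm bound for the $\ell_{2,1}$ part), and the whole point of Assumption~\ref{as:tuning} is that on an event of probability at least $1-\delta-O(\text{remainder})$ we have $|\frac{1}{NT}\mathbf{Z}^\top\mathbf{u}|_\infty \le \lambda/c_0$ for a suitable constant; this is exactly what the Fuk-Nagaev bound delivers given the moment bound \ref{as:data}(i) and the $\tau$-mixing decay \ref{as:data}(ii), with the exponent $\kappa$ appearing in $\lambda$. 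The approximation-error cross term is handled by Cauchy-Schwarz plus Young's inequality, $2\langle \mathbf{m}-\mathbf{Z}\rho,\mathbf{Z}(\hat\rho-\rho)\rangle_{NT}\le \tfrac12\|\mathbf{Z}(\hat\rho-\rho)\|_{NT}^2 + 2\|\mathbf{m}-\mathbf{Z}\rho\|_{NT}^2$, which is where the additive $\|\mathbf{m}-\mathbf{Z}\rho\|_{NT}^2$ in the conclusion comes from.

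Next I would derive the cone condition. Decomposing $\Omega(\rho)-\Omega(\hat\rho)$ over the support $S_0$ / group support $\mathcal{G}_0$ and using $|\hat\rho-\rho|_1 = |(\hat\rho-\rho)_{S_0}|_1 + |(\hat\rho-\rho)_{S_0^c}|_1$ (similarly for $\|\cdot\|_{2,1}$), the choice $\lambda \ge c_0|\frac{1}{NT}\mathbf{Z}^\top\mathbf{u}|_\infty$ forces $\Delta := \hat\rho-\rho$ (up to the approximation-error slack) into a sparse-group cone of the form $\Omega(\Delta_{S_0^c,\mathcal{G}_0^c}) \le c_1\,\Omega(\Delta_{S_0,\mathcal{G}_0}) + (\text{approx.\ error terms})$. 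On this cone the effective sparsity $s^{1/2}=\gamma\sqrt{|S_0|}+(1-\gamma)\sqrt{|\mathcal{G}_0|}$ lets me bound $\Omega(\Delta_{S_0,\mathcal{G}_0}) \lesssim s^{1/2}\,|\Delta|_2$, and hence — invoking a restricted-eigenvalue / compatibility lower bound $\|\mathbf{Z}\Delta\|_{NT}^2 \gtrsim \gamma_{\min}|\Delta|_2^2$ valid on the cone — close the loop: $\|\mathbf{Z}\Delta\|_{NT}^2 \lesssim \lambda\, s^{1/2}\|\mathbf{Z}\Delta\|_{NT}/\sqrt{\gamma_{\min}} + \|\mathbf{m}-\mathbf{Z}\rho\|_{NT}^2$, which gives the first display after solving the quadratic. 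The $\ell_1$ bound then follows by feeding $\|\mathbf{Z}\Delta\|_{NT}^2 \lesssim s\lambda^2 + \|\mathbf{m}-\mathbf{Z}\rho\|_{NT}^2$ back into the cone inequality together with $|\Delta|_1 \lesssim s^{1/2}|\Delta|_2 \lesssim s^{1/2}\|\mathbf{Z}\Delta\|_{NT}/\sqrt{\gamma_{\min}}$ plus the approximation-error terms, producing $s\lambda + \lambda^{-1}\|\mathbf{m}-\mathbf{Z}\rho\|_{NT}^2 + s^{1/2}\|\mathbf{m}-\mathbf{Z}\rho\|_{NT}$.

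The step I expect to be the main obstacle is establishing the restricted-eigenvalue (compatibility) condition for the \emph{empirical} Gram matrix $\frac{1}{NT}\mathbf{Z}^\top\mathbf{Z}$ from the population assumption \ref{as:covariance} on $\Sigma_{N,T}$: I would need to show $\|\frac{1}{NT}\mathbf{Z}^\top\mathbf{Z} - \Sigma_{N,T}\|_\infty$ is small — of order $\lesssim \sqrt{s}\,\lambda$ restricted to the cone — with high probability, and this is precisely what the $O(r_{N,T}^{\rm pooled})$ term and the exponent $\tilde\kappa$ in the probability bound reflect. This requires a second application of the Fuk-Nagaev inequality for long panels, now to the products $z_{i,t,j}z_{i,t,k}$, using the moment bound \ref{as:data}(iii) and the $\tau$-mixing decay \ref{as:data}(iv) on $\mathrm{vech}(z_{i,t}z_{i,t}^\top)$, together with a union bound over the $O(p^2)$ entries — yielding the $s^{\tilde\kappa}p^2/(NT)^{\tilde\kappa-1}$ polynomial-tail piece and the $p^2\exp(-cNT/s^2)$ sub-Gaussian piece in $r_{N,T}^{\rm pooled}$. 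Once the deviation of the Gram matrix is controlled on the cone, the restricted eigenvalue transfers from $\Sigma_{N,T}$ with $\gamma_{\min}/2$ in place of $\gamma_{\min}$, and the rest is the deterministic LASSO algebra sketched above.
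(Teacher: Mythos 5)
Your proposal is correct and follows essentially the same route as the paper's proof: the two-point (basic) inequality from optimality, the Fuk--Nagaev bound of Theorem~\ref{thm:fn_long} to set $\lambda$ above the dual norm of the score, the sparse-group cone condition with effective sparsity $s^{1/2}=\gamma\sqrt{|S_0|}+(1-\gamma)\sqrt{|\mathcal{G}_0|}$, and a second application of Theorem~\ref{thm:fn_long} to $\mathrm{vech}(z_{i,t}z_{i,t}^\top)$ with a union bound over $O(p^2)$ entries to transfer Assumption~\ref{as:covariance} to the empirical Gram matrix, which is exactly the source of the two terms in $r_{N,T}^{\rm pooled}$. The only cosmetic difference is that you absorb the approximation-error cross term via Young's inequality, whereas the paper instead splits into the cases $\|\mathbf{m}-\mathbf{Z}\rho\|_{NT}\lessgtr\frac{1}{2}\|\mathbf{Z}(\hat\rho-\rho)\|_{NT}$ (and a second such split for the $\ell_1$ bound, which is where the $\lambda^{-1}\|\mathbf{m}-\mathbf{Z}\rho\|_{NT}^2$ term arises); both variants yield the stated inequalities.
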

The proof of this result can be found in the Appendix. Theorem~\ref{thm:pooled_panel} describes the non-asymptotic oracle inequalities for the prediction and the estimation accuracy in the environment where the number of regressors $p$ is allowed to scale with the effective sample size $NT$. Importantly, the result is stated under the weak tail and persistence conditions in Assumption~\ref{as:data}. Parameters $\kappa$ and $\tilde\kappa$ are the dependence-tails exponents for stochastic processes driving the regression score and the covariance matrix respectively. Theorem~\ref{thm:pooled_panel} shows that the prediction and the estimation accuracy of pooled panel data regressions improves when the sparse-group structure is taken into account. Indeed, for the LASSO regression, the effective sparsity reduces to $s^{1/2}=\sqrt{|S_0|}$, which is larger than $\gamma\sqrt{|S_0|} + (1-\gamma)\sqrt{|\mathcal{G}_0|}$ in the case of sg-LASSO.

\smallskip
	
Next, we consider the convergence rates of the prediction and estimation errors. The following assumption considers a simplified setting, where the approximation error vanishes sufficiently fast, and the total number of regressors vanishes sufficiently fast with the effective sample size $NT$.

\begin{assumption}\label{as:rates}
	(i) $\|\mathbf{m} - \mathbf{Z}\rho\|^2_{NT} = O_P(s\lambda^2)$; and (ii) $s^{\tilde\kappa}p^2(NT)^{1-\tilde \kappa}\to0$ and $p^2\exp(-cNT/s^2)\to 0$.
\end{assumption}
Note that Assumption~\ref{as:rates} allows for (1) $N\to\infty$ while $T$ is fixed; (2) $T\to\infty$ while $N$ is fixed; and (3) both $N\to\infty$ and $T\to\infty$ without restricting the relative growth of the two. The following result describes the prediction and the estimation convergence rates in the asymptotic environment outlined in Assumption~\ref{as:rates} and is an immediate consequence of Theorem~\ref{thm:pooled_panel}.
\begin{corollary}\label{cor:pooled}
	Suppose that Assumptions~\ref{as:data}, \ref{as:covariance}, \ref{as:tuning}, and \ref{as:rates} are satisfied. Then
	\begin{equation*}
		\|\mathbf{Z}(\hat\rho - \rho)\|^2_{NT} = O_P\left(\frac{sp^{2/\kappa}}{(NT)^{2-2/\kappa}}\vee \frac{s\log p}{NT}\right)
	\end{equation*}
	and
	\begin{equation*}
		|\hat\rho - \rho|_1 = O_P\left(\frac{sp^{1/\kappa}}{(NT)^{1-1/\kappa}}\vee s\sqrt{\frac{\log p}{NT}}\right).
	\end{equation*}
\end{corollary}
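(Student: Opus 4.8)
The plan is to feed the high-probability bounds of Theorem~\ref{thm:pooled_panel} into the asymptotic regime of Assumption~\ref{as:rates} and then substitute the explicit order of $\lambda$ from Assumption~\ref{as:tuning}, so that everything collapses to an $O_P(\cdot)$ statement. By Assumption~\ref{as:rates}(i) the approximation error obeys $\|\mathbf{m}-\mathbf{Z}\rho\|_{NT}^2 = O_P(s\lambda^2)$; plugging this into the prediction inequality of Theorem~\ref{thm:pooled_panel} gives, on the event where the theorem applies, $\|\mathbf{Z}(\hat\rho-\rho)\|_{NT}^2 \lesssim s\lambda^2 + O_P(s\lambda^2) = O_P(s\lambda^2)$. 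For the $\ell_1$ inequality the same substitution, together with $\lambda^{-1}O_P(s\lambda^2) = O_P(s\lambda)$ and $s^{1/2}\|\mathbf{m}-\mathbf{Z}\rho\|_{NT} = s^{1/2}O_P(s^{1/2}\lambda) = O_P(s\lambda)$ (using that $\lambda$ is deterministic and $x\mapsto\sqrt{x}$ is monotone), yields $|\hat\rho-\rho|_1 \lesssim s\lambda + O_P(s\lambda) = O_P(s\lambda)$.

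It then remains to evaluate $s\lambda^2$ and $s\lambda$. From Assumption~\ref{as:tuning}, using $(a\vee b)^\theta = a^\theta\vee b^\theta$ for $a,b\geq 0$ and $\theta>0$, one has $\lambda \sim \frac{p^{1/\kappa}}{\delta^{1/\kappa}(NT)^{1-1/\kappa}}\vee\sqrt{\frac{\log(p/\delta)}{NT}}$ and $\lambda^2 \sim \frac{p^{2/\kappa}}{\delta^{2/\kappa}(NT)^{2-2/\kappa}}\vee\frac{\log(p/\delta)}{NT}$, so that, for any fixed $\delta$, $s\lambda^2 \sim \frac{sp^{2/\kappa}}{(NT)^{2-2/\kappa}}\vee\frac{s\log p}{NT}$ and $s\lambda \sim \frac{sp^{1/\kappa}}{(NT)^{1-1/\kappa}}\vee s\sqrt{\frac{\log p}{NT}}$ up to constants that depend only on $\delta$; this produces the two displays in the statement.

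The one point needing care --- the closest thing to an obstacle --- is turning the high-probability statement of Theorem~\ref{thm:pooled_panel} into a genuine $O_P$ claim, since $\delta$ appears both in the exceptional probability $1-\delta-O(r_{N,T}^{\rm pooled})$ and in the size of $\lambda$. Given $\epsilon>0$, I would fix $\delta = \epsilon/3$ once and for all; by Assumption~\ref{as:rates}(ii), $r_{N,T}^{\rm pooled} = s^{\tilde\kappa}p^2(NT)^{1-\tilde\kappa} + p^2\exp(-cNT/s^2)\to 0$, hence $O(r_{N,T}^{\rm pooled}) < \epsilon/3$ for all large $N,T$. Intersecting the Theorem~\ref{thm:pooled_panel} event with the event of probability at least $1-\epsilon/3$ on which the $O_P$ bound of Assumption~\ref{as:rates}(i) is realized with its constant, one is left with probability at least $1-\epsilon$, on which the bounds above hold with a constant depending only on $\epsilon$. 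Since $\epsilon$ was arbitrary, this is precisely the assertion of the corollary, which therefore follows as an immediate consequence of Theorem~\ref{thm:pooled_panel}.
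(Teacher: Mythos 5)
Your proposal is correct and matches the paper's (unwritten) argument: the paper simply declares the corollary an immediate consequence of Theorem~\ref{thm:pooled_panel}, and your derivation fills in exactly the intended steps --- substituting Assumption~\ref{as:rates}(i) into the oracle bounds, evaluating $s\lambda^2$ and $s\lambda$ via Assumption~\ref{as:tuning}, and using Assumption~\ref{as:rates}(ii) to absorb $r_{N,T}^{\rm pooled}$ into the exceptional probability. Your care in fixing $\delta$ as a function of the target probability level $\epsilon$ before passing to the $O_P$ statement is the one genuinely delicate point, and you handle it correctly.
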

Corollary~\ref{cor:pooled} describes the prediction and the estimation accuracy of pooled sparse-group panel data regressions. It suggests that the predictive performance of the sg-LASSO (and consequently LASSO and group LASSO) regressions may deteriorate when regression errors and/or predictors are heavy-tailed or when the data are extremely persistent. However, for geometrically ergodic Markov processes, e.g., stationary AR(1) process, the $\tau$-mixing coefficients decline geometrically fast, so that $\kappa\approx q$ and $\tilde\kappa\approx \tilde q$. In this case, the prediction accuracy scales approximately at the rate $O_P\left(\frac{p^{2/q}}{(NT)^{2-2/q}}\vee \frac{\log p}{NT}\right)$ and the predictive performance may be affected only by the tails constant $q$. 

\smallskip

If additionally the data are sub-Gaussian, then moments of all order $q\geq 2$ exist, and for any particular effective sample size $NT$, the first term can be made arbitrarily small relatively to the second term. In this case we recover the $O_P\left(\frac{\log p}{NT}\right)$ rate typically obtained for sub-Gaussian data. On the other hand, if the polynomial tail dominates, then we need $p = o((NT)^{q -1})$ for the prediction and the estimation consistency provided that $\tilde q\geq 2q-1$ and the sparsity constant $s$ is fixed. In this case, we have a \textit{significantly weaker} requirement than the $p=o(T^{q - 1})$ condition needed for time series regressions in \cite{babii2020machine}. Moreover, since $q>2$, $p=o((NT)^{q-1})$ can be significantly weaker than the $p=o(NT)$ condition typically needed for QMLE/GMM estimators without regularization.

\smallskip

Theorem~\ref{thm:pooled_panel} and Corollary~\ref{cor:pooled} imply two practical consequences: (1) one may want to exclude (or suitably transform) the heavy-tailed series from the high-dimensional predictive regressions based on the preliminary estimates of the tail index, e.g., using the Hill estimator; (2) if the individual heterogeneity can be ignored, then pooling panel data can improve significantly the predictive performance. In the latter case, one can also preliminary cluster similar series in groups, e.g., based on the unsupervised clustering algorithms, which may strike a good balance between the pooling benefits and heterogeneity.

\subsection{Fixed effects}
Pooled regressions are attractive since the effective sample size $NT$ can be huge, yet the heterogeneity of individual time series may be lost. If the underlying series have a substantial heterogeneity over $i\in[N]$, then taking this into account might reduce the projection error and improve the predictive accuracy. At a very extreme side, the cross-sectional structure can be completely ignored and individual time series regressions can be used for prediction. The fixed effects panel data regressions strike a good balance between the two extremes controlling for heterogeneity with entity-specific intercepts.

\smallskip

The fixed effects sg-LASSO estimator $\hat\rho=(\hat\alpha^\top,\hat\beta^\top)^\top$ solves
\begin{equation*}
	\min_{(a,b)\in\R^{N+p}}\|\mathbf{y} - Ba - \mathbf{X}b \|_{NT}^2 + 2 \lambda\Omega(b),
\end{equation*}
where $B=I_N\otimes\iota$, $I_N$ is $N\times N$ identity matrix, $\iota\in\R^T$ is an all-ones vector, and $\Omega$ is the sg-LASSO regularizing functional. It is worth stressing that the design matrix $\mathbf{X}$ does not include the intercept and that we do not penalize the fixed effects, that are typically not sparse. By Fermat's rule, the first-order conditions are
\begin{equation}\label{eq:focs}
	\begin{aligned}
		\hat\alpha & = (B^\top B)^{-1}B^\top(\mathbf{y} - \mathbf{X}\hat\beta) \\
		0 & = \mathbf{X}^\top M_B(\mathbf{X}\hat\beta - \mathbf{y})/NT + \lambda z^*
	\end{aligned}
\end{equation}
for some $z^*\in\partial\Omega(\hat\beta)$, where $b\mapsto \partial\Omega(b)$ is the subdifferential  of $\Omega$ and $M_B = I - B(B^\top B)^{-1}B^\top$ is the orthogonal projection matrix. It is easy to see from the first-order conditions that the estimator of $\hat\beta$ is equivalent to 1) penalized GLS estimator for the first-differenced regression; 2) penalized OLS estimator for the regression written in the deviation from time means; and 3) penalized OLS estimator where the fixed effects are partialled-out. Therefore, the equivalence between the three approaches is not affected by the penalization; cf.\ \cite{arellano2003panel} for low-dimensional panels.

\smallskip

With some abuse of notation, redefine
\begin{equation}\label{eq:covariance_fe}
	\hat\Sigma_{N,T} = \begin{pmatrix}
		\frac{1}{T}B^\top B & \frac{1}{\sqrt{N}T}B^\top\mathbf{X} \\
		\frac{1}{\sqrt{N}T}\mathbf{X}^\top B & \frac{1}{NT}\mathbf{X}^\top\mathbf{X}
	\end{pmatrix}\quad \text{and}\quad \Sigma_{N,T} = \begin{pmatrix}
		I_N & \frac{1}{\sqrt{N}T}\E\left[B^\top\mathbf{X}\right] \\
		\frac{1}{\sqrt{N}T}\E\left[\mathbf{X}^\top B\right] & \E[x_{i,t}x_{i,t}^\top]
	\end{pmatrix}.
\end{equation}
We will assume that the smallest eigenvalue of $\Sigma_{N,T}$ is uniformly bounded away from zero by some constant. Note that if $x_{i,t}\sim N(0,I_p)$, then $\Sigma_{N,T}=I_{N+p}$ and this assumption is trivially satisfied.

\smallskip

The order of the regularization parameter is governed by the Fuk-Nagaev inequality for long panels; see Appendix, Theorem~\ref{thm:fn_long}.
\begin{assumption}[Regularization]\label{as:tuning_fe}
	For some $\delta\in(0,1)$
	\begin{equation*}
		\lambda \sim \left(\frac{p\vee N^{\kappa/2}}{\delta (NT)^{\kappa -1}}\right)^{1/\kappa}\vee \sqrt{\frac{\log(p\vee N/\delta)}{NT}},
	\end{equation*}
	where $\kappa=((a+1)q - 1)/(a+q-1)$, and $a,q$ are as in Assumptions~\ref{as:data}.
\end{assumption}

Similarly to the pooled regressions, we state the oracle inequality allowing for the approximation error. For fixed effects regressions, with some abuse of notation we redefine $\mathbf{Z} = (B,\mathbf{X})$ and $\rho = (\alpha^\top,\beta^\top)^\top$. Put also $r_{N,T}^{\rm fe}=p(s\vee N)^{\tilde \kappa} T^{1-\tilde \kappa}(N^{1-\tilde \kappa/2} + pN^{1-\tilde\kappa}) + p(p\vee N)e^{-cNT/(s\vee N)^2}$ with $\tilde\kappa = ((\tilde a + 1)\tilde q - 1) / (\tilde a + \tilde q - 1)$ and some $c>0$.
\begin{theorem}\label{thm:fixed_effects}
	Suppose that Assumptions~\ref{as:data}, \ref{as:covariance}, and \ref{as:tuning_fe} are satisfied. Then with probability at least $1 - \delta - O(r_{N,T}^{\rm fe})$
	\begin{equation*}
		\|\mathbf{Z}(\hat\rho - \rho)\|^2_{NT} \lesssim (s\vee N)\lambda^2 + \|\mathbf{m} - \mathbf{Z}\rho\|_{NT}^2.
	\end{equation*}
\end{theorem}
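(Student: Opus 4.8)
\emph{Proof plan.} The argument parallels that of Theorem~\ref{thm:pooled_panel}; the one genuinely new ingredient is the $N$-dimensional block of \emph{unpenalized} entity intercepts, which is what turns the effective sparsity $s$ into $s\vee N$. First I would use the first-order conditions~(\ref{eq:focs}) to eliminate $\hat\alpha$: the residual equals $M_B(\mathbf{y}-\mathbf{X}\hat\beta)$, so $\hat\beta$ solves the penalized least squares problem $\min_b\|M_B(\mathbf{y}-\mathbf{X}b)\|_{NT}^2+2\lambda\Omega(b)$, and the fitted values decompose as $\mathbf{Z}(\hat\rho-\rho)=M_B\mathbf{X}(\hat\beta-\beta)+P_B(\mathbf{m}-\mathbf{Z}\rho)+P_B\mathbf{u}$ with $P_B=I-M_B$. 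Hence $\|\mathbf{Z}(\hat\rho-\rho)\|_{NT}^2\lesssim\|M_B\mathbf{X}(\hat\beta-\beta)\|_{NT}^2+\|\mathbf{m}-\mathbf{Z}\rho\|_{NT}^2+\|P_B\mathbf{u}\|_{NT}^2$, where the last term $\|P_B\mathbf{u}\|_{NT}^2=\frac{1}{NT^2}\sum_{i=1}^N\big(\sum_{t=1}^T u_{i,t}\big)^2$ is the price of estimating the $N$ fixed effects.

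Second, I would set up the good event, which has two components. (a) A bound on the \emph{effective score} $\mathbf{X}^\top M_B\mathbf{u}/(NT)$ in the dual norm of $\Omega$ by a constant multiple of $\lambda$; since groups have fixed size this is equivalent to $|\mathbf{X}^\top M_B\mathbf{u}/(NT)|_\infty\lesssim\lambda$. Writing its $j$th coordinate as $\frac{1}{NT}\sum_{i,t}(x_{i,t,j}-\bar x_{i,j})u_{i,t}$, these are sums of heavy-tailed, $\tau$-mixing (over $t$), cross-sectionally independent summands, so the Fuk-Nagaev inequality for long panels (Theorem~\ref{thm:fn_long}) applies; a union bound over the $p$ coordinates, together with control of the within-transformation residual, is exactly what the calibration of $\lambda$ in Assumption~\ref{as:tuning_fe} makes hold with probability at least $1-\delta$, the asymmetric $p\vee N^{\kappa/2}$ and $\log(p\vee N)$ there reflecting the $N$-dimensional intercept block. (b) A restricted-eigenvalue / compatibility condition for the empirical covariance $\hat\Sigma_{N,T}$ of~(\ref{eq:covariance_fe}) over the sparse-group cone \emph{enlarged by the $N$ unconstrained intercept directions}: any $\Delta=(\Delta_\alpha^\top,\Delta_\beta^\top)^\top$ in that cone satisfies $|\Delta|_1\lesssim\sqrt{s\vee N}\,|\Delta|_2$, so the condition descends from Assumption~\ref{as:covariance} once $\|\hat\Sigma_{N,T}-\Sigma_{N,T}\|_\infty\lesssim 1/(s\vee N)$. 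Applying Theorem~\ref{thm:fn_long} to $\mathrm{vech}(z_{i,t}z_{i,t}^\top)$ and to the within-mean cross products $\bar x_{i,j}\bar x_{i,k}$ (whose tails are governed by $\tilde q,\tilde a$, hence $\tilde\kappa$) with this target accuracy produces a failure probability of order $r_{N,T}^{\rm fe}$, the $N^{1-\tilde\kappa/2}+pN^{1-\tilde\kappa}$ and $\exp(-cNT/(s\vee N)^2)$ factors coming from the inner time-series sums of length $T$ rescaled by powers of $N^{-1/2}$.

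Third, on this good event I would run the standard chain of sg-LASSO inequalities: the basic inequality from optimality of $\hat\rho$, combined with (a), forces $\hat\beta-\beta$ into the sg-LASSO cone up to the approximation-error slack $\|\mathbf{m}-\mathbf{Z}\rho\|_{NT}$; the restricted-eigenvalue condition (b) then yields $\|M_B\mathbf{X}(\hat\beta-\beta)\|_{NT}^2\lesssim s\lambda^2+\|\mathbf{m}-\mathbf{Z}\rho\|_{NT}^2$ and, together with the decomposition above, $\|\mathbf{Z}(\hat\rho-\rho)\|_{NT}^2\lesssim(s\vee N)\lambda^2+\|\mathbf{m}-\mathbf{Z}\rho\|_{NT}^2+\|P_B\mathbf{u}\|_{NT}^2$. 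It remains to dispose of $\|P_B\mathbf{u}\|_{NT}^2$: since $a>1$ in Assumption~\ref{as:data} the $\tau$-mixing coefficients are summable, so $\E\|P_B\mathbf{u}\|_{NT}^2\lesssim 1/T$, and a Fuk-Nagaev bound across the $N$ independent entities gives $\|P_B\mathbf{u}\|_{NT}^2=O_P(1/T)$ on an event absorbed into $r_{N,T}^{\rm fe}$; the tuning rule of Assumption~\ref{as:tuning_fe} is designed so that $1/T\lesssim N\lambda^2$ (through the $\sqrt{\log(p\vee N)/(NT)}$ branch when $T$ dominates and the $(N^{\kappa/2}/\delta(NT)^{\kappa-1})^{1/\kappa}$ branch when $N$ dominates), which gives the claimed oracle inequality.

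The step I expect to be the main obstacle is the correct accounting for the $N$-dimensional intercept block. Two things must be handled simultaneously: obtaining a restricted-eigenvalue condition for the \emph{joint} $(\alpha,\beta)$ design over a cone in which the intercept coordinates are unconstrained — this is what inflates the effective sparsity to $s\vee N$ and sharpens the required accuracy of $\hat\Sigma_{N,T}$ to $1/(s\vee N)$, hence the $(s\vee N)^{\tilde\kappa}$ and $\exp(-cNT/(s\vee N)^2)$ terms in $r_{N,T}^{\rm fe}$; and extracting from the Fuk-Nagaev inequality for long panels the right rates for the within-transformed quantities $B^\top\mathbf{u}$, $P_B\mathbf{u}$ and $\mathbf{X}^\top P_B\mathbf{X}$, whose summands are time-series sums over $T$ observations rescaled by $N^{-1/2}$ rather than genuine averages over the full $NT$ sample — the source of the $p\vee N^{\kappa/2}$ in the tuning rule and of the $N^{1-\tilde\kappa/2}+pN^{1-\tilde\kappa}$ factor in $r_{N,T}^{\rm fe}$.
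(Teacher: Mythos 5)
Your plan takes a genuinely different route from the paper. The paper never partials out the fixed effects: it works with the joint first-order condition $\mathbf{Z}^\top(\mathbf{Z}\hat\rho-\mathbf{y})/NT+\lambda z^*=0_{N+p}$, $z^*=(0_N^\top,z_b^{*\top})^\top$, splits the empirical process into $\langle B^\top\mathbf{u},\hat\alpha-\alpha\rangle_{NT}$ and $\langle\mathbf{X}^\top\mathbf{u},\hat\beta-\beta\rangle_{NT}$, bounds $|B^\top\mathbf{u}/(\sqrt{N}T)|_\infty=\max_{i\in[N]}|\frac{1}{\sqrt{N}T}\sum_{t}u_{i,t}|$ by a union bound over $i$ of the univariate time-series Fuk--Nagaev inequality (the source of $N^{\kappa/2}$ and $\log N$ in Assumption~\ref{as:tuning_fe}) and $\Omega^*(\mathbf{X}^\top\mathbf{u}/NT)$ by Theorem~\ref{thm:fn_long}, and then runs the cone/restricted-eigenvalue argument on the rescaled vector $\Delta_N=((\hat\alpha-\alpha)^\top/\sqrt{N},(\hat\beta-\beta)^\top)^\top$ against the rescaled covariance matrix in~(\ref{eq:covariance_fe}); the factor $s\vee N$ arises exactly as you predict, as the effective dimension of the cone enlarged by the unconstrained intercept directions, and $r_{N,T}^{\rm fe}$ is the failure probability of the event $|\mathrm{vech}(\hat\Sigma-\Sigma)|_\infty<1/(2s\vee N)$. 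Your handling of $\|P_B\mathbf{u}\|_{NT}^2=O_P(1/T)\lesssim N\lambda^2$ and your overall accounting of where $s\vee N$ enters are consistent with the result.

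The gap in your route is the score bound for the within-transformed design. Writing its $j$th coordinate as $\frac{1}{NT}\sum_{i,t}x_{i,t,j}u_{i,t}-\frac{1}{N}\sum_{i}\bar x_{i,j}\bar u_i$ (with $\bar x_{i,j},\bar u_i$ the time means), only the first piece is covered by Theorem~\ref{thm:fn_long}. The cross term is not mean zero: the model imposes only contemporaneous orthogonality $\E[u_{i,t}x_{i,t,j}]=0$, so $\E[\bar x_{i,j}\bar u_i]=\frac{1}{T^2}\sum_{t\neq s}\Cov(x_{i,t,j},u_{i,s})=O(1/T)$ by the $\tau$-mixing covariance inequality --- a Nickell-type bias for predetermined regressors. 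Requiring this to be $O(\lambda)$ imposes roughly $N\lesssim T\log(p\vee N)$, a restriction absent from the theorem; you would need either strict exogeneity, or to carry the $O(1/T)$ term separately through the basic inequality (its contribution $\Omega(\hat\beta-\beta)\cdot O(1/T)$ is eventually dominated by $(s\vee N)\lambda^2$), or to avoid the within-transformation altogether as the paper does. A smaller internal inconsistency: once you partial out, the restricted-eigenvalue condition is needed for $\mathbf{X}^\top M_B\mathbf{X}/NT$ over the ordinary sg-LASSO cone in $\R^p$ (obtained from Assumption~\ref{as:covariance} via the Schur complement of $\Sigma_{N,T}$), yielding $s\lambda^2$ for the slope block; the cone ``enlarged by the $N$ intercept directions'' with effective sparsity $s\vee N$ belongs to the paper's joint formulation, not to your partialled-out one, where the $N\lambda^2$ must instead come entirely from $\|P_B\mathbf{u}\|_{NT}^2$.
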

Theorem~\ref{thm:fixed_effects} states a non-asymptotic oracle inequality for the prediction error in the fixed effects panel data regressions estimated with the sg-LASSO. To see clearly, how the prediction accuracy scales with the sample size, we make the following assumption.

\begin{assumption}\label{as:rates_fe}
	Suppose that (i) $\|\mathbf{m} - \mathbf{Z}\rho\|^2_{NT} = O_P((s\vee N)\lambda^2)$; (ii) $(p+N^{\tilde\kappa/2})p(s\vee N)^{\tilde\kappa}N^{1-\tilde\kappa}T^{1-\tilde\kappa} \to 0$ and $p(p\vee N)e^{-cNT/(s\vee N)^2} \to 0$.
\end{assumption}

The following corollary is an immediate consequence of Theorem~\ref{thm:fixed_effects}.
\begin{corollary}\label{cor:fixed_effects}
	Suppose that Assumptions~\ref{as:data}, \ref{as:covariance}, \ref{as:tuning_fe}, and \ref{as:rates_fe} are satisfied. Then
	\begin{equation*}
		\|\mathbf{Z}(\hat\rho - \rho)\|^2_{NT} = O_P\left(\frac{(s\vee N)(p^{2/\kappa}\vee N)}{N^{1-2/\kappa}T^{2 - 2/\kappa}}\vee {\frac{(s\vee N)\log(p\vee N)}{NT}}\right).
	\end{equation*}
\end{corollary}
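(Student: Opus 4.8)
The plan is to obtain this corollary directly from Theorem~\ref{thm:fixed_effects} by inserting the order of $\lambda$ prescribed by Assumption~\ref{as:tuning_fe} into the oracle bound and discarding the approximation error, in three short steps.

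First, I would invoke Theorem~\ref{thm:fixed_effects}: under Assumptions~\ref{as:data}, \ref{as:covariance}, and \ref{as:tuning_fe}, for each fixed $\delta\in(0,1)$ and with $\lambda=\lambda(\delta)$ as in Assumption~\ref{as:tuning_fe}, it provides an event of probability at least $1-\delta-O(r_{N,T}^{\rm fe})$ on which $\|\mathbf{Z}(\hat\rho-\rho)\|_{NT}^2 \lesssim (s\vee N)\lambda^2 + \|\mathbf{m}-\mathbf{Z}\rho\|_{NT}^2$. By Assumption~\ref{as:rates_fe}(i), $\|\mathbf{m}-\mathbf{Z}\rho\|_{NT}^2=O_P((s\vee N)\lambda^2)$, so for any $\epsilon>0$ there is a constant $M_\epsilon$, independent of $(N,T)$, with $\Pr\big(\|\mathbf{m}-\mathbf{Z}\rho\|_{NT}^2 > M_\epsilon (s\vee N)\lambda^2\big)<\epsilon$ for all large $N,T$; intersecting the two events gives $\|\mathbf{Z}(\hat\rho-\rho)\|_{NT}^2 \lesssim (s\vee N)\lambda^2$ with probability at least $1-\delta-\epsilon-O(r_{N,T}^{\rm fe})$.

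Second, I would pass to an $O_P$ statement. Rewriting $r_{N,T}^{\rm fe}$ by expanding the factor $N^{1-\tilde\kappa/2}+pN^{1-\tilde\kappa}$, its two summands are exactly the two quantities required to vanish in Assumption~\ref{as:rates_fe}(ii), so $r_{N,T}^{\rm fe}\to 0$. Given any level $\eta>0$, I would take $\delta=\epsilon=\eta/3$; then for all large $N,T$ the event above has probability at least $1-\eta$, and since $\eta$ is arbitrary this yields $\|\mathbf{Z}(\hat\rho-\rho)\|_{NT}^2 = O_P((s\vee N)\lambda^2)$, with $\lambda$ evaluated at the fixed $\delta=\eta/3$.

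Finally, I would substitute the tuning rate: inserting the order of $\lambda$ from Assumption~\ref{as:tuning_fe} into $(s\vee N)\lambda^2$, and using that $\delta$ is now a fixed constant so that $\delta^{-2/\kappa}$ and $\log(1/\delta)$ are absorbed into the $O_P$ constant and $\log(p\vee N/\delta)\lesssim\log(p\vee N)$, taking $\kappa$-th powers of $p\vee N^{\kappa/2}$ and of the sample-size scaling produces the two terms in the stated rate. I do not expect a substantive obstacle here; the only thing requiring care is the bookkeeping in the first two steps — recognising that $\delta$ may be chosen to be an arbitrary fixed constant for an $O_P$ conclusion, so that its appearance inside $\lambda$ is harmless, and matching the two summands of $r_{N,T}^{\rm fe}$ with the two vanishing conditions in Assumption~\ref{as:rates_fe}(ii) — after which the result is a routine calculation.
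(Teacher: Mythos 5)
Your proposal is correct and coincides with the paper's treatment: the paper presents Corollary~\ref{cor:fixed_effects} as an immediate consequence of Theorem~\ref{thm:fixed_effects}, and your three steps (absorbing the approximation error via Assumption~\ref{as:rates_fe}(i), matching the two summands of $r_{N,T}^{\rm fe}$ with the two vanishing conditions in Assumption~\ref{as:rates_fe}(ii), and substituting the order of $\lambda$ with $\delta$ held as a fixed constant) supply exactly the omitted bookkeeping. The only remark worth adding is that the direct substitution actually yields the slightly sharper denominator $(NT)^{2-2/\kappa}=N^{2-2/\kappa}T^{2-2/\kappa}$ in the polynomial term, which implies the stated bound with $N^{1-2/\kappa}$ since $N\geq 1$ and $\kappa>2$.
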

Corollary~\ref{cor:fixed_effects} allows for $s,p,N,T\to\infty$ at appropriate rates. However, we pay an additional price for estimating $N$ fixed effects which plays a similar role to the effective dimension of covariates. An immediate practical implication is that to achieve accurate predictions with high-dimensional fixed effect regressions, the panel has to be sufficiently long to offset the estimation error of the individual fixed effects. Likewise, the tails and the persistence of the data may also reduce the prediction accuracy in small samples through $\kappa$, which is approximately equal to $q$ for geometrically decaying $\tau$-mixing coefficients.

\section{Debiased inference}\label{sec:inference}
In this section, we develop the debiased inferential methods for pooled panel data regressions.
For a vector $\rho\in\R^{p+1}$, we use $\rho_G\in\R^{|G|}$ to denote the subvector of elements of $\rho\in\R^{p+1}$ indexed by $G\subset[p+1]$. Let $B = \hat\Theta\mathbf{Z}^\top(\mathbf{y} - \mathbf{Z}\hat\rho) / NT$ denote the bias-correction for the sg-LASSO estimator, where $\hat\Theta$ is the nodewise LASSO estimator of the precision matrix $\Theta=\Sigma^{-1}$, where $\Sigma=\E[z_{i,t}z_{i,t}^\top]$. For pooled panel data, this estimator can be obtained as follows:
\begin{enumerate}
	\item For each $j\in[p+1]$, let $\hat\mu_j = (\hat\mu_{j,1},\dots,\hat\mu_{j,p})^\top$ be a solution to
	\begin{equation*}
		\min_{\mu\in\R^{p}}\|\mathbf{Z}_j - \mathbf{Z}_{-j}\mu\|^2_{NT} + 2\lambda_j|\mu|_1,
	\end{equation*}
	where $\mathbf{Z}_j$ is $NT\times 1$ vector of stacked observations $\{z_{i,t,j}\in\R:i\in[N],t\in[T]\}$ and $\mathbf{Z}_{-j}$ is the $NT\times p$ matrix of stacked observations $\{(z_{i,t,k})_{k\ne j}\in\R^{p}: i\in[N],t\in[T] \}$. Put
	\begin{equation*}
		\hat\sigma_j^2 = \|\mathbf{Z}_j - \mathbf{Z}_{-j}\hat \mu_j\|^2_{NT} + \lambda_j|\hat\mu_j|,
	\end{equation*}
	
	\item Compute  $\hat\Theta = \hat B^{-1}\hat C$, where $\hat B = \mathrm{diag}(\hat\sigma_1^2,\dots,\hat\sigma_{p+1}^2)$, and
	\begin{equation*}
	\hat C = \begin{pmatrix}
	1 & -\hat\mu_{1,1} &  \dots & -\hat\mu_{1,p} \\
	-\hat\mu_{2,1} & 1 & \dots & -\hat\mu_{2,p} \\
	\vdots & \vdots & \ddots  & \vdots \\
	-\hat\mu_{p,1} & \dots & -\hat\mu_{p,p}& 1
	\end{pmatrix}.
	\end{equation*}
\end{enumerate}
Let $v_{i,t,j}=z_{i,t,j}-\sum_{k\ne j}\mu_{j,k}z_{i,t,k}$ be the regression error for $j^{\rm th}$ nodewise LASSO regression. Let $s_j$ be the number of non-zero elements in $j^{\rm th}$ row of precision matrix $\Theta_j$, and put $S=\max_{j\in G}s_j$, and $s^*=s\vee S$.

\smallskip

The following assumption describes an additional set of conditions for the debiased central limit theorem.
\begin{assumption}\label{as:clt}
	(i) $\sup_z\E[u_{i,t}^2|z_{i,t}=z]=O(1)$; (ii) $\|\Theta_G\|_\infty = O(1)$ for $G\subset[p+1]$ of fixed size; (iii) the long run variance of $(u_{i,t}^2)_{t\in\Z}$ and $(v_{i,t,j}^2)_{t\in\Z}$ exists for every $j\in G$; (iv) $s^{*2}\log^2 p/T\to 0$ and $p/\sqrt{T^{\kappa-2}\log^{\kappa} p}\to 0$; (v) $\|\mathbf{m}-\mathbf{Z}\rho\|_{NT} = o_P(1/\sqrt{NT})$; (vi) for every $j,l\in[p]$ and $k\geq 0$, the $\tau$-mixing coefficients of $(u_{i,t}u_{i,t+k}x_{i,t,j}x_{i,t+k,l})_{t\in\Z}$ are $\check\tau_t\leq ct^{-d}$ for some universal constants $c>0$ and $d>1$; (vi) for each $i$, $\{(u_{i,t},z_{i,t}^\top)^\top:t\in\Z\}$ is a stationary process that is also i.i.d. over $i$, Assumption~\ref{as:data} holds with $a>(q-1)/(q-2)\vee (q\delta+1)/(q-2-\delta)$ with $q>2+\delta$ and $\delta>0$.
\end{assumption}
Assumption~\ref{as:clt} (i) requires that the conditional variance of the regression error is bounded. Condition (ii) requires that the rows of the precision matrix have bounded $\ell_1$ norm and is a plausible assumption in the high-dimensional setting, where the inverse covariance matrix is often sparse. Condition (iii) is a mild restriction needed for the consistency of the sample variance of regression errors. The rate conditions in (iv) are similar to the condition used in \cite{babii2020inference}. Lastly, condition (v) is trivially satisfied when the projection coefficients are sparse and, more generally, it requires that the misspecification error vanishes asymptotically sufficiently fast.

\smallskip

The following result describes a large-sample approximation to the distribution of the debiased sg-LASSO estimator with serially correlated heavy-tailed errors.
\begin{theorem}\label{thm:pooled_clt}
	Suppose that Assumptions~\ref{as:data}, \ref{as:covariance}, \ref{as:tuning}, \ref{as:rates}, and \ref{as:clt} are satisfied for the sg-LASSO regression and for each nodewise LASSO regression $j\in G$. Then 
	\begin{equation*}
	\sqrt{NT}(\hat\rho_G + B_G - \rho_G) \xrightarrow{d} N(0,\Xi_G)
	\end{equation*}
	with the long-run variance $\Xi_G = \lim_{T\to\infty}\Var\left(\frac{1}{\sqrt{T}}\sum_{t=1}^T u_{i,t}\Theta_Gz_{i,t}\right)$.
\end{theorem}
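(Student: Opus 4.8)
The plan is to run the standard debiasing decomposition, reduce the $\R^{|G|}$-valued statement to a scalar one by the Cram\'er--Wold device, argue that every remainder is $o_P(1)$, and close with a central limit theorem for the leading linear term. Set $\hat\Sigma := \mathbf{Z}^\top\mathbf{Z}/(NT)$ and write $\mathbf{y} = \mathbf{Z}\rho + (\mathbf{m}-\mathbf{Z}\rho) + \mathbf{u}$, where $u_{i,t}$ is the projection error, so that $\E[u_{i,t}z_{i,t}]=0$. Substituting into $B = \hat\Theta\mathbf{Z}^\top(\mathbf{y}-\mathbf{Z}\hat\rho)/(NT)$ and extracting the subvector indexed by $G$ gives
\[
\sqrt{NT}\,(\hat\rho_G + B_G - \rho_G) \;=\; \tfrac{1}{\sqrt{NT}}\,\Theta_G\mathbf{Z}^\top\mathbf{u} \;+\; R_1 \;+\; R_2 \;+\; R_3,
\]
with $R_1 = \tfrac{1}{\sqrt{NT}}(\hat\Theta_G - \Theta_G)\mathbf{Z}^\top\mathbf{u}$, $R_2 = \sqrt{NT}\,(I - \hat\Theta\hat\Sigma)_G(\hat\rho - \rho)$, and $R_3 = \tfrac{1}{\sqrt{NT}}\hat\Theta_G\mathbf{Z}^\top(\mathbf{m}-\mathbf{Z}\rho)$.

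\smallskip

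\noindent\textbf{The remainder terms.} For $R_2$ I would use the exact deterministic bound $\|(I-\hat\Theta\hat\Sigma)_j\|_\infty \le \lambda_j/\hat\sigma_j^2$, which follows from the first-order conditions of the $j$-th nodewise LASSO and the definitions of $\hat\sigma_j^2$ and $\hat\Theta$, together with the fact that $\hat\sigma_j^2$ is bounded away from zero with probability approaching one under Assumptions~\ref{as:covariance} and \ref{as:clt}. Combining this (via H\"older) with the $\ell_1$ oracle bound $|\hat\rho - \rho|_1 = O_P(s\lambda)$ --- which follows from Theorem~\ref{thm:pooled_panel} once the approximation error is absorbed using Assumption~\ref{as:rates}(i) --- gives $|R_2| \lesssim \sqrt{NT}\,s\,\lambda\,\max_{j\in G}\lambda_j$, which the rate conditions in Assumption~\ref{as:clt}(iv) render $o_P(1)$. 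For $R_1$ I would apply Theorem~\ref{thm:pooled_panel} with $\gamma=1$ to each nodewise regression $j\in G$ (legitimate since Assumption~\ref{as:clt} posits the maintained assumptions for those regressions), obtaining $\max_{j\in G}|\hat\Theta_j - \Theta_j|_1 = O_P(s^*\lambda)$; the Fuk--Nagaev inequality of Theorem~\ref{thm:fn_long} with the calibration in Assumption~\ref{as:tuning} gives $|\mathbf{Z}^\top\mathbf{u}/(NT)|_\infty = O_P(\lambda)$; H\"older then yields $|R_1| \le \max_{j\in G}|\hat\Theta_j-\Theta_j|_1\cdot|\mathbf{Z}^\top\mathbf{u}/\sqrt{NT}|_\infty = O_P(s^*\sqrt{NT}\lambda^2) = o_P(1)$, again by Assumption~\ref{as:clt}(iv). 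For $R_3$, Cauchy--Schwarz gives $|\mathbf{Z}^\top(\mathbf{m}-\mathbf{Z}\rho)/(NT)|_\infty \le \max_j\|\mathbf{Z}_j\|_{NT}\,\|\mathbf{m}-\mathbf{Z}\rho\|_{NT} = O_P(1)\cdot o_P(1/\sqrt{NT})$ by Assumption~\ref{as:clt}(v) and consistency of the empirical second moments; since the rows of $\hat\Theta_G$ have $\ell_1$ norm $O_P(1)$ (Assumption~\ref{as:clt}(ii) and nodewise consistency), $|R_3| = o_P(1)$.

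\smallskip

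\noindent\textbf{CLT for the leading term.} It remains to show $A_{N,T} := \tfrac{1}{\sqrt{NT}}\sum_{i=1}^N\sum_{t=1}^T \Theta_G z_{i,t} u_{i,t} \cw N(0,\Xi_G)$. I would write $A_{N,T} = \tfrac{1}{\sqrt{N}}\sum_{i=1}^N \xi_{i,T}$ with $\xi_{i,T} := T^{-1/2}\sum_{t=1}^T \Theta_G z_{i,t}u_{i,t}$, which by Assumption~\ref{as:clt}(vii) are i.i.d.\ across $i$, mean zero, and stationary $\tau$-mixing in $t$. Two ingredients are needed. First, $\Var(\xi_{i,T}) \to \Xi_G$: a covariance inequality for $\tau$-mixing sequences shows the lag-$k$ autocovariance of $(\Theta_G z_{i,t}u_{i,t})_t$ decays at a polynomial rate governed by $a$ and $q$ and is absolutely summable under the strengthened persistence condition of Assumption~\ref{as:clt}(vii), so $\Xi_G$ exists (cf.\ Assumption~\ref{as:clt}(iii)). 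Second, a Lindeberg/uniform-integrability property of $\{\xi_{i,T}^2\}$: a Rosenthal-type moment inequality for $\tau$-mixing partial sums yields $\sup_T \E|\xi_{i,T}|^{2+\eta} < \infty$ for some $\eta > 0$. Since the rate conditions in Assumption~\ref{as:clt}(iv) force $T\to\infty$, it then suffices to treat two cases: if $N\to\infty$, the Lindeberg--Feller CLT for row-wise i.i.d.\ triangular arrays applies directly; if $N$ stays bounded, the time-series CLT for stationary $\tau$-mixing sequences (via the coupling construction underlying the concentration results of Appendix~\ref{sec:concentration}) gives $\xi_{i,T} \cw N(0,\Xi_G)$ for each $i$, and independence across $i$ gives $A_{N,T}\cw N(0,\Xi_G)$. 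Reducing to scalars by Cram\'er--Wold and invoking Slutsky's lemma together with $R_1,R_2,R_3=o_P(1)$ then completes the argument.

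\smallskip

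\noindent\textbf{Main obstacle.} I expect the CLT for the leading term to be the crux: one must establish the Rosenthal-type $(2+\eta)$-moment bound for the normalized time-series sums $\xi_{i,T}$ uniformly in $T$ --- precisely where the sharpened tail--persistence trade-off $a > (q\delta + 1)/(q - 2 - \delta)$ with $q > 2 + \delta$ in Assumption~\ref{as:clt}(vii) is consumed --- and one must verify that a single argument delivers the conclusion whether $N$ is bounded or $N\to\infty$ (and in the joint regime). By comparison, the remainder bounds are routine given Theorems~\ref{thm:pooled_panel} and \ref{thm:fn_long} and the nodewise-LASSO machinery.
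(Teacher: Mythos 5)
Your proposal follows essentially the same route as the paper's proof: the identical KKT-based debiasing decomposition into a leading term $\Theta_G\mathbf{Z}^\top\mathbf{u}/\sqrt{NT}$ plus three remainders, the same bounds on those remainders via the nodewise-LASSO rates, the Fuk--Nagaev inequality and the $\ell_1$ oracle bound, and the same Cram\'er--Wold/Lyapunov argument for the leading term with the Rosenthal inequality for $\tau$-mixing sums supplying the $(2+\delta)$-moment bound and a covariance inequality supplying summability of the autocovariances (the paper packages this as Theorem~\ref{thm:clt}). The only substantive difference is that you also treat the bounded-$N$ case separately, whereas the paper's panel CLT works in the joint $N,T\to\infty$ regime (its Lyapunov bound reduces to $N^{-\delta/2}\to0$), so that extra case is not needed for the theorem as intended.
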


Theorem~\ref{thm:pooled_clt} applies to panel data consisting of non-Gaussian, heavy-tailed, and persistent time series under the large $N$ and $T$ large sample approximation. In contrast to the fixed $T$ approximations, Theorem~\ref{thm:pooled_clt} leads to more precise inference, e.g., the standard errors and the length of confidence intervals would scale at $O(1/\sqrt{NT})$ rate instead of $O(1/\sqrt{N})$ that we typically encounter for fixed $T$ approximations.

\smallskip

To estimate $\Xi_G$, we can use the following pooled HAC estimator
\begin{equation*}
	\hat\Xi_G = \frac{1}{N}\sum_{i=1}^N\sum_{|k|<T}K\left(\frac{k}{M_T}\right)\hat\Gamma_{k,i},
\end{equation*}
where $\hat{\Gamma}_{k,i} = \hat\Theta_G\left(\frac{1}{T}\sum_{t=1}^{T-k}\hat u_{i,t}\hat u_{i,t+k} x_{i,t}x_{i,t+k}^\top\right)\hat\Theta_G^\top$, $\hat u_{i,t}$ is the sg-LASSO residual, and $\hat\Gamma_{-k,i}=\hat{\Gamma}_{k,i}^\top$. The kernel function $K:\R\to[-1,1]$ with $K(0)=1$ is puts less weight on more distant noisy covariances, while $M_T\uparrow\infty$ is a bandwidth (or lag truncation) parameter; see \cite{babii2020inference} for more details as well as formal results on the validity of HAC-based inference using sg-LASSO residuals.

\section{Monte Carlo simulations \label{sec:mc}}

In this section, we assess the finite sample performance of the Granger causality tests for high-dimensional pooled panel data MIDAS regressions. A first subsection describes the design, followed by a second reporting the findings.

\subsection{Design}

We simulate the data from the following DGP: 
\begin{equation}
	\label{eq:DGP_MC}
	y_{i,t} = \alpha + \rho y_{i,t-1} + \sum_{k=1}^K\frac{1}{m}\sum_{j=1}^m \omega((j-1)/m;\beta_k)x_{i,t-(j-1)/m,k} + u_{i,t},
\end{equation} 
where $i\in[N]$, $t\in[T]$, $\alpha$ is the the common intercept, $\frac{1}{m}\sum_{j=1}^m \omega((j-1)/m;\beta_k)$ is the weight function for $k$-th high-frequency covariate and the error term is $u_{i,t} \sim_{i.i.d.}N(0,4)$. The DGP corresponds to the target variable of interest $y_{i,t}$ driven by one autoregressive lag augmented with high-frequency series. The DGP is therefore a pooled MIDAS panel data model.

\smallskip

We set $\rho=0.15$ and take the first high-frequency regressor, $k$ = 1, as relevant, i.e.\ the first regressor Granger causes the response variable. We are interested in quarterly/monthly data, and use four quarters of data for the high-frequency regressors so that $m$ = 12. The high-frequency regressors are generated as  $K$ i.i.d.\ realizations of univariate autoregressive (AR) processes $x_h = \rho x_{h-1}+ \varepsilon_h,$ where $\rho=0.7$ and  $\varepsilon_h\sim_{i.i.d.}N(0,1)$, where $h$ denotes the high-frequency sampling. For the DGP we rely on a commonly used weighting scheme in the MIDAS literature, namely the weights $\omega(s;\beta_k)$ for the only relevant high-frequency regressor $k=1$  determined by the beta density, $\mathrm{Beta}(3,3)$; see \cite{ghysels2007midas} or \cite{ghysels2019estimating}, for further details. The empirical estimation involves MIDAS regressions with Legendre polynomials of degree $L=3$. Lastly, we draw the intercepts $\alpha \sim \text{Uniform}(-4,4).$ Throughout the experiment, we fix the sample sizes to $T$ = 50 and $N$ = 30.

\smallskip

We compare the empirical size and power of the Granger causality test under different structures placed on the regression models. 

First, we compare sg-LASSO-MIDAS with LASSO-UMIDAS pooled panel data models. The former exploits the group structure of covariates by applying the sg-LASSO penalty function and a flexible way to model lags for each covariate using the MIDAS weight functions parametrized by low-dimensional coefficients. The latter pertains to the unstructured LASSO estimator together with the UMIDAS scheme. Introduced by \cite{foroni2015unrestricted}, UMIDAS consists of estimating a regression coefficient for each high-frequency lag separately, and therefore the weight function for each covariate is
\begin{equation}
	\label{eq:UMIDAS}
	\sum_{j=1}^m \omega((j-1)/m;\beta_k)x_{i,t-(j-1)/m,k} = \sum_{j=1}^m b_{j,k} x_{i,t-(j-1)/m,k}
\end{equation}
where $b_{j,k}$ is a regression coefficient associated with each high-frequency lag. We estimate regression coefficients by applying the standard unstructured LASSO estimator; hence we call the model LASSO-UMIDAS.

\smallskip

Second, we compare the pooled panel with individual time series regressions, for sg-LASSO-MIDAS and LASSO-UMIDAS, where the former exploits the benefits of the panel structure and the latter does not. In this case, we take the first sample $i=1$ to compute empirical size and power of the Granger test for the individual regression models. \cite{babii2020inference} propose tests of Granger causality in univariate regularized regressions and high-dimensional data.
\begin{table}[htbp]
	\centering
	{\scriptsize
		\begin{tabular}{r cc cc c cc cc}
			&\multicolumn{9}{c}{Pooled Panel}\\
			&\multicolumn{4}{c}{\textit{Parzen kernel}}&&\multicolumn{4}{c}{\textit{Quadratic spectral kernel}}\\\hline
			$M_T\backslash a$& 0 & 1/5 & 1/4 & 1/3 &  & 0 & 1/5 & 1/4 & 1/3 \\ 
			&\multicolumn{9}{c}{\underline{sg-LASSO-MIDAS}}\\
			10 & 0.051 & 0.835 & 0.959 & 0.999 &  & 0.056  & 0.841 & 0.963 & 0.998 \\ 
			20 & 0.049 & 0.822 & 0.954 & 0.999 &  & 0.047 & 0.828 & 0.957 & 0.998 \\ 
			30 & 0.046 & 0.803 & 0.953 & 0.999 &  & 0.047 & 0.823 & 0.956 & 0.998 \\ 
			&\multicolumn{9}{c}{\underline{LASSO-UMIDAS}}\\
			10 & 0.039 & 0.551 & 0.788 & 0.978  &  & 0.042 & 0.549 & 0.797 & 0.979 \\ 
			20 & 0.030 & 0.514 & 0.762 & 0.970 &  & 0.033 & 0.535 & 0.780 & 0.977 \\ 
			30 & 0.021 & 0.494 & 0.735 & 0.964 &  & 0.025 & 0.514 & 0.758 & 0.972 \\ 
			\hline
			&&&&&&&&&\\
			&\multicolumn{9}{c}{Individual Regressions}\\
			&\multicolumn{4}{c}{\textit{Parzen kernel}}&&\multicolumn{4}{c}{\textit{Quadratic spectral kernel}}\\\hline
			$M_T\backslash a$& 0 & 1/5 & 1/4 & 1/3 &  & 0 & 1/5 & 1/4 & 1/3 \\ 
			&\multicolumn{9}{c}{\underline{sg-LASSO-MIDAS}}\\
			10 & 0.090 & 0.356 & 0.406 & 0.548 &  & 0.094 & 0.349 & 0.356 & 0.486 \\ 
			20 & 0.097 & 0.345 & 0.406 & 0.548 &  & 0.094 & 0.350 & 0.360 & 0.492 \\ 
			30 & 0.092 & 0.345 & 0.403 & 0.547 &  & 0.093 & 0.356 & 0.379 & 0.524 \\ 
			&\multicolumn{9}{c}{\underline{LASSO UMIDAS}}\\
			10 & 0.110 & 0.201 & 0.228 & 0.362 &  & 0.107 & 0.210 & 0.236 & 0.378 \\ 
			20 & 0.111 & 0.240 & 0.272 & 0.406 &  & 0.108 & 0.212 & 0.206 & 0.388 \\
			30 & 0.107 & 0.245 & 0.370 & 0.494 &  & 0.105 & 0.204 & 0.206 & 0.386 \\ 
			\hline
		\end{tabular}
		\caption{\small HAC-based inference simulation results --- We report results for a set of bandwidth parameters, denoted $M_T,$ and two kernel functions. \label{tab:infer_simul_gaussian}} %The data is generated from Gaussian distribution.
	} 
\end{table}
\subsection{Simulation results}

In Table \ref{tab:infer_simul_gaussian}, we report the empirical rejection frequency (ERF) for the Granger causality test based on the HAC estimator with two different kernel functions, Parzen and Quadratic spectral, and two different estimation strategies, sg-LASSO-MIDAS and LASSO-UMIDAS. We test whether the first high-frequency covariate Granger causes the low-frequency series, which corresponds to the DGP potential causal pattern.  We report results for a set of bandwidth parameters, denoted $M_T$ = 10, 20 and 30. The reported results are based on 2000 Monte Carlo replications.

\smallskip

To assess the performance we scale the Beta density function by multiplying it with a constant $a\in\{0,1/5,1/4,1/3\}$, i.e.\ the weight function for the relevant covariate is:
\begin{equation*}
	a\frac{1}{m}\sum_{j=1}^m \omega((j-1)/m;\beta_k)
\end{equation*}
For $a=0$, the ERF shows the empirical size of the test for the nominal level of 5\%, while $a\in\{1/5,1/4,1/3\}$ the ERF shows the empirical power of the Granger causality test. For the larger scaling constant $a,$ the alternatives are separated further away from the null hypothesis and the Granger causality test is expected to perform better. 

\smallskip

The results reported in Table~\ref{tab:infer_simul_gaussian} show that the Granger causality test based on the sg-LASSO-MIDAS has empirical size close to the nominal level of $5\%$. In contrast, the LASSO-UMIDAS leads to undersized Granger causality tests with size distortions around 0.01. The Granger causality test based on the sg-LASSO-MIDAS has also better empirical power against each of the alternative hypotheses $a\in\{1/5,1/4,1/3\}$. Additionally, it approaches 1 much faster as opposed to the LASSO-UMIDAS.

\smallskip

The results for individual regressions reveal worse performance compared to pooled panel data regressions, hence showing the usefulness of pooling the data. The empirical size shows considerable size distortions of around 0.05. Tests for individual regressions have worse power compared to the pooled panel data cases. Nonetheless, similar to the pooled panel data cases, the sg-LASSO-MIDAS estimation method seems to have better empirical power when comparing to LASSO-UMIDAS.

\smallskip

Overall, the results of the Monte Carlo experiments indicate that the structured regularization leads to better Granger causality tests in small samples and that pooling individual series improves the results even further. 

\section{Do analysts leave money on the table? \label{sec:emp}}

In this section we revisit a topic raised by \cite{ball2018automated} and \cite{carabias2018real}. Their empirical findings suggest that analysts tend to focus on their firm/industry when making earnings predictions while not fully taking into account the impact of macroeconomic events. While their findings were suggestive, there was no formal testing in a data-rich environment. The theory established in the previous sections allows us to do so.

\smallskip

More specifically, we consider the earnings of 210 US firms using a set of predictors sampled at mixed frequencies --- quarterly, monthly and daily series. We use 26 predictors (and their lags), including traditional macro and financial series as well as non-standard series generated by textual analysis of financial news. 

\subsection{Data description}

The full sample consists of observations between the \(1^{st}\) of January, 2000 and the \(30^{th}\) of June, 2017. Due to the lagged dependent variables in the models, our effective sample starts at the third fiscal quarter of 2000. We collected data from CRSP and I/B/E/S  to compute quarterly earnings and firm-specific financial covariates; RavenPack was used to compute daily firm-level textual-analysis-based data; real-time monthly macroeconomic series are from the ALFRED; FRED is used to compute daily financial markets data and, lastly, monthly news attention series extracted from the {\it Wall Street Journal} articles were retrieved from \cite{bybee2019structure}.\footnote{The dataset is publicly available at \href{http://www.structureofnews.com/}{http://www.structureofnews.com/}.} Table \ref{tab:data} provides a list of the variables used in our analysis, whereas Online Appendix Section \ref{appsec:data-description} covers a detailed description of the RavenPack data. Finally, the list of all firms we consider in our analysis appears in Online Appendix Table \ref{tab:list_firms}. Table \ref{tab:data} has six panels, namely three panels of firm-level series: A1 -- describes earnings data, B1 -- describes daily firm-level stock market data, and C1 --  describes daily firm-level sentiment data series. The remaining three panels are: A2 -- describes real-time monthly macro series, B2 -- describes daily financial markets data, and C2 -- describes monthly news attention series. In the models we include 365 daily lags, 12 monthly lags and 4 quarterly lags respectively.

\subsection{Granger causality tests}

Whether analysts leave money on the table amounts to testing whether forecast errors in earnings can be predicted by current information variables. Hence, this amounts to performing something akin to the Granger causality test. In our empirical application we are dealing with a panel, and it is important to exploit the multivariate data structure to perform such tests.

\smallskip

We analyze the difference between realized earnings and analysts' predictions, i.e., the response variable $y_{i,t+1}$ is computed by taking the difference between realized earnings, denoted $e_{i,t+1}$, and the median of analysts' predictions for the quarter $t+1$, denoted $f_{i,t+1|t}$,
\begin{equation*}
	y_{i,t+1} = e_{i,t+1} - f_{i,t+1|t}. 
\end{equation*}
We then fit the following pooled panel data MIDAS model using sg-LASSO estimator:
\begin{equation*}\label{eq:ppgc}
	y_{i,t+1} = \alpha + \rho y_{i,t} + \sum_{k=1}^{K}\psi(L^{1/m};\beta_k)x_{i,t,k} + u_{i,t+1}.
\end{equation*}
We test which factors Granger cause future errors of earnings forecasts made by the analysts. In the sg-LASSO, groups are defined as all lags of a single  covariate $k;$ Legendre polynomials up to degree three are applied to all weight functions $\psi(L^{1/m};\beta_k).$ We use 10-fold cross-validation to tune both $\lambda$ and $\gamma$, where we define folds as adjacent blocks over the time series dimension to take into account the time series dependence. Similarly, we estimate the precision matrix using nodewise LASSO regressions selecting the tuning parameter in a similar vein. The results are reported  in Table \ref{tab:gcpanel}.

\smallskip

In Panel (A) of Table \ref{tab:gcpanel} we find that the AR(1) lag is significant, leading us to conclude that the prediction errors made by the analysts are persistent. The autoregressive coefficient is significant throughout all specifications of the models, including in a simple pooled AR(1) model. In the latter case, the AR(1) coefficient is estimated to be 0.147. 

\smallskip

Panel (B) of Table \ref{tab:gcpanel} reports that beyond the AR(1) we find that the highly significant covariates are TED rate, CPI inflation and real GDP growth. These results support previous findings that analysts tend to miss information associated with macroeconomic conditions --- including real GDP growth and the TED spread, which is an indicator of measure credit risk. The latter is rather surprising, as it indicates that analysts tend to miss out on credit risk information at the macro level in their earnings forecasts. Lastly, the term spread (10-year less 3-month treasury yield), often viewed as a business cycle indicator, is also significant at the 10\% level.%\footnote{Results when applying LASSO estimator together with Legendre polynomials appear in Table \ref{gcpanel_lassomidas}. Notably, covariates that are highly correlated, term spread and short term spread, are both significant, while less significance for macro series, real GDP and inflation, is reported. }

\smallskip

Finally, in Panel (C) of Table \ref{tab:gcpanel} we report results based on the unstructured LASSO applying UMIDAS for the lag polynomials of each covariate. The findings reveal similar results for the TED rate, but notably miss real GDP and CPI inflation as significant covariates.
%The former result shows the important of including the lags of each covariate, while the latter shows the importance of applying strucutre approach to model lag polynomials.%\footnote{ }

\smallskip

In Table \ref{tab:gcpanel_largesmalldis} we show results based on a different way of pooling analysts' prediction errors $y_{i,t+1}$. We split the data into two parts based on how large the average disagreement among analysts is. For each firm, we compute the forecast disagreement as the difference between 95\% and 5\% percentile of the empirical forecast distribution and take the average over the sample. We sort from high to low disagreement and split the sample of firms into two subsamples of equal size. The results show that macro variables which are significant for the full sample are also significant for the large disagreement subsample. On the other hand, little significance is reported for the low disagreement subsample. In this case, only the AR(1) lag and stock returns are significant at the 5\% significance level.  

\smallskip

Lastly, in Figure \ref{fig:granger_causality_ind} we plot the ratio of firms for which we find Granger causality based on individual regressions versus panel models. In Panel (a) we plot the ratios for sg-LASSO estimator using MIDAS weighting scheme while in Panel  (b) we plot the ratios for the LASSO estimator with UMIDAS scheme. The plot shows ratios for each covariate representing the fraction with respect to sg-LASSO (Panel (a)) or LASSO with UMIDAS (Panel (b)) each covariate is significant by running individual regressions. For example, the AR(1) lag is significant for around 30\% (0.3) of firms when running individual sg-LASSO-MIDAS regressions. Some covariates that are not significant in pooled panels are significant for some firms; therefore, we show results for all covariates, including those that are not significant in pooled panel cases. We also show how the ratios differ for low (dark-gray color) versus  high disagreement (light-gray color) firms. They represent whether a specific firm we run an individual regression for is in the high-disagreement versus low-disagreement subsample. 
Interestingly, the largest ratios are for AR(1), TED rate, Real GDP, CPI inflation and term spread in the case of sg-LASSO-MIDAS. Moreover, the portion of firms in the high disagreement subsample seem to have the largest ratios. In the case of LASSO-UMIDAS, the ratios show a less clear pattern, with only the AR(1) and TED rate covariates significant for a larger number of firms.

\section{Conclusions \label{sec:conclusion}}
This paper introduced a new class of high-dimensional panel data regression models with dictionaries and sg-LASSO regularization. This type of regularization is an especially attractive choice for predictive panel data regressions, where the low- and/or the high-frequency lags define a clear group structure. The estimator nests the LASSO and the group LASSO estimators as special cases. Our theoretical treatment allows for heavy-tailed data frequently encountered in financial time series. To that end, we obtain a new panel data concentration inequality of the Fuk-Nagaev type for $\tau$-mixing processes, which allows us to establish oracle inequalities that are used subsequently to develop the debiased HAC inference for the panel data sg-LASSO estimator.

\smallskip

Using the theory of HAC-based inference for pooled panel data regressions developed in our paper, our empirical analysis revisits a topic raised by earlier literature that analysts tend to focus on firm and/or industry information when forming earnings forecasts, while not fully taking into account the macroeconomic data.  Our results suggest that indeed analysts tend to miss on macro information, i.e., macro variables turn out to be significant in pooled panel regression models.

\newpage
\begin{landscape}
	\begin{table}%[!htbp]
		\centering
		{\scriptsize
			\begin{tabular}{r |l ccc}
				& id & Frequency & Source  & T-code \\ 
				\hline
				\hline
				\multicolumn{5}{c}{\underline{Firm-level series}} \\
				\multicolumn{5}{c}{Panel A1.} \\
				- & Earnings & quarterly & CRSP \& I/B/E/S & 1 \\
				- & Earnings consensus forecasts & quarterly & CRSP \& I/B/E/S & 1 \\
				- & Other earnings/earnings forecast implied series & quarterly & CRSP \& I/B/E/S & 1 \\
				\multicolumn{5}{c}{Panel B1. } \\ 
				1 & Stock returns & daily & CRSP & 1 \\ 
				2 & Realized variance measure & daily & CRSP/computations & 1 \\ 
				\multicolumn{5}{c}{Panel C1. } \\ 
				3 & Event Sentiment Score (ESS) & daily & RavenPack & 1 \\
				4 & Aggregate Event Sentiment (AES) & daily & RavenPack & 1 \\
				5 & Aggregate Event Volume (AEV) & daily & RavenPack & 1 \\
				6 & Composite Sentiment Score (CSS) & daily & RavenPack & 1 \\
				7 & News Impact Projections (NIP) & daily & RavenPack & 1 \\
				\multicolumn{5}{c}{\underline{Other series}} \\
				\multicolumn{5}{c}{Panel A2. } \\
				8 & Industrial Production Index & monthly & ALFRED & 3 \\
				9 & CPI inflation & monthly & ALFRED & 4 \\
				10 & Unemployment rate & monthly & ALFRED & 1 \\
				11 & Real GDP & quarterly & ALFRED & 2 \\
				\multicolumn{5}{c}{Panel B2. } \\
				12 & Crude Oil Prices & daily & FRED & 4 \\ 
				13 & S\&P 500 & daily & CRSP & 3 \\ 
				14 & VIX Volatility Index & daily & FRED & 1 \\ 
				15 & Moodys Aaa less 10-Year Treasury & daily & FRED & 1 \\ 
				16 & Moodys Baa less 10-Year Treasury & daily & FRED & 1 \\ 
				17 & Moodys Baa less Aaa (corporate yield spread) & daily & FRED & 1 \\ 
				18 & 10-Year Treasury minus 3-Month Treasury (term spread) & daily & FRED & 1 \\ 
				19 & 3-Month Treasury minus Effective Federal funds rate (short-term spread) & daily & FRED & 1 \\ 
				20 & TED rate & daily & FRED & 1 \\ 
				\multicolumn{5}{c}{Panel C2.} \\ 
				21 & Earnings & monthly & \cite{bybee2019structure} & 1 \\
				22 & Earnings forecasts & monthly & \cite{bybee2019structure} & 1 \\
				23 & Earnings losses & monthly & \cite{bybee2019structure} & 1 \\
				24 & Recession & monthly & \cite{bybee2019structure} & 1 \\
				25 & Revenue growth & monthly & \cite{bybee2019structure}& 1 \\
				26 & Revised estimate & monthly & \cite{bybee2019structure} & 1 \\
				\hline 
			\end{tabular}
		}
		\caption{\footnotesize Firm-level data description table --- The \textit{id} column gives mnemonics according to data source, which is given in the second column \textit{Source}. The column \textit{frequency} states the sampling frequency of the variable. The column \textit{T-code} denotes the data transformations applied to a time series, which are (1) not transformed, (2) $100[(x_t/x_{t-1})^4-1]$, (3) $\Delta$ log ($x_t$), (4) $\Delta^2$ log ($x_t$). The block of firm-level series contains three panels: A1 -- describes earnings data, B1 -- describes daily firm-level stock market data, and C1 -- describes daily firm-level sentiment data series. The block labeled "other series" also has three panels: A2 -- describes real-time monthly macro series, B2 -- describes daily financial markets data, and C2 -- describes monthly news attention series. In the models we include 365 daily lags, 12 monthly lags and 4 quarterly lags respectively. \label{tab:data}}
	\end{table}
\end{landscape}
\begin{table}
	\centering
	\begin{tabular}{r  rrrrrrr}
		Variable $\backslash M_T$ & 10 & 20 & 30 &  & 10 & 20 & 30  \\ \hline
		&\multicolumn{3}{c}{\underline{Quadratic Spectral}}&&\multicolumn{3}{c}{\underline{Parzen}} \\
		&\multicolumn{7}{c}{\underline{Panel (A) -- AR(1)}} \\
		AR(1) & 0.001 & 0.000 & 0.000 & & 0.002 & 0.001 & 0.001 \\
		&&&&&&& \\	
		&\multicolumn{7}{c}{\underline{Panel (B) -- sg-LASSO}} \\
		&\multicolumn{7}{c}{Significant variables at 5\% or less} \\
		AR(1) & 0.001 & 0.000 & 0.000 &  & 0.002 & 0.001 & 0.000 \\ 
		TED rate & 0.001 & 0.001 & 0.000 &  & 0.003 & 0.001 & 0.001 \\ 
		CPI inflation & 0.003 & 0.001 & 0.001 &  & 0.013 & 0.003 & 0.001 \\ 
		Real GDP & 0.028 & 0.003 & 0.001 &  & 0.035 & 0.021 & 0.006 \\ 
		&\multicolumn{7}{c}{Significant variables at 10\% level} \\
		Term spread & 0.012 & 0.014 & 0.023 &  & 0.053 & 0.016 & 0.015 \\ 
		&&&&&&& \\	
		&&&&&&& \\	
		&\multicolumn{7}{c}{\underline{Panel (C) -- LASSO (significant for sg-LASSO)}} \\
		&\multicolumn{7}{c}{Significant variables at 5\% or less} \\
		AR(1) & 0.001 & 0.000 & 0.000 &  & 0.002 & 0.001 & 0.000 \\ 
		TED rate & 0.000 & 0.000 & 0.000 &  & 0.000 & 0.000 & 0.000 \\ 
		CPI inflation & 0.677 & 0.390 & 0.461 &  & 0.651 & 0.724 & 0.576 \\ 
		Real GDP & 0.341 & 0.247 & 0.094 &  & 0.339 & 0.328 & 0.270 \\ 
		&\multicolumn{7}{c}{Significant variables at 10\% level} \\
		Term spread & 0.273 & 0.060 & 0.022 &  & 0.235 & 0.387 & 0.365 \\ 
		&&&&&&& \\	 
		&\multicolumn{7}{c}{\underline{LASSO (significant only for LASSO)}} \\
		&\multicolumn{7}{c}{Significant variables at 5\% or less} \\
		AAA less 10 year & 0.009 & 0.001 & 0.001 &  & 0.015 & 0.014 & 0.007 \\ 
		BAA less 10 year & 0.000 & 0.000 & 0.000 &  & 0.000 & 0.000 & 0.000 \\ 
		\hline
	\end{tabular}
	\caption{Significance testing results --- We report p-values for the AR(1) in Panel (A) and for the sg-LASSO using the MIDAS scheme with Legendre polynomials in Panel (B) displaying series significant at the 5\% or 10\% significance level. We also report results for the standard LASSO estimator together with the UMIDAS scheme in Panel (C). The results are reported for a range of bandwidth parameters ($M_T$ = 10, 20 and 30) and two kernel functions (Quadratic Spectral and Parzen).}
	\label{tab:gcpanel}
\end{table}

\begin{table}
	\centering
	\begin{tabular}{r  rrrrrrr}
		Variable $\backslash M_T$ & 10 & 20 & 30 &  & 10 & 20 & 30  \\ \hline
		&\multicolumn{3}{c}{\underline{Quadratic Spectral}}&&\multicolumn{3}{c}{\underline{Parzen}} \\
		&\multicolumn{7}{c}{\underline{Large disagreement}} \\
		&\multicolumn{7}{c}{Significant variables at 5\% or less} \\
		AR(1) & 0.002 & 0.001 & 0.000 &  & 0.004 & 0.001 & 0.001 \\ 
		Term spread & 0.029 & 0.023 & 0.016 &  & 0.085 & 0.036 & 0.026 \\ 
		TED rate & 0.002 & 0.001 & 0.001 &  & 0.016 & 0.002 & 0.001 \\ 
		CPI inflation & 0.016 & 0.009 & 0.007 &  & 0.040 & 0.018 & 0.011 \\ 
		&\multicolumn{7}{c}{Significant variables at 10\% level} \\
		Real GDP & 0.098 & 0.005 & 0.000 &  & 0.098 & 0.082 & 0.021 \\ 
		&&&&&&& \\	
		&\multicolumn{7}{c}{\underline{Small disagreement}} \\
		&\multicolumn{7}{c}{Significant variables at 5\% or less} \\
		AR(1) & 0.000 & 0.000 & 0.000 &  & 0.000 & 0.000 & 0.000 \\ 
		Stock returns & 0.008 & 0.004 & 0.003 &  & 0.015 & 0.008 & 0.006 \\ 
		&\multicolumn{7}{c}{Significant variables at 10\% level} \\
		Unemployment rate & 0.060 & 0.043 & 0.045 &  & 0.060 & 0.056 & 0.048 \\ 
		\hline
	\end{tabular}
	\caption{Significance testing results --- We report p-values for the AR(1) and for the sg-LASSO-MIDAS models, displaying series significant at the 5\% or 10\% significance level. The results are reported for a range of bandwidth parameters and two kernel functions. We pool the response based on large versus small disagreement, which we measure as the average (over time series) of the difference between 95\% and 5\% percentile of the empirical forecast distribution of the analysts.}
	\label{tab:gcpanel_largesmalldis}
\end{table}

\begin{figure}[htp!]
	\centering
	\begin{subfigure}{0.49\textwidth} 
		\includegraphics[width=\textwidth]{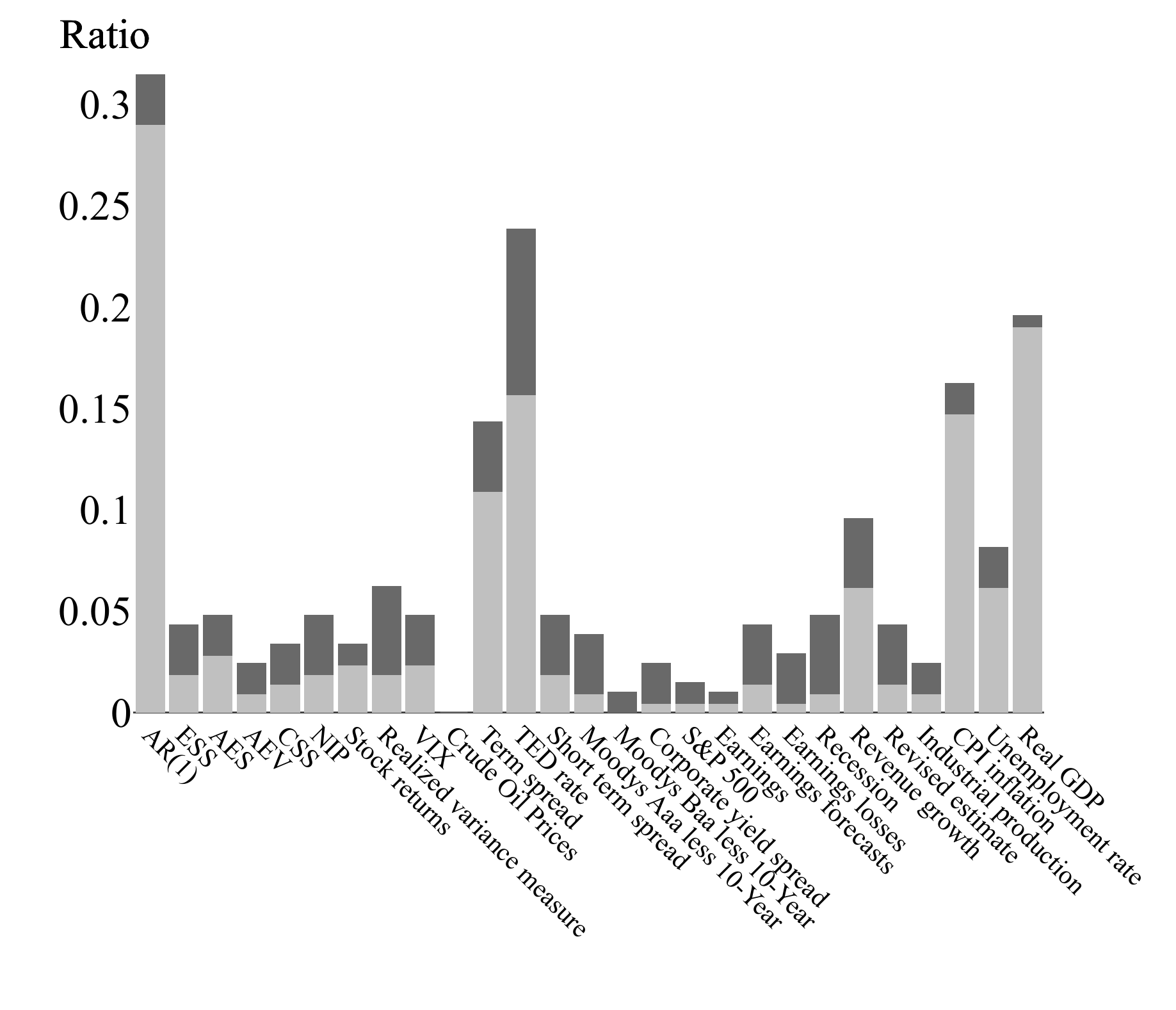}
		\caption{sg-LASSO-MIDAS} 
	\end{subfigure}
	\begin{subfigure}{0.49\textwidth} 
		\includegraphics[width=\textwidth]{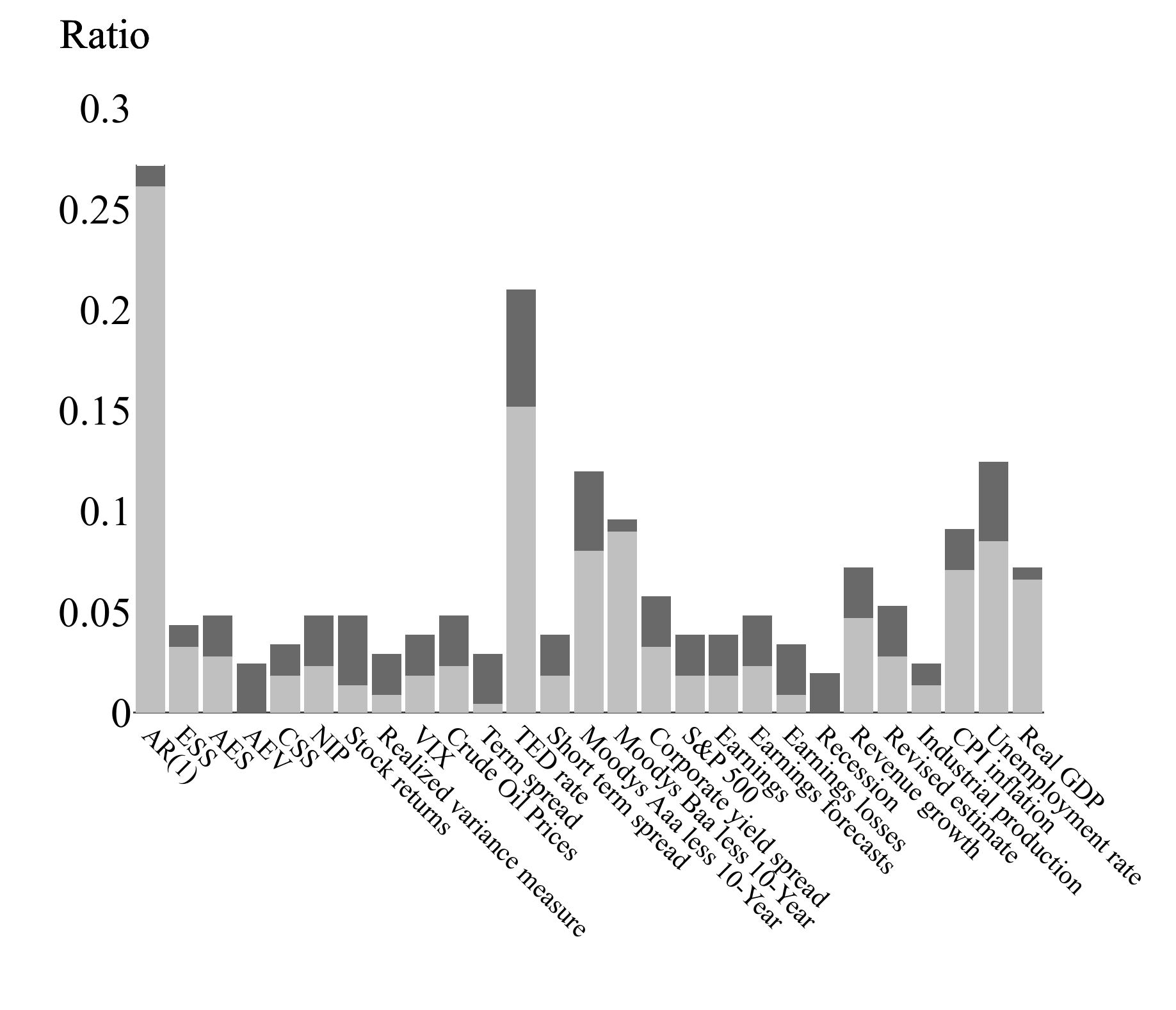}
		\caption{LASSO-UMIDAS} 
	\end{subfigure}
	\caption{Individual regression-based Granger causality tests. In Panel (a) we plot the ratios based on sg-LASSO estimator and MIDAS weighting scheme with Legendre polynomials, while in Panel (b) we plot for the ratios for the standard LASSO estimator with UMIDAS weighting scheme. The lighter-gray color shows the ratio for firms with high disagreement, while the dark-gray color shows the ratio for firms with low disagreement; see Table \ref{tab:gcpanel_largesmalldis}. All results are based on the 5\% significance level. \label{fig:granger_causality_ind}}
\end{figure}

\clearpage

\bibliographystyle{econometrica}
\bibliography{midas_ml}

\clearpage
\appendix
\setcounter{section}{0}
\setcounter{equation}{0}
\setcounter{table}{0}
\setcounter{figure}{0}
\renewcommand{\theequation}{A.\arabic{equation}}
\renewcommand\thetable{A.\arabic{table}}
\renewcommand\thefigure{A.\arabic{figure}}
\renewcommand\thesection{A.\arabic{section}}
\renewcommand\thetheorem{A.\arabic{theorem}}

\begin{center}
	{\LARGE\textbf{APPENDIX}}	
\end{center}
\appendix
{\small
\section{Proofs\label{appsec:proofs}}
\begin{proof}[Proof of Theorem~\ref{thm:pooled_panel}]
	By Fermat's rule, the pooled sg-LASSO satisfies
	\begin{equation*}
		\mathbf{Z}^\top(\mathbf{Z}\hat\rho - \mathbf{y})/NT + \lambda z^* = 0_{p+1}
	\end{equation*}
	for some $z^*\in\partial\Omega(\hat\rho)$, where $\partial\Omega(\hat\rho)$ is the subdifferential of $b\mapsto\Omega(b)$ at $\hat\rho$. Taking the inner product with $\rho - \hat\rho$
	\begin{equation*}
	\begin{aligned}
		\langle \mathbf{Z}^\top(\mathbf{y} - \mathbf{Z}\hat\rho),\rho-\hat\rho\rangle_{NT} & = \lambda \langle z^*,\rho-\hat\rho\rangle\\
		& \leq \lambda\left\{\Omega(\rho) - \Omega(\hat\rho)\right\},
	\end{aligned}
	\end{equation*}
	where the last line follows from the definition of the subdifferential. Since $\mathbf{y}=\mathbf{m}+\mathbf{u}$, the inequality can be rewritten as
	\begin{equation*}
	\begin{aligned}
		\|\mathbf{Z}(\hat\rho - \rho)\|^2_{NT}-\lambda\left\{\Omega(\rho) - \Omega(\hat\rho)\right\} & \leq \langle\mathbf{Z}^\top(\mathbf{Z}\rho - \mathbf{y}),\rho-\hat\rho\rangle_{NT} \\
		& =  \langle\mathbf{Z}^\top\mathbf{u},\hat\rho-\rho\rangle_{NT} + \langle\mathbf{m} - \mathbf{Z}\rho,\mathbf{Z}(\hat\rho-\rho)\rangle_{NT}.
	\end{aligned}
	\end{equation*}
	By the dual norm inequality $\langle\mathbf{Z}^\top\mathbf{u},\hat\rho-\rho\rangle_{NT} \leq \Omega^*(\mathbf{Z}^\top\mathbf{u}/NT) \Omega(\hat\rho-\rho)$, where $\Omega^*$ is the dual norm of $\Omega$. Then by \cite{babii2020machine}, Lemma A.2.1
	\begin{equation*}
	\begin{aligned}
		\Omega^*(\mathbf{Z}^\top\mathbf{u}/NT) & \leq \gamma|\mathbf{Z}^\top\mathbf{u}/NT|_\infty + (1-\gamma)\max_{G\in\mathcal{G}}|\mathbf{Z}^\top_G\mathbf{u}/NT|_{2} \\
		& \leq \max_{G\in\mathcal{G}}\sqrt{|G|}|\mathbf{Z}^\top\mathbf{u}/NT|_\infty \\
		& \leq \lambda/c, 
	\end{aligned}
	\end{equation*}
	where the last line follows from Theorem~\ref{thm:fn_long} with probability at least $1-\delta$ and Assumption~\ref{as:tuning} for some $c>1$. Therefore,
	\begin{equation}\label{eq:twopoint2}
		\|\mathbf{Z}\Delta\|^2_{NT} - \lambda\left\{\Omega(\rho) - \Omega(\hat\rho)\right\} \leq \frac{\lambda}{c}\Omega(\Delta) + \|\mathbf{m} - \mathbf{Z}\rho\|_{NT}\|\mathbf{Z}\Delta\|_{NT} \text{ with } \Delta = \hat\rho - \rho.
	\end{equation}
Note that the sg-LASSO penalty function can be decomposed as a sum of two semi-norms $\Omega(r) = \Omega_0(r) + \Omega_1(r),\forall r\in\R^{1+p}$ with
	\begin{equation*}
	\Omega_0(r) = \gamma|r_{S_0}|_1 + (1-\gamma)\sum_{G\in\mathcal{G}_0}|r_G|_2\quad\text{and}\quad \Omega_1(r) = \gamma|r_{S_0^c}|_1 + (1-\gamma)\sum_{G\in\mathcal{G}_0^c}|r_G|_2.
	\end{equation*}
	Note also that $\Omega_1(\rho) = 0$ and $\Omega_1(\hat\rho) = \Omega_1(\hat\rho - \rho)$. Then
	\begin{equation}\label{eq:cone0}
	\begin{aligned}
	\Omega(\rho) - \Omega(\hat\rho) & = \Omega_0(\rho) - \Omega_0(\hat\rho) - \Omega_1(\hat\rho) \\
	& \leq \Omega_0(\hat\rho - \rho) - \Omega_1(\hat\rho - \rho)  = \Omega_0(\Delta) - \Omega_1(\Delta) .
	\end{aligned}
	\end{equation}
Suppose that $\|\mathbf{m} - \mathbf{Z}\rho\|_{NT} \leq \frac{1}{2}\|\mathbf{Z}\Delta\|_{NT}$. Then it follows from equations (\ref{eq:twopoint2}) and (\ref{eq:cone0}) that
	\begin{equation*}
	\begin{aligned}
		\|\mathbf{Z}\Delta\|_{NT}^2 & \leq 2\frac{\lambda}{c}\Omega(\Delta) + 2\lambda\left\{\Omega_0(\Delta) - \Omega_1(\Delta)\right\} \\
		& = 2\frac{\lambda}{c}\left\{\Omega_1(\Delta) + \Omega_0(\Delta) \right\} + 2\lambda\left\{\Omega_0(\Delta) - \Omega_1(\Delta)\right\} \\
	\end{aligned}
	\end{equation*}
	Since the left side of this equation is greater or equal to zero, this shows that
	\begin{equation}\label{eq:cone_condition}
	\Omega_1(\Delta) \leq \frac{c+1}{c-1}\Omega_0(\Delta).
	\end{equation}
	Put $\Sigma_{N,T}=\frac{1}{NT}\sum_{i=1}^N\sum_{t=1}^T\E[z_{i,t}z_{i,t}^\top]$. Therefore,
	\begin{equation*}
	\begin{aligned}
	\Omega(\Delta) & \leq \frac{2c}{c-1}\Omega_0(\Delta) \leq \frac{2c}{c-1}\sqrt{s|\Delta|^2_2} \leq \frac{2c}{c-1}\sqrt{\frac{s}{\gamma_{\min}}|\Sigma_{N,T}^{1/2}\Delta|^2_2} \\
	& = \frac{2c}{c-1}\sqrt{\frac{s}{\gamma_{\min}}\left\{\|\mathbf{Z}\Delta\|_{NT}^2 + \Delta^\top(\hat\Sigma - \Sigma_{N,T})\Delta \right\} }\\
	& \leq \frac{2c}{c-1}\sqrt{\frac{s}{\gamma_{\min}}\left\{\|\mathbf{Z}\Delta\|_{NT}^2 + \Omega(\Delta)\Omega^*((\hat\Sigma - \Sigma_{N,T})\Delta) \right\}} \\
	& \leq  \frac{2c}{c-1}\sqrt{\frac{s}{\gamma_{\min}}\left\{2(1+c^{-1})\lambda\Omega(\Delta) + \Omega^2(\Delta)G^*|\mathrm{vech}(\hat\Sigma - \Sigma_{N,T})|_\infty \right\}},
	\end{aligned}
	\end{equation*}
	where we set $G^*=\max_{G\in\mathcal{G}}\sqrt{|G|}$ and use H\"{o}lder's inequality, inequalities in equations (\ref{eq:twopoint2}) and (\ref{eq:cone_condition}), Assumption~\ref{as:covariance}, $\hat \Sigma = \mathbf{Z}^\top\mathbf{Z}/NT$, and \cite{babii2020machine}, Lemma A.2.1. This shows that with probability at least $1-\delta$
	\begin{equation}\label{eq:Delta_inequality}
		\Omega(\Delta) \leq \frac{4c^2s}{(c-1)^2\gamma_{\min}}\left\{2(1+c^{-1})\lambda + \Omega(\Delta)G^*|\mathrm{vech}(\hat\Sigma - \Sigma_{N,T})|_\infty \right\}.
	\end{equation}
	Consider the following event $E = \{|\mathrm{vech}(\hat\Sigma - \Sigma_{N,T})|_\infty<(2c^*G^*s)^{-1}\}$ with $c^* = (3c+1)^2 / (\gamma_{\min}(c-1)^2)$, and note that under Assumption~\ref{as:data} by Theorem~\ref{thm:fn_long}
	\begin{equation*}
	\begin{aligned}
	\Pr(E^c) & = \Pr\left(\max_{1\leq j\leq k\leq p}\left|\frac{1}{NT}\sum_{i=1}^N\sum_{t=1}^Tz_{i,t,j}z_{i,t,k} - \E[z_{i,t,j}z_{i,t,k}]\right|\geq \frac{1}{2c^*G^*s}\right) \\
	& \lesssim p^2(NT)^{1-\tilde\kappa}s^{\tilde \kappa} + p^2e^{-cNT/s^2}
	\end{aligned}
	\end{equation*}	
	for some $c>0$. On the event $E$, the inequality in equation (\ref{eq:Delta_inequality}) implies $\Omega(\Delta) \lesssim s\lambda$, and whence from the equation (\ref{eq:twopoint2}) by the triangle inequality
	\begin{equation*}
	\begin{aligned}
	\|\mathbf{Z}\Delta\|_{NT}^2 \leq 2(1+c^{-1})\lambda\Omega(\Delta) \lesssim s\lambda^2.
	\end{aligned}
	\end{equation*}
	Therefore, we obtain the statement of the theorem as long as $\|\mathbf{m} - \mathbf{Z}\rho\|_{NT} \leq \frac{1}{2}\|\mathbf{Z}\Delta\|_{NT}$. Suppose now that $\|\mathbf{m} - \mathbf{Z}\rho\|_{NT} > \frac{1}{2}\|\mathbf{Z}\Delta\|_{NT}$. Then
	\begin{equation*}
	\|\mathbf{Z}\Delta\|^2_{NT}  \leq 4\|\mathbf{m} - \mathbf{Z}\rho\|^2_{NT}.
	\end{equation*}
	Therefore, the first statement of the theorem always holds with probability at least $1-\delta-O(r_{N,T}^{\rm pooled})$
	\begin{equation*}
	\|\mathbf{Z}\Delta\|^2_{NT}  \lesssim s\lambda^2 +  \|\mathbf{m} - \mathbf{Z}\rho\|^2_{NT}.
	\end{equation*}
For the second statement, suppose first that
	\begin{equation}\label{eq:cone2}
		\Omega_1(\Delta) \leq 2\frac{c+1}{c-1}\Omega_0(\Delta).
	\end{equation}
	Then by the same arguments as before, on the event $E$, we have
		\begin{equation*}
		\begin{aligned}
		\Omega(\Delta) & \leq \left(1 + 2\frac{c+1}{c-1}\right)\Omega_0(\Delta) \\
		& \leq \frac{3c+1}{c-1}\sqrt{\frac{s}{\gamma_{\min}}\left\{\|\mathbf{Z}\Delta\|_{NT}^2 + \frac{1}{2c^*s}\Omega^2(\Delta) \right\}} \\
		& = \sqrt{\frac{(3c+1)^2}{(c-1)^2\gamma_{\min}}s\|\mathbf{Z}\Delta\|^2_{NT} + \frac{1}{2}\Omega^2(\Delta)}
		\end{aligned}
		\end{equation*}
	or simply
	\begin{equation*}
	\begin{aligned}
		\Omega(\Delta)  \leq \sqrt{2}\frac{(3c+1)}{(c-1)}\sqrt{\frac{s}{\gamma_{\min}}}\|\mathbf{Z}\Delta\|_{NT} 
	 \lesssim s\lambda + \sqrt{s}\|\mathbf{m} - \mathbf{Z}\rho\|_{NT},
	\end{aligned}
	\end{equation*}	
	where we use the first statement of the theorem. On the other hand, if the inequality in equation (\ref{eq:cone2}) does not hold, then the inequality in equation (\ref{eq:cone_condition}) also does not hold, which implies that
	\begin{equation*}
		\|\mathbf{m}-\mathbf{Z}\rho\|_{NT} > \frac{1}{2}\|\mathbf{Z}\Delta\|_{NT}.
	\end{equation*}
	Then since $\|\mathbf{Z}\Delta\|_{NT}\geq 0$ from (\ref{eq:twopoint2}) we obtain
	\begin{equation*}
	\begin{aligned}
		0 & \leq \frac{1}{c}\Omega(\Delta) + \Omega(\rho) - \Omega(\hat\rho) + \frac{2}{\lambda}\|\mathbf{m} - \mathbf{Z}\rho\|^2_{NT} \\
		& \leq \frac{1}{c}\Omega(\Delta) + \Omega_0(\Delta) - \Omega_1(\Delta) + \frac{2}{\lambda}\|\mathbf{m} - \mathbf{Z}\rho\|^2_{NT},
	\end{aligned}
	\end{equation*}
	where we use equation (\ref{eq:cone0}). Since $\Omega(\Delta)=\Omega_1(\Delta) + \Omega_0(\Delta)$
	\begin{equation*}
	\begin{aligned}
		\Omega_1(\Delta) & \leq   \frac{c+1}{c-1}\Omega_0(\Delta) + \frac{2c}{\lambda(c-1)}\|\mathbf{m} - \mathbf{Z}\rho\|^2_{NT} \\
		& \leq  \frac{1}{2}\Omega_1(\Delta) + \frac{2c}{\lambda(c-1)}\|\mathbf{m} - \mathbf{Z}\rho\|^2_{NT},
	\end{aligned}
	\end{equation*}
	where we use the fact that the inequality in equation (\ref{eq:cone2}) does not hold. Therefore,
	\begin{equation*}
		\Omega_1(\Delta) \leq  \frac{4c}{\lambda(c-1)}\|\mathbf{m} - \mathbf{Z}\rho\|^2_{NT},
	\end{equation*}
	which shows that
	\begin{equation*}
		\Omega(\Delta) \lesssim \Omega_1(\Delta) \leq \frac{4c}{\lambda(c-1)}\|\mathbf{m} - \mathbf{Z}\rho\|^2_{NT}.
	\end{equation*}
	Therefore, with probability at least $1-\delta-O(r_{N,T}^{\rm pooled})$, we always have
	\begin{equation*}
		\Omega(\Delta) \lesssim s\lambda + \sqrt{s}\|\mathbf{m} - \mathbf{Z}\rho\|_{NT} + \frac{1}{\lambda}\|\mathbf{m} - \mathbf{Z}\rho\|^2_{NT}.
	\end{equation*}
	The result follows from the equivalence between $\Omega$ and $|.|_1$ norms provided that groups have fixed size.
\end{proof}

\begin{proof}[Proof of Theorem~\ref{thm:fixed_effects}]
	By Fermat's rule the solution to the fixed effects regression satisfies
	\begin{equation*}
	\mathbf{Z}^\top(\mathbf{Z}\hat\rho - \mathbf{y})/NT + \lambda z^* = 0_{N+p}, \text{ for some } z^* = \binom{0_N}{z^*_b}, 
	\end{equation*}
	where $0_N$ is $N$-dimensional vector of zeros, $z_b^*\in\partial\Omega(\hat\beta)$, $\hat\rho = (\hat\alpha^\top,\hat\beta^\top)^\top$, and $\partial\Omega(\hat\beta)$ is the sub-differential of $b\mapsto \Omega(b)$ at $\hat\beta$. Taking the inner product with $\rho - \hat\rho$
	\begin{equation*}
	\begin{aligned}
	\langle\mathbf{Z}^\top(\mathbf{y} - \mathbf{Z}\hat\rho),\rho - \hat\rho\rangle_{NT} & = \lambda\langle z^*,\rho-\hat\rho\rangle \\
	& = \lambda\langle z_b^*,\beta - \hat\beta\rangle  \leq \lambda\left\{\Omega(\beta) - \Omega(\hat\beta)\right\},
	\end{aligned}
	\end{equation*}
	where the last line follows from the definition of the sub-differential. Rearranging this inequality and using $\mathbf{y} = \mathbf{m} + \mathbf{u}$
	\begin{equation}\label{eq:two_point}
	\begin{aligned}
	\|\mathbf{Z}(\hat\rho - \rho)\|^2_{NT} - \lambda\left\{\Omega(\beta) - \Omega(\hat\beta)\right\} \leq &\langle\mathbf{Z}^\top\mathbf{u},\hat\rho-\rho\rangle_{NT} + \langle\mathbf{Z}^\top(\mathbf{m}-\mathbf{Z}\rho),\hat\rho - \rho\rangle_{NT} \\
	\leq &\langle B^\top\mathbf{u},\hat\alpha - \alpha\rangle_{NT} + \langle\mathbf{X}^\top\mathbf{u},\hat\beta - \beta\rangle_{NT} \\
	& \qquad + \|\mathbf{m} - \mathbf{Z}\rho\|_{NT}\|\mathbf{Z}(\hat\rho - \rho)\|_{NT} \\
	\leq &|B^\top\mathbf{u}/NT|_\infty|\hat\alpha - \alpha|_1 +  \Omega^*(\mathbf{X}^\top\mathbf{u}/NT)\Omega(\hat\beta - \beta) \\
	& \qquad + \|\mathbf{m} - \mathbf{Z}\rho\|_{NT}\|\mathbf{Z}(\hat\rho - \rho)\|_{NT} \\
	\leq &|B^\top\mathbf{u}/\sqrt{N}T|_\infty\vee\Omega^*(\mathbf{X}^\top\mathbf{u}/NT)\\
	&\times\left\{ |\hat\alpha - \alpha|_1/\sqrt{N} + \Omega(\hat\beta - \beta)\right\} \\
	& \qquad + \|\mathbf{m} - \mathbf{Z}\rho\|_{NT}\|\mathbf{Z}(\hat\rho - \rho)\|_{NT},
	\end{aligned}
	\end{equation}
	where the second line follows by the dual norm inequality and the Cauchy-Schwartz inequality, and $\Omega^*$ is the dual norm of $\Omega$. By \cite{babii2020machine}, Lemma A.2.1. and Theorem~\ref{thm:fn_long} under Assumption~\ref{as:data}, with probability at least $1-\delta/2$
	\begin{equation*}
		\Omega^*(\mathbf{X}^\top\mathbf{u}/NT)\leq \max_{G\in\mathcal{G}}\sqrt{|G|} |\mathbf{X}^\top\mathbf{u}/NT|_\infty \lesssim \left(\frac{p}{\delta(NT)^{\kappa-1}}\right)^{1/\kappa}\vee\sqrt{\frac{\log(16p/\delta)}{NT}}.
	\end{equation*}
	Similarly, under Assumption~\ref{as:data} by \cite{babii2020inference}, Theorem 3.1 with probability at least $1-\delta/2$
	\begin{equation*}
		|B^\top\mathbf{u}/\sqrt{N}T|_\infty = \max_{i\in[N]}\left|\frac{1}{\sqrt{N}T}\sum_{t=1}^Tu_{i,t}\right| \lesssim \left(\frac{N}{\delta N^{\kappa/2}T^{\kappa-1}}\right)^{1/\kappa}\vee\sqrt{\frac{\log(16N/\delta)}{NT}}.
	\end{equation*}
	Therefore, under Assumption~\ref{as:tuning_fe} with probability at least $1-\delta$
	\begin{equation*}
	|B^\top\mathbf{u}/NT|_\infty\vee\Omega^*(\mathbf{X}^\top\mathbf{u}/NT) \lesssim \left(\frac{(pN^{1-\kappa})\vee N^{1-\kappa/2}}{\delta T^{\kappa -1}}\right)^{1/\kappa}\vee \sqrt{\frac{\log(p\vee N/\delta)}{NT}} \lesssim \lambda.
	\end{equation*}
	In conjunction with the inequality in equation~(\ref{eq:two_point}), this gives
	\begin{equation}\label{eq:two_point2}
	\begin{aligned}
	\|\mathbf{Z}\Delta\|^2_{NT} & \leq c^{-1}\lambda\left\{|\hat\alpha - \alpha|_1/\sqrt{N} + \Omega(\hat\beta - \beta) \right\} \\
	& \quad  + \|\mathbf{m} - \mathbf{Z}\rho\|_{NT}\|\mathbf{Z}\Delta\|_{NT} +  \lambda\left\{\Omega(\beta) - \Omega(\hat\beta)\right\} \\
	& \leq (c^{-1} + 1)\lambda\left\{|\hat\alpha - \alpha|_1/\sqrt{N} + \Omega(\hat\beta - \beta) \right\} + \|\mathbf{m} - \mathbf{Z}\rho\|_{NT}\|\mathbf{Z}\Delta\|_{NT}
	\end{aligned}
	\end{equation}
	for some $c>1$ and $\Delta = \hat\rho - \rho$, where the second line follows by the triangle inequality. Note that the sg-LASSO penalty function can be decomposed as a sum of two semi-norms $\Omega(b) = \Omega_0(b) + \Omega_1(b),\forall b\in\R^p$ with
	\begin{equation*}
		\Omega_0(b) = \gamma|b_{S_0}|_1 + (1-\gamma)\sum_{G\in\mathcal{G}_0}|b_G|_2\quad\text{and}\quad \Omega_1(b) = \gamma|b_{S_0^c}|_1 + (1-\gamma)\sum_{G\in\mathcal{G}_0^c}|b_G|_2.
	\end{equation*}
	Note also that $\Omega_1(\beta) = 0$ and $\Omega_1(\hat\beta) = \Omega_1(\hat\beta - \beta)$. Then
	\begin{equation}\label{eq:cone_condition_0}
	\begin{aligned}
	\Omega(\beta) - \Omega(\hat\beta) & = \Omega_0(\beta) - \Omega_0(\hat\beta) - \Omega_1(\hat\beta) \\
	& \leq \Omega_0(\hat\beta - \beta) - \Omega_1(\hat\beta - \beta).
	\end{aligned}
	\end{equation}
Suppose that $\|\mathbf{m} - \mathbf{Z}\rho\|_{NT} \leq \frac{1}{2}\|\mathbf{Z}\Delta\|_{NT}$. Then from the first inequality in equation (\ref{eq:two_point2}) and equation (\ref{eq:cone0}), we obtain
	\begin{equation*}
	\|\mathbf{Z}\Delta\|_{NT}^2 \leq 2c^{-1}\lambda\left\{|\hat\alpha - \alpha|_1/\sqrt{N} + \Omega(\hat\beta - \beta) \right\} + 2\lambda\left\{\Omega_0(\hat\beta - \beta) - \Omega_1(\hat\beta - \beta)\right\}.
	\end{equation*}
	Since the left side of this equation is $\geq 0$, this shows that
	\begin{equation*}
	(1-c^{-1})\Omega_1(\hat\beta - \beta) \leq (1+c^{-1})\Omega_0(\hat\beta - \beta) + c^{-1}|\hat\alpha - \alpha|_1/\sqrt{N}
	\end{equation*}
	or equivalently
	\begin{equation}\label{eq:cone_condition_1}
	\Omega_1(\hat\beta - \beta) \leq \frac{c+1}{c-1}\Omega_0(\hat\beta - \beta) + (c-1)^{-1}|\hat\alpha - \alpha|_1/\sqrt{N}.
	\end{equation}
	Put $\Delta_N = ((\hat\alpha - \alpha)^\top/\sqrt{N},(\hat\beta - \beta)^\top)^\top$. Then under Assumption~\ref{as:covariance}
	\begin{equation*}
	\begin{aligned}
	|\Delta_N|_1 & \lesssim \Omega(\hat\beta - \beta) + |\hat\alpha - \alpha|_1/\sqrt{N} \\
	& \leq \frac{2c}{c-1}\Omega_0(\hat\beta - \beta) + \frac{c}{c-1}|\hat\alpha - \alpha|_1/\sqrt{N}  \\
	& \lesssim |\hat\alpha - \alpha|_2 + \sqrt{s}|\hat\beta - \beta|_2 \\
	& \leq \sqrt{s\vee N|\Delta_N|^2_2} \\
	& \lesssim \sqrt{s\vee N|\Sigma^{1/2}\Delta_N|^2_2} \\
	& = \sqrt{s\vee N\left\{\|\mathbf{Z}\Delta\|_{NT}^2 + \Delta_N^\top(\hat\Sigma - \Sigma)\Delta_N \right\} }\\
	& \leq \sqrt{s\vee N\left\{\|\mathbf{Z}\Delta\|_{NT}^2 + |\Delta_N|_1^2|\mathrm{vech}(\hat\Sigma - \Sigma)|_\infty \right\}} \\
	& \lesssim \sqrt{s\vee N\left\{\lambda|\Delta_N|_1 + |\Delta_N|_1^2|\mathrm{vech}(\hat\Sigma - \Sigma)|_\infty \right\}}.
	\end{aligned}
	\end{equation*}
	Consider the following event $E = \{|\mathrm{vech}(\hat\Sigma - \Sigma)|_\infty<1/(2s\vee N)\}$. Under Assumption~\ref{as:data} by Theorem~\ref{thm:fn_long} and \cite{babii2020inference}, Theorem 3.1
	\begin{equation*}
	\begin{aligned}
	\Pr(E^c) & \leq \Pr\left(\max_{i\in[N],j\in[p]}\left|\frac{1}{\sqrt{N}T}\sum_{t=1}^T\left\{x_{i,t,j} - \E[x_{i,t,j}]\right\}\right|\geq \frac{1}{2s\vee N}\right) \\
	& \qquad + \Pr\left(\max_{1\leq j\leq k\leq p}\left|\frac{1}{NT}\sum_{i=1}^N\sum_{t=1}^Tx_{i,t,j}x_{i,t,k} - \E[x_{i,t,j}x_{i,t,k}]\right|\geq \frac{1}{2s\vee N}\right) \\
	& \lesssim p(s\vee N)^{\tilde \kappa} T^{1-\tilde\kappa}(N^{1-\tilde\kappa/2} + pN^{1-\tilde\kappa}) + p(p\vee N)e^{-cNT/(s\vee N)^2}.
	\end{aligned}
	\end{equation*}	
	Therefore, on the event $E$
	\begin{equation*}
	|\hat\alpha - \alpha|_1/\sqrt{N} + |\hat\beta -\beta|_1 = |\Delta_N|_1\lesssim (s\vee N)\lambda,
	\end{equation*}
	and whence from equation (\ref{eq:two_point2}) we obtain
	\begin{equation*}
	\begin{aligned}
	\|\mathbf{Z}\Delta\|_{NT}^2 & \lesssim \lambda\left\{|\hat\alpha - \alpha|_1/\sqrt{N} + \Omega(\hat\beta - \beta) \right\} \\
	& \lesssim \lambda|\Delta_N|_1  \leq (s\vee N)\lambda^2.
	\end{aligned}
	\end{equation*}
Suppose now that $\|\mathbf{m} - \mathbf{Z}\rho\|_{NT} > \frac{1}{2}\|\mathbf{Z}\Delta\|_{NT}$. Then, obviously,
	\begin{equation*}
	\|\mathbf{Z}(\hat\rho - \rho)\|^2_{NT}  \leq 4\|\mathbf{m} - \mathbf{Z}\rho\|^2_{NT}.
	\end{equation*}
	Therefore, on the event $E$, we always have
	\begin{equation*}
	\|\mathbf{Z}(\hat\rho - \rho)\|^2_{NT}  \lesssim (s\vee N)\lambda^2 +  4\|\mathbf{m} - \mathbf{Z}\rho\|^2_{NT},
	\end{equation*}
	which proves the statement of the theorem.
\end{proof}

\begin{proof}[Proof of Theorem~\ref{thm:pooled_clt}]
	By Fermat's rule, the pooled sg-LASSO estimator in equation~(\ref{eq:pooled_panel}) satisfies
	\begin{equation*}
	\mathbf{Z}^\top(\mathbf{Z}\hat\rho - \mathbf{y})/NT + \lambda z^* = 0
	\end{equation*}
	for some $z^*\in\partial\Omega(\hat\rho)$. Rearranging this expression and multiplying by $\hat\Theta$
	\begin{equation*}
	\hat\rho - \rho + \hat\Theta\lambda z^* = \hat\Theta\mathbf{Z}^\top\mathbf{u}/NT + (I-\hat\Theta\hat\Sigma)(\hat\rho - \rho) + \hat\Theta\mathbf{Z}^\top(\mathbf{m} - \mathbf{Z}\rho)/NT,
	\end{equation*}
	where we use $\hat\Sigma = \mathbf{Z}^\top\mathbf{Z}/NT$ and $\mathbf{y}=\mathbf{m}+\mathbf{u}$. Plugging $\lambda z^*$ from the first-order conditions and multiplying by $\sqrt{NT}$
	\begin{equation*}
	\begin{aligned}
	\sqrt{NT}(\hat\rho - \rho + B) = \hat\Theta\mathbf{Z}^\top\mathbf{u}/\sqrt{NT} + \sqrt{NT}(I-\hat\Theta\hat\Sigma)(\hat\rho - \rho) + \hat\Theta\mathbf{Z}^\top(\mathbf{m} - \mathbf{Z}\rho)/\sqrt{NT}.
	\end{aligned}
	\end{equation*}
	Then for a group of regression coefficients $G\subset[p+1]$, we have
	\begin{equation*}
	\begin{aligned}
	\sqrt{NT}(\hat\rho_G - \rho_G + B_G) & = \frac{1}{\sqrt{NT}}\sum_{i=1}^N\sum_{t=1}^Tu_{i,t}\Theta_Gz_{i,t} + \frac{1}{\sqrt{NT}}\sum_{i=1}^N\sum_{t=1}^Tu_{i,t}(\hat\Theta_G - \Theta_G)z_{i,t} \\
	& \qquad + \sqrt{NT}(I - \hat\Theta\hat\Sigma)_G(\hat\rho - \rho) + \hat\Theta_G\mathbf{Z}^\top(\mathbf{m} - \mathbf{Z}\rho)/\sqrt{NT} \\
	& \triangleq I_{N,T} + II_{N,T} + III_{N,T} + IV_{N,T}.
	\end{aligned}
	\end{equation*}
	We will show that by Theorem~\ref{thm:clt}, $I_{N,T}\xrightarrow{d}N(0,\Xi_G)$ as $N,T\to\infty$. To that end, by Minkowski's inequality under Assumptions~\ref{as:data} (i) and \ref{as:clt} (ii)
	\begin{equation*}
	\begin{aligned}
	\max_{i\in[N],j\in G}\|u_{i,t}\Theta_j z_{i,t}\|_q & \leq \max_{i\in[N],j\in G}\sum_{k=1}^{p+1}\|u_{i,t}z_{i,t,k}\Theta_{j,k}\|_q \\
	& \leq \|\Theta_G\|_\infty\max_{i\in[N],j\in G,k\in[p+1]}\|u_{i,t}z_{i,t,k}\|_q = O(1).
	\end{aligned}
	\end{equation*}	
	Lastly, under Assumption~\ref{as:clt} (i), for every $i,N\in\Nn$,
	\begin{equation*}
	\begin{aligned}
	\lim_{T\to\infty}\Var(u_{i,t}\Theta_Gz_{i,t}) & = \lim_{T\to\infty}\Theta_G\Var(u_{i,t}z_{i,t})\Theta_G^\top \\
	& \lesssim \lim_{T\to\infty} \Theta_G\Sigma\Theta_G  = (\Theta_G^\top)_G <\infty
	\end{aligned}
	\end{equation*} 
	since groups have a fixed size. In conjunction with Assumption~\ref{as:data} (ii), this verifies conditions of Theorem~\ref{thm:clt} and shows that $I_{N,T}\xrightarrow{d}N(0,\Xi_G)$.
	
	Next,
	\begin{equation*}
	\begin{aligned}
	|II_{N,T}| & \leq \|\hat\Theta_G - \Theta_G\|_\infty\left|\frac{1}{\sqrt{NT}}\sum_{i=1}^N\sum_{t=1}^Tu_{i,t}z_{i,t}\right|_\infty \\
	& = O_P\left(\frac{Sp^{1/\kappa}}{(NT)^{1-1/\kappa}}\vee S\sqrt{\frac{\log p}{NT}}\right)O_P\left(\frac{p^{1/\kappa}}{(NT)^{1/2-1/\kappa}}\vee\sqrt{\log p}\right)  = o_P(1),
	\end{aligned}
	\end{equation*}
	where we use Proposition~\ref{prop:precision} and Theorem~\ref{thm:fn_long}. Similarly by Proposition~\ref{prop:precision} and Corollary~\ref{cor:pooled}
	\begin{equation*}
	\begin{aligned}
	|III_{N,T}| & \leq \sqrt{NT}\max_{j\in G}|(I - \hat\Theta\hat\Sigma)_j|_\infty|\hat\rho - \rho|_1 \\
	& = O_P\left(\frac{p^{1/\kappa}}{(NT)^{1/2-1/\kappa}}\vee\sqrt{\log p}\right)O_P\left(\frac{sp^{1/\kappa}}{(NT)^{1-1/\kappa}}\vee s\sqrt{\frac{\log p}{NT}}\right)  = o_P(1).
	\end{aligned}
	\end{equation*}
	Lastly, by the Cauchy-Schwartz inequality
	\begin{equation*}
	\begin{aligned}
	|IV_{N,T}|_\infty & \leq \max_{j\in G}|\mathbf{Z}\hat\Theta_j^\top|_2\|\mathbf{m} - \mathbf{Z}\rho\|_{NT}  = \max_{j\in G}\sqrt{\hat\Theta_j^\top\hat\Sigma\hat\Theta_j}o_P(1) \\
	& \leq \|\hat\Theta_G\|_\infty\sqrt{|\mathrm{vech}(\hat\Sigma)|_\infty} o_P(1)  = o_P(1),
	\end{aligned}
	\end{equation*}
	where the second line follows under Assumption~\ref{as:clt} (v), and the last by Proposition~\ref{prop:precision} and Theorem~\ref{thm:fn_long} under maintained assumptions.
\end{proof}

\begin{proposition}\label{prop:precision}
	Suppose that Assumptions~\ref{as:data}, \ref{as:covariance}, \ref{as:tuning}, \ref{as:rates}, and \ref{as:clt} are satisfied for each nodewise regression $j\in G$. Then if $S^\kappa p (NT)^{1-\kappa} \to 0$ and $S^2\log p/NT$ $\to$ 0
	\begin{equation*}
		\|\hat\Theta_G - \Theta_G\|_\infty = O_P\left(\frac{Sp^{1/\kappa}}{(NT)^{1-1/\kappa}}\vee S\sqrt{\frac{\log p}{NT}}\right)
	\end{equation*}
	and
	\begin{equation*}
		\max_{j\in G}|(I - \hat\Theta\hat\Sigma)_j|_\infty = O_P\left(\frac{p^{1/\kappa}}{(NT)^{1-1/\kappa}}\vee \sqrt{\frac{\log p}{NT}}\right).
	\end{equation*}
\end{proposition}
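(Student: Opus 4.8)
The plan is to run the now-standard nodewise-LASSO argument (as in \cite{babii2020inference}) in the pooled-panel setting, with the oracle inequality of Corollary~\ref{cor:pooled} and the Fuk--Nagaev bound of Theorem~\ref{thm:fn_long} doing the probabilistic work. Fix $j\in G$, let $\mathbf{v}_j$ be the $NT$-vector stacking $\{v_{i,t,j}:i\in[N],t\in[T]\}$, and note that $\lambda_j$ has the common order $\lambda\sim\left(\frac{p}{(NT)^{\kappa-1}}\right)^{1/\kappa}\vee\sqrt{\frac{\log p}{NT}}$ by Assumption~\ref{as:tuning} applied to the $j$-th nodewise regression. Each nodewise regression is a pooled LASSO ($\gamma=1$) with \emph{no} approximation error (the linear projection of $z_{i,t,j}$ on $(z_{i,t,k})_{k\ne j}$ is exact, with error $v_{i,t,j}$), so under the maintained assumptions Corollary~\ref{cor:pooled} gives, uniformly over the finite set $G$,
\begin{equation*}
|\hat\mu_j-\mu_j|_1 = O_P(s_j\lambda) = O_P(S\lambda),\qquad \|\mathbf{Z}_{-j}(\hat\mu_j-\mu_j)\|_{NT}^2 = O_P(S\lambda^2),
\end{equation*}
while Fermat's rule for the nodewise objective yields the dual-norm bound $|\mathbf{Z}_{-j}^\top(\mathbf{Z}_j-\mathbf{Z}_{-j}\hat\mu_j)/NT|_\infty\le\lambda_j$. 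I would also record the population identities $\Theta_j=\tau_j^{-2}C_j$ with $\tau_j^2=\E[v_{i,t,j}^2]=1/\Theta_{jj}$ and $C_j$ the vector equal to $1$ in coordinate $j$ and $-\mu_{j,k}$ elsewhere; Assumptions~\ref{as:covariance} and~\ref{as:clt}(ii) give $\gamma_{\min}\le\tau_j^2=O(1)$ and $|\mu_j|_1=O(1)$, and since $v_{i,t,j}$ (hence $v_{i,t,j}^2$ and $v_{i,t,j}z_{i,t,k}$) is an $\ell_1$-bounded linear combination of the $z$'s, it inherits the moment and $\tau$-mixing conditions of Assumption~\ref{as:data}, so Theorem~\ref{thm:fn_long} applies to it as well.

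The key step is to show $\hat\sigma_j^2=\tau_j^2+O_P(\lambda)$ uniformly over $j\in G$. Using $\mathbf{Z}_j-\mathbf{Z}_{-j}\hat\mu_j=\mathbf{v}_j-\mathbf{Z}_{-j}(\hat\mu_j-\mu_j)$ and expanding,
\begin{equation*}
\hat\sigma_j^2=\|\mathbf{v}_j\|_{NT}^2-2\langle\mathbf{v}_j,\mathbf{Z}_{-j}(\hat\mu_j-\mu_j)\rangle_{NT}+\|\mathbf{Z}_{-j}(\hat\mu_j-\mu_j)\|_{NT}^2+\lambda_j|\hat\mu_j|_1 .
\end{equation*}
Here $\|\mathbf{v}_j\|_{NT}^2-\tau_j^2=o_P(\lambda)$ by Theorem~\ref{thm:fn_long} applied to the scalar mean-zero average $\frac{1}{NT}\sum_{i,t}(v_{i,t,j}^2-\E v_{i,t,j}^2)$ (its rate carries no $p$ or $\log p$ factor, hence is dominated by $\lambda$); by $\E[v_{i,t,j}z_{i,t,k}]=0$ for $k\ne j$, Theorem~\ref{thm:fn_long} gives $|\mathbf{Z}_{-j}^\top\mathbf{v}_j/NT|_\infty=O_P(\lambda)$, so by H\"older the cross term is $O_P(\lambda\cdot S\lambda)$; the third term is $O_P(S\lambda^2)$ by the oracle inequality; and $\lambda_j|\hat\mu_j|_1\le\lambda_j(|\mu_j|_1+|\hat\mu_j-\mu_j|_1)=O_P(\lambda)$. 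Since the imposed conditions $S^\kappa p(NT)^{1-\kappa}\to0$ and $S^2\log p/NT\to0$ are precisely ``$S\lambda\to0$'', the quadratic remainders satisfy $S\lambda^2=o(\lambda)$, so $\hat\sigma_j^2=\tau_j^2+O_P(\lambda)$; in particular $\hat\sigma_j^2$ is bounded away from $0$ and $\infty$ with probability tending to one, and $|\hat\sigma_j^{-2}-\tau_j^{-2}|=O_P(\lambda)$.

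The two claims then follow quickly. Write $\hat\Theta_j$ for the $j$-th row of $\hat\Theta=\hat B^{-1}\hat C$, viewed as a vector in $\R^{p+1}$; then $\hat\Theta_j=\hat\sigma_j^{-2}\hat C_j$, where $\hat C_j$ satisfies $\mathbf{Z}\hat C_j=\mathbf{Z}_j-\mathbf{Z}_{-j}\hat\mu_j$, and $\Theta_j=\tau_j^{-2}C_j$. From
\begin{equation*}
\hat\Theta_j-\Theta_j=\hat\sigma_j^{-2}(\hat C_j-C_j)+(\hat\sigma_j^{-2}-\tau_j^{-2})C_j ,
\end{equation*}
together with $|\hat C_j-C_j|_1=|\hat\mu_j-\mu_j|_1$ and $|C_j|_1=1+|\mu_j|_1=O(1)$, we get $|\hat\Theta_j-\Theta_j|_1\le\hat\sigma_j^{-2}|\hat\mu_j-\mu_j|_1+|\hat\sigma_j^{-2}-\tau_j^{-2}|\,|C_j|_1=O_P(S\lambda)$; taking the maximum over the finite $G$ and substituting the order of $\lambda$ gives the first bound. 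For the second, note $\hat\Sigma\hat\Theta_j=\hat\sigma_j^{-2}\cdot\frac{1}{NT}\mathbf{Z}^\top(\mathbf{Z}_j-\mathbf{Z}_{-j}\hat\mu_j)$, whose $k$-th coordinate is $\hat\sigma_j^{-2}\cdot\frac{1}{NT}\mathbf{Z}_k^\top(\mathbf{Z}_j-\mathbf{Z}_{-j}\hat\mu_j)$; for $k\ne j$ this is $\hat\sigma_j^{-2}$ times a coordinate of $\mathbf{Z}_{-j}^\top(\mathbf{Z}_j-\mathbf{Z}_{-j}\hat\mu_j)/NT$, hence of absolute value $\le\hat\sigma_j^{-2}\lambda_j$ by Fermat's rule, whereas for $k=j$ it equals $\hat\sigma_j^{-2}(\|\mathbf{Z}_j-\mathbf{Z}_{-j}\hat\mu_j\|_{NT}^2+\lambda_j|\hat\mu_j|_1)=1$ (the KKT subgradient paired with $\hat\mu_j$ returns $|\hat\mu_j|_1$). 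Thus $(I-\hat\Theta\hat\Sigma)_j$ has zero $j$-th coordinate and all other coordinates bounded by $\hat\sigma_j^{-2}\lambda_j$, so $\max_{j\in G}|(I-\hat\Theta\hat\Sigma)_j|_\infty\le\max_{j\in G}\hat\sigma_j^{-2}\lambda_j=O_P(\lambda)$, which is the second bound.

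The main obstacle is the middle step --- obtaining the $O_P(\lambda)$ rate for $\hat\sigma_j^2-\tau_j^2$ rather than mere consistency. This is where the oracle $\ell_1$ and prediction bounds, a Fuk--Nagaev bound for the mean-zero score average $\mathbf{Z}_{-j}^\top\mathbf{v}_j/NT$, and a scalar law of large numbers for $\|\mathbf{v}_j\|_{NT}^2$ must all be combined, and where one has to observe that the stated sparsity--dimension conditions amount exactly to $S\lambda\to0$, so that the quadratic remainders $S\lambda^2$ are negligible next to $\lambda$ and $\hat\sigma_j^{-2}$ stays bounded. By contrast, verifying that $v_{i,t,j}$ and $v_{i,t,j}z_{i,t,k}$ meet the moment and $\tau$-mixing requirements of Assumption~\ref{as:data} is routine given $|\mu_j|_1=O(1)$, since both properties are stable under $\ell_1$-bounded linear combinations of the coordinates.
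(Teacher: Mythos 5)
Your argument is correct and is essentially the paper's own proof: the paper simply defers to \cite{babii2020inference}, Propositions A.1.2--A.1.3, and what you have written out is exactly that nodewise-LASSO argument transplanted to the panel setting, with Corollary~\ref{cor:pooled} supplying the $\ell_1$ and prediction bounds, Theorem~\ref{thm:fn_long} controlling the nodewise score, and the KKT identity $\mathbf{Z}_j^\top(\mathbf{Z}_j-\mathbf{Z}_{-j}\hat\mu_j)/NT=\hat\sigma_j^2$ making the diagonal of $I-\hat\Theta\hat\Sigma$ vanish. The only cosmetic remark is that for $\|\mathbf{v}_j\|_{NT}^2-\tau_j^2$ it is cleaner to invoke Assumption~\ref{as:clt}(iii) and Chebyshev to get $O_P(1/\sqrt{NT})\lesssim O_P(\lambda)$, since a Fuk--Nagaev bound for the squared process would carry the exponent $\tilde\kappa$ rather than $\kappa$; this does not affect the conclusion.
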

\begin{proof}
	The proof is similar to the proof of \cite{babii2020inference}, Propositions A.1.2 and A.1.3.
\end{proof}

%\newpage
\section{Concentration and moment inequalities}\label{sec:concentration}
In this section we present a suitable for us Rosenthal's moment inequality for dependent data and a new Fuk-Nagaev concentration inequality for panel data reflecting the concentration jointly over $N$ and $T$. 

For a random vector $\xi_{i,t}=(\xi_{i,t,1},\dots,\xi_{i,t,p})\in\R^p$, let $\tau_k^{(i,j)}$ denote the $\tau$-mixing coefficient of $\xi_{i,t,j}$. The following result describes a Fuk-Nagaev concentration inequality for panel data. It is worth mentioning that the inequality does not follow from \cite{babii2020inference} and is of independent interest for the high-dimensional panel data.\footnote{The direct application of the time series Fuk-Nagaev inequality of \cite{babii2020inference} leads to inferior concentration results for panel data.}
\begin{theorem}\label{thm:fn_long}
	Let $\{\xi_{i,t}: i\in[N],t\in[T]\}$ be an array of centered random vectors in $\R^p$ such that $(\xi_{i,1},\dots,\xi_{i,T})$ are independent over $i$ and (i) $\max_{i\in[N],t\in[T],j\in[p]}\|\xi_{i,t,j}\|_q=O(1)$ for some $q>2$; (ii) $\max_{i\in[N],j\in[p]}\tau_{k}^{(i,j)}= O(k^{-a})$ for some $a>(q-1)/(q-2)$. Then for every $u>0$
	\begin{equation*}
	\Pr\left(\left|\sum_{i=1}^N\sum_{t=1}^T\xi_{i,t}\right|_\infty > u \right) \leq c_1pNTu^{-\kappa} + 4pe^{-c_2u^2/NT}
	\end{equation*}
	for some universal constants $c_1,c_2>0$ and $\kappa = ((a+1)q - 1)/(a+q-1)$.
\end{theorem}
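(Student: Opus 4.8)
The plan is to reduce to a scalar statement by a union bound over coordinates and then to combine a truncation argument with the Rosenthal-type moment inequality for $\tau$-mixing panels presented in this section, mimicking the time series proof but carefully exploiting independence across $i$. Since $\Pr\big(|\sum_{i,t}\xi_{i,t}|_\infty>u\big)\le\sum_{j=1}^p\Pr\big(|\sum_{i,t}\xi_{i,t,j}|>u\big)$, it suffices to fix $j\in[p]$, write $S=\sum_{i=1}^N\sum_{t=1}^T\xi_{i,t,j}$, and show $\Pr(|S|>u)\lesssim NTu^{-\kappa}+e^{-c_2u^2/NT}$. For a truncation level $M>0$ to be chosen as a power of $u$, split $\xi_{i,t,j}=\bar\xi^{<}_{i,t,j}+\bar\xi^{>}_{i,t,j}$ into the centered truncation of $\xi_{i,t,j}\one\{|\xi_{i,t,j}|\le M\}$ and the centered tail of $\xi_{i,t,j}\one\{|\xi_{i,t,j}|> M\}$, which is consistent because $\xi_{i,t,j}$ is centered, and write $S=S^{<}+S^{>}$.

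The tail part is controlled by Markov's inequality on the first moment: $\Pr(|S^{>}|>u/2)\le\frac2u\E|S^{>}|\le\frac4u\sum_{i,t}\E\big[|\xi_{i,t,j}|\one\{|\xi_{i,t,j}|>M\}\big]\le\frac4u\sum_{i,t}\|\xi_{i,t,j}\|_q^q M^{-(q-1)}\lesssim NT u^{-1}M^{-(q-1)}$ by condition (i). For the bounded part I would apply the Rosenthal-type moment inequality of this section to $S^{<}=\sum_{i=1}^N T_i$ with $T_i=\sum_{t=1}^T\bar\xi^{<}_{i,t,j}$: the $T_i$ are independent and centered across $i$, while inside each $i$ the summands form a $2M$-bounded $\tau$-mixing sequence. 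Two quantitative inputs are needed: (a) the variance bound $\Var(S^{<})\lesssim NT M^{1/a}$, which follows from the $\tau$-mixing covariance inequality $|\Cov(\bar\xi^{<}_{i,t,j},\bar\xi^{<}_{i,s,j})|\lesssim\min\{1,M\tau_{|t-s|}\}$ together with $\tau_k\lesssim k^{-a}$ and $\sum_{k\ge1}\min\{1,Mk^{-a}\}\lesssim M^{1/a}$ (valid since $a>(q-1)/(q-2)>1$); and (b) a ``jump'' term of order $NT(2M)^{m}$ rather than the long-range term one would incur by concatenating the panel into a single series of length $NT$. Then Markov's inequality $\Pr(|S^{<}|>u/2)\le(2/u)^m\E|S^{<}|^m$, after optimizing the moment order $m$, yields a bound of the form $\exp(-c_2u^2/(NTM^{1/a}))+NT(CM/u)^{m_0}$ with $m_0$ governed by $a$.

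It remains to optimize over $M$ and the moment order, exactly as in the time series case. For $u$ up to order $\sqrt{NT}$ (times logarithmic factors) one keeps $M$ bounded, so $M^{1/a}\asymp1$ and the Gaussian piece is $e^{-c_2u^2/NT}$; for larger $u$ one takes $M\sim u^{\theta}$ with $\theta$ chosen to equate the tail term $NTu^{-1-\theta(q-1)}$ with the bounded-part polynomial term, and a short computation gives the common exponent $\kappa=((a+1)q-1)/(a+q-1)$, with the hypothesis $a>(q-1)/(q-2)$ ensuring $\kappa>2$ so that the two regimes patch together. Summing over $j\in[p]$ and absorbing numerical constants (the factor $4$ arising from the two-sided tail bound combined with the truncation split) gives the stated inequality.

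The main obstacle is input (b): establishing, or correctly invoking, a Rosenthal-type moment inequality for the $\tau$-mixing panel whose variance term scales like $NTM^{1/a}$ and whose jump term scales like $NT(2M)^{m}$ up to $\tau$-summability constants, rather than with the inferior constants produced by ignoring the cross-sectional independence. This calls for a blocking/coupling argument in the time direction --- using the coupling property of $\tau$-mixing sequences \citep{dedecker2004coupling} applied entity by entity --- combined with the classical independent-summands Rosenthal inequality across $i$; arranging the block lengths, gap lengths, and coupling errors so that the clean $NT$-scaling emerges is precisely why the result does not follow from the time series Fuk--Nagaev inequality.
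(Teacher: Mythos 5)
Your route --- union bound over coordinates, truncation at level $M$, Markov on the first moment for the tail part, and a Rosenthal-type moment bound plus Markov with an optimized moment order for the bounded part --- is the classical Fuk--Nagaev strategy and is genuinely different from what the paper does. The paper instead blocks each time series into segments of length $J$, applies the Dedecker--Prieur coupling lemma entity by entity to replace the even- and odd-indexed blocks by \emph{independent} copies (independence holding both within each $i$ and across $i$), invokes the classical Fuk--Nagaev inequality for independent summands (\cite{fuk1971probability}, Corollary 4) on the two coupled sums to get the $u^{-q}J^{q-1}NT$ and $e^{-c_2u^2/NT}$ terms directly, controls the coupling error by Markov and $\tau(\mathcal{M}_{i,t},V_{i,t})\le J\tau_J$, and balances $J\sim u^{(q-1)/(q+a-1)}$ to produce $\kappa$. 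That design sidesteps truncation entirely and delegates the sub-Gaussian regime to the independent-data inequality, which is why the variance term comes out as a clean $O(NT)$ without any $M$-dependence.

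The gap in your proposal is exactly the step you flag as ``the main obstacle,'' input (b), and it is not a removable formality: the entire polynomial term and the exponential term both hinge on a Rosenthal-type inequality for bounded $\tau$-mixing panels whose variance term scales like $NTM^{1/a}$ and whose jump term scales like $NT(2M)^m$ \emph{with explicit, controlled dependence on the moment order $m$}, since you need to let $m$ grow with $u$ to extract $e^{-c_2u^2/(NTM^{1/a})}$ from a moment bound. The only Rosenthal inequality available in this section (Theorem~\ref{thm:rosenthal}) does not supply this: it is stated for a single time series, only for orders $r\in[2,q)$, and its constants are expressed through $\|\xi\|_q$ rather than the truncation level, so it can neither produce the exponential term nor the $M$-adapted jump term. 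Establishing the needed inequality would itself require the same blocking-and-coupling construction the paper uses, at which point one may as well apply the independent-data Fuk--Nagaev inequality directly. Two secondary issues: (1) you truncate with the hard cutoff $\xi\one\{|\xi|\le M\}$, which is not a Lipschitz function of $\xi$, so the $\tau$-mixing coefficients of the truncated process are not automatically inherited; you should clip, $\xi\mapsto(\xi\wedge M)\vee(-M)$, which is $1$-Lipschitz and preserves the $\tau$-coefficients while leaving your tail bound $\E[|\xi|\one\{|\xi|>M\}]\le\|\xi\|_q^qM^{1-q}$ intact. (2) The two-regime patching (bounded $M$ for $u\lesssim\sqrt{NT}$, $M\sim u^\theta$ for larger $u$) and the claim that the balance reproduces $\kappa=((a+1)q-1)/(a+q-1)$ are asserted but never computed; since the exponent emerging from this balance depends on the unproven constants in (b), the verification cannot be waved through. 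Your observation that $a>(q-1)/(q-2)$ is equivalent to $\kappa>2$ is correct, but it does not substitute for carrying out the matching.
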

\begin{proof}[Proof of Theorem~\ref{thm:fn_long}]
	Suppose first that $p=1$. For $a\in\R$ with some abuse of notation, let $[[a]]$ denote its integer part. For each $i\in[N]$, split the partial sum into blocks with at most $J\in\Nn$ summands
	\begin{equation*}
	\begin{aligned}
	V_{i,k} & = \xi_{i,(k-1)J+1} + \dots + \xi_{i,kJ},\qquad k=1,2,\dots,[[T/J]] \\
	V_{i,[[T/J]]+1} & = \xi_{i,[[T/J]]J+1} + \dots + \xi_{i,T},
	\end{aligned}
	\end{equation*}
	where we set $V_{i,[[T/J]]+1}=0$ if $[[T/J]]J=T$. Let $\{U_{i,t}:i\in[N],t\in[T]\}$ be i.i.d.\ random variables uniformly distributed on $(0,1)$ and independent of $\{\xi_{i,t}:i\in[N],t\in[T]\}$. Put $\mathcal{M}_{i,t}=\sigma(V_{i,1},\dots,V_{i,t-2})$ for every $t\geq 3$. For each $i\in[N]$, if $t=1,2$, set $V_{i,t}^* = V_{i,t}$, while if $t\geq 3$, then by \cite{dedecker2004coupling}, Lemma 5, there exist random variables $V_{i,t}^*=_dV_{i,t}$ such that
	\begin{enumerate}
		\item $V_{i,t}^*$ is $\mathcal{M}_{i,t}\vee\sigma(V_{i,t})\vee \sigma(U_{i,t})$-measurable.
		\item $V_{i,t}^*\si(V_{i,1},\dots,V_{i,t-2})$.
		\item $\|V_{i,t} - V_{i,t}^*\|_1 = \tau(\mathcal{M}_{i,t},V_{i,t})$.
	\end{enumerate}
	Property 1.\ implies that there exists a measurable function $f_i$ such that
	\begin{equation*}
	V_{i,t}^* = f_i(V_{i,t},V_{i,t-2},\dots,V_{i,1},U_{i,t}).
	\end{equation*}
	Property 2.\ implies that $(V_{i,2t}^*)_{t\geq 1}$ and $(V_{i,2t-1}^*)_{t\geq 1}$ are sequences of independent random variables for every $i\in[N]$. Moreover, $\{V_{i,2t}^*:i\in[N],t\geq 1 \}$ and $\{V_{i,2t-1}^*:i\in[N],t\geq 1 \}$ are sequences of independent random variables since $\{\xi_{i,t}:\;t\in[T]\}$ are independent over $i\in[N]$. 
	
	Decompose
	\begin{equation*}
	\begin{aligned}
	\left|\sum_{i=1}^N\sum_{t=1}^T\xi_{i,t}\right| & \leq \left|\sum_{i=1}^N\sum_{t\geq 1}V_{i,2t}^*\right| + \left|\sum_{i=1}^N\sum_{t\geq 1}V_{i,2t-1}^*\right| + \sum_{i=1}^N\sum_{t=3}^{[[T/J]]+1}\left|V_{i,t} - V_{i,t}^*\right| \\
	& \triangleq I + II + III.
	\end{aligned}
	\end{equation*}
	By \cite{fuk1971probability}, Corollary 4 for independent data there exist constants $c_1,c_2>0$ such that
	\begin{equation*}
	\begin{aligned}
	\Pr(I> u/3) & \leq c_1u^{-q}\sum_{i=1}^N\sum_{t\geq 1}\E|V_{i,2t}^*|^q + 2\exp\left(-\frac{c_2u^2}{\sum_{i=1}^N\sum_{t\geq 1}\Var(V_{i,2t}^*)}\right) \\
	& \leq c_1u^{-q}\sum_{i=1}^N\sum_{t\geq 1}\E|V_{i,2t}|^q + 2\exp\left(-\frac{c_2u^2}{NT}\right),
	\end{aligned}
	\end{equation*}
	where we use $V_{i,t}^* =_d V_{i,t}$ and
$	\sum_{i=1}^N\sum_{t\geq 1}\Var(V_{i,2t})$ = $O(T),$
	which follows from \cite{babii2020inference}, Lemma A.1.2 under assumptions (i) and (ii). Similarly,
	\begin{equation*}
	\Pr(II> u/3) \leq c_1u^{-q}\sum_{i=1}^N\sum_{t\geq 1}\E|V_{i,2t}|^q + 2\exp\left(-\frac{c_2u^2}{NT}\right).
	\end{equation*}
	Finally, since $\mathcal{M}_{i,t}$ and $V_{i,t}$ are separated by $J+1$ lags of $\xi_{i,t}$, we have $\tau(\mathcal{M}_{i,t},V_{i,t}) \leq J\tau_J^{(i,j)}(J+1)$. By Markov's inequality and property 3., this gives
	\begin{equation*}
	\begin{aligned}
	\Pr(III>u/3) & \leq \frac{3}{u}\sum_{i=1}^N\sum_{t=3}^{[[T/J]]+1}\|V_{i,t} - V_{i,t}^*\|_1  \leq \frac{3NT}{u}\max_{i\in[N]}\tau_{J+1}^{(i,1)}.
	\end{aligned}
	\end{equation*}
	Combining all estimates together under (i)-(ii)
	\begin{equation*}
	\begin{aligned}
	\Pr\left(\left|\sum_{i=1}^N\sum_{t=1}^T\xi_{i,t}\right| > u \right) & \leq \Pr(I > u/3) + \Pr(II > u/3) + \Pr(III > u/3) \\
	& \leq c_1u^{-q}N\sum_{i=1}^N\sum_{t\geq 1}\|V_{i,t}\|^q_q + 4e^{-c_2u^2/NT} + \frac{3NT}{u}\max_{i\in[N]}\tau_{J+1}^{(i,1)} \\
	& \leq c_1u^{-q}J^{q-1}NT + \frac{3NT}{u}(J+1)^{-a} + 4e^{-c_2u^2/NT}
	\end{aligned}
	\end{equation*}
	for some constants $c_1,c_2>0$. To balance the first two terms, we shall choose the length of blocks $J\sim u^{\frac{q-1}{q+a-1}}$, in which case we get
	\begin{equation*}
	\Pr\left(\left|\sum_{i=1}^N\sum_{t=1}^T\xi_{i,t}\right| > u \right) \leq c_1NTu^{-\kappa} + 4e^{-c_2u^2/NT}
	\end{equation*}
	for some $c_1,c_2>0$. Finally, for $p>1$, the result follows by the union bound.
\end{proof}
It follows from Theorem~\ref{thm:fn_long} that there exists $C>0$ such that for every $\delta\in(0,1)$
\begin{equation*}
\Pr\left(\left|\frac{1}{NT}\sum_{t=1}^T\sum_{i=1}^N\xi_{i,t}\right|_\infty \leq C\left(\frac{p}{\delta (NT)^{\kappa - 1}}\right)^{1/\kappa}\vee \sqrt{\frac{\log(p/\delta)}{NT}} \right)\geq 1 - \delta.
\end{equation*}
Note that the inequality reflects the concentration jointly over $N$ and $T$ and that tails and persistence play an important role through the mixing-tails exponent $\kappa$. The inequality is a key technical tool that allows us to handle panel data with heavier than Gaussian tails and non-negligible $T$ and $N$. It is worth mentioning that the concentration over $N$ is also influenced by the weak dependence, which probably can be relaxed with a sharper proof technique. However, for geometrically ergodic processes, e.g., for stationary $AR(p)$, we have $\kappa\approx q$, in which case the time series dependence does not influence the concentration at all.

Let $(\xi_t)_{t\in\Nn}$ be a real-valued stochastic process, and let $Q_t$ denote the generalized inverse of the tail function $x\mapsto\Pr(|\xi_t|\geq x)$. Let $\xi\in\R$ be a random variable corresponding to $(\xi_t)_{t\in\Z}$ such that $Q\geq \sup_{t\in\Nn}Q_t$, where $Q$ is a generalized inverse of $x\mapsto\Pr(|\xi|\geq x)$. The following Rosenthal's moment inequality for $\tau$-dependent sequences follows from \cite{dedecker2004coupling}; see also \cite{dedecker2003new}.
\begin{theorem}\label{thm:rosenthal}
	Let $(\xi_t)_{t\in\Nn}$ be a centered stochastic process such that (i) there exists $q>2$ such that $\|\xi\|_q<\infty$, where $\xi\in\R$ corresponds to $(\xi_t)_{t\in\Nn}$; (ii) the $\tau$-mixing coefficients are $\tau_{k-1}\leq ck^{-a},\forall k\geq 1$ for some universal constants $c>0$ and $a>(q(r-2)+1)/(q-r)$. Then for every $r\in[2,q)$
	\begin{equation*}
	\E\left|\sum_{t=1}^T\xi_t\right|^r \leq c_{q,r}\left(T^{r/2}\|\xi\|_q^{qr/2(q-1)} + T\|\xi\|_q^{q(r-1)/(q-1)} \right),
	\end{equation*}
	where the constant $c_{q,r}$ depends only on $q$ and $r$.
\end{theorem}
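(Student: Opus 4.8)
The statement is the specialization, to processes with $L_q$ marginals and polynomially decaying $\tau$-mixing coefficients, of the Rosenthal-type moment inequality for $\tau$-dependent sequences of \cite{dedecker2003new,dedecker2004coupling}; the plan is to recall that inequality and then evaluate the quantities entering it.

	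As a warm-up, a slightly weaker version follows for free from the concentration bound already at hand. Taking $N=1$ and $p=1$ in Theorem~\ref{thm:fn_long} gives $\Pr(|S_T|>u)\leq c_1Tu^{-\kappa}+4e^{-c_2u^2/T}$ for $S_T=\sum_{t=1}^T\xi_t$, with $\kappa=((a+1)q-1)/(a+q-1)$. Writing $\E|S_T|^r=r\int_0^\infty u^{r-1}\Pr(|S_T|>u)\,\dx u$ and splitting the integral at $u\sim\sqrt{T}$, the Gaussian tail contributes $O(T^{r/2})$ and the polynomial tail contributes $O(T)$ as soon as $\kappa>r$. This already yields $\E|S_T|^r\lesssim T^{r/2}$, but it requires $a>(r-1)(q-1)/(q-r)$, which is strictly stronger than the hypothesis since $q>r$, and it hides the dependence on $\|\xi\|_q$ inside $c_1,c_2$.

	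To reach the sharp condition $a>(q(r-2)+1)/(q-r)$ together with the explicit powers of $\|\xi\|_q$, I would instead invoke the dedicated Rosenthal inequality of \cite{dedecker2004coupling}, which bounds $\E|S_T|^r$ by a ``variance block'' of order $T^{r/2}$ times a power of a second-order quantile-weighted functional, plus a ``jump block'' of order $T$ times a quantile-weighted $L_r$ functional, the weights being built from $\tau^{-1}(u)=\sum_{k\geq0}\mathbf{1}\{\tau_k>u\}$. One then bounds the quantile function by Markov, $Q(u)\leq\|\xi\|_q u^{-1/q}$, and uses $\tau_k\leq ck^{-a}$, hence $\tau^{-1}(u)\lesssim u^{-1/a}$, to reduce both functionals to elementary integrals of the form $\int_0^1 u^{-\beta}\,\dx u$; the hypothesis $a>(q(r-2)+1)/(q-r)$ is exactly the threshold keeping the exponent $\beta$ of the jump block below $1$, and substituting the resulting values collects everything into $c_{q,r}(T^{r/2}\|\xi\|_q^{qr/2(q-1)}+T\|\xi\|_q^{q(r-1)/(q-1)})$. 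The inequality of \cite{dedecker2004coupling} is itself proved by the block-coupling device already used for Theorem~\ref{thm:fn_long}, run inside a recursion over blocks; the main obstacle, relative to the proof of Theorem~\ref{thm:fn_long}, is that the coupling remainder must now be controlled in $L_r$ and not merely in $L_1$ --- this is achieved by interpolating the $L_1$ coupling bound against the a priori $L_q$ bound on a block --- after which the exponents have to be tracked through the choice of block length so that precisely the stated powers of $T$ and $\|\xi\|_q$ survive. We refer to \cite{dedecker2003new,dedecker2004coupling} for the details.
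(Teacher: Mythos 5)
Your proposal follows essentially the same route as the paper: the proof is exactly an application of the Rosenthal-type inequality of \cite{dedecker2004coupling} (their Corollary 1), followed by an evaluation of the two quantile-weighted functionals $\int_0^{\|\xi\|_1}H(u)Q\circ G(u)\,\dx u$ and $\int_0^{\|\xi\|_1}|H(u)Q\circ G(u)|^{r-1}\dx u$ under the $L_q$-moment and polynomial $\tau$-mixing hypotheses. The only (cosmetic) difference is in that evaluation: the paper applies H\"{o}lder's inequality together with the identity $\int_0^{\|\xi\|_1}|Q\circ G|^{q-1}=\|\xi\|_q^q$, which extracts exactly the powers $\|\xi\|_q^{q/(q-1)}$ and $\|\xi\|_q^{q(r-1)/(q-1)}$ appearing in the statement, whereas your pointwise bounds $Q(u)\leq\|\xi\|_qu^{-1/q}$ and $\tau^{-1}(u)\lesssim u^{-1/a}$ would produce $a$-dependent powers of $\|\xi\|_q$ (harmless here, since $\|\xi\|_q=O(1)$ in all applications, but worth noting if you want the exact displayed constants).
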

\begin{proof}
	Let $G$ be the inverse of $x\mapsto \int_0^xQ(u)\dx u$ and put $H(u)=\sum_{k=0}^\infty\one_{2u<\tau_k}$, where $(\tau_k)_{k\in\Nn}$ are $\tau$-mixing coefficients of $(\xi_t)_{t\in\Nn}$. Note that for every $q\geq 1$,
	\begin{equation*}
	\int_0^{\|\xi\|_1}|Q\circ G(u)|^{q-1}\dx u = \int_0^1Q^q(v)\dx v = \|\xi\|_q^q.
	\end{equation*}
	Then by H\"{o}lder's inequality
	\begin{equation*}
	\begin{aligned}
	\int_0^{\|\xi\|_1}|H(u)Q\circ G(u)|^{r-1}\dx u & \leq \left(\int_0^{\|\xi\|_1}H^{(q-1)(r-1)/(q-r)}(u)\dx u\right)^\frac{q-1}{q-r}\|\xi\|_q^{q(r-1)/(q-1)} \\
	\end{aligned}
	\end{equation*}
	Note also that for some constant $C_{q,r}$ that depends only on $q$ and $r$ we have
	\begin{equation*}
	\begin{aligned}
	\int_0^{\|\xi\|_1}H^{(q-1)(r-1)/(q-r)}(u)\dx u & \leq (1\vee s_{q,r})\int_0^{\|\xi\|_1}\sum_{k=0}^\infty(k+1)^{(q-1)(r-1)/(q-r)-1}\one_{2u<\tau_k}\dx u \\
	& \leq 0.5(1\vee s_{q,r})\sum_{k=0}^\infty(k+1)^{(q-1)(r-1)/(q-r)-1}\tau_k \\
	& \leq 0.5c(1\vee s_{q,r})\sum_{k=1}^\infty k^{(q-1)(r-1)/(q-r)-1-a} \\
	& \leq C_{q,r}
	\end{aligned}
	\end{equation*}
	where we use the fact that $H^s(u)=\sum_{k=0}^\infty((k+1)^s - k^s)\one_{2u<\tau_k}$, $(k+1)^s - k^s \leq (1\vee s)(k+1)^{s-1}$ with $s=s_{q,r}=(q-1)(r-1)/(q-r)$, and the series converges since $a>(q(r-2)+1)/(q-r)$. Combining these estimates
	\begin{equation}\label{eq:quantile_inequality}
	\int_0^{\|\xi\|_1}|H(u)Q\circ G(u)|^{r-1}\dx u \leq C_{q,r}^\frac{q-1}{q-r}\|\xi\|_q^{q(r-1)/(q-1)}.
	\end{equation}
	By \cite{dedecker2004coupling}, Corollary 1, for some constant $c_r>0$ that depends only on $r$
	\begin{equation*}
	\begin{aligned}
	\E\left|\sum_{t=1}^T\xi_t\right|^r & \leq c_r\left\{\left(T\int_0^{\|\xi\|_1}H(u)Q\circ G(u)\dx u\right)^{r/2} + T\int_0^{\|\xi\|_1}|H(u)Q\circ G(u)|^{r-1}\dx u\right\} \\
	& \leq c_r\left\{T^{r/2}\left(C_{q,r}^\frac{q-1}{q-2}\|\xi\|_q^{q/(q-1)}\right)^{r/2} + TC_{q,r}^\frac{q-1}{q-r}\|\xi\|_q^{q(r-1)/(q-1)} \right\} \\
	& \leq c_{q,r}\left(T^{r/2}\|\xi\|_q^{qr/2(q-1)} + T\|\xi\|_q^{q(r-1)/(q-1)} \right),
	\end{aligned}
	\end{equation*}
	where the second line follows by equation~(\ref{eq:quantile_inequality}) and $c_{q,r}>0$ depends only on $q$ and $r$.
\end{proof}

\section{Large $N$ and $T$ central limit theorem}
For a double sequence $\{a_{N,T}:N,T\in\Nn\}$, we use $\lim_{N,T\to\infty}a_{N,T}$ to denote the limit when $N,T\to\infty$ jointly and $\max_{N,T\in\Nn}a_{N,T}=\max\{a_{N,T}:N\in\Nn,T\in\Nn\}$. The following central limit theorem holds for panel data consisting of $\tau$-mixing processes that may change over $N$ and $T$.
\begin{theorem}\label{thm:clt}
	Let $\{\xi_{N,T,i,t}:i\in\Nn,t\in\Z \}$ be an array of centered random vectors in $\R^p$ such that for each $N,T$, and $i$, $\{\xi_{N,T,i,t} :t\in\Z \}$ is a stationary process in $\R^p$ and $\{(\xi_{N,T,i,1},\dots,\xi_{N,T,i,T}):i\in\Nn\}$ are independent arrays in $\R^{p}\times\R^T$ satisfying (i) for some $q>2$, $\max_{i\in[N],j\in[p]}\|\xi_{N,T,i,t,j}\|_q=O(1)$; (ii) for all $N,T,i,j$, the $\tau$-mixing coefficients of $\{\xi_{N,T,i,t,j}:t\in\Z\}$ satisfy $\tau_{k-1}\leq ck^{-a},\forall k\geq 1$ for some universal constants $c>0$ and $a>(q-1)/(q-2)\vee (q\delta+1)/(q-2-\delta)$ with $q>2+\delta$ and $\delta>0$; (iii) for every $i,N\in\Nn$, $\lim_{T\to\infty}\Var(\xi_{N,T,i,t})<\infty$. Then
	\begin{equation*}
	\frac{1}{\sqrt{NT}}\sum_{i=1}^N\sum_{t=1}^T\xi_{N,T,i,t} \xrightarrow{\mathrm{d}} N(0,\Xi)\qquad \text{as}\qquad N,T\to\infty,
	\end{equation*}
	where $\Xi = \lim_{N,T\to\infty}\frac{1}{N}\sum_{i=1}^N\Var\left(\frac{1}{\sqrt{T}}\sum_{t=1}^T\xi_{N,T,i,t}\right)$ is a finite matrix, assumed to be a positive definite.
\end{theorem}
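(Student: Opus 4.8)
The plan is to reduce the multivariate statement to a scalar central limit theorem via the Cram\'er--Wold device, and then run a Lyapunov-type argument for triangular arrays of row-independent summands, with the uniform moment control supplied by the Rosenthal inequality of Theorem~\ref{thm:rosenthal}. Fix $\lambda\in\R^p$ with $\lambda\ne 0$ and write
\begin{equation*}
	W_{N,T} = \frac{1}{\sqrt{NT}}\sum_{i=1}^N\sum_{t=1}^T\lambda^\top\xi_{N,T,i,t} = \frac{1}{\sqrt{N}}\sum_{i=1}^N S_{N,T,i},\qquad S_{N,T,i} = \frac{1}{\sqrt{T}}\sum_{t=1}^T\lambda^\top\xi_{N,T,i,t}.
\end{equation*}
Because the arrays are independent over $i$, the scalars $\{S_{N,T,i}:i\in[N]\}$ are independent and centered, so it suffices to prove $W_{N,T}\cw N(0,\sigma^2)$ with $\sigma^2=\lambda^\top\Xi\lambda$, which is strictly positive since $\Xi$ is positive definite and $\lambda\ne 0$.

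The first ingredient is a moment bound on the partial sums that is uniform jointly in $i,N,T$. For each coordinate $j\in[p]$ the process $(\xi_{N,T,i,t,j})_{t\in\Z}$ is centered, stationary, satisfies $\|\xi_{N,T,i,t,j}\|_q=O(1)$, and has $\tau$-mixing coefficients bounded by $ck^{-a}$; since $r:=2+\delta$ lies in $[2,q)$ and the hypothesis $a>(q\delta+1)/(q-2-\delta)$ is exactly the condition $a>(q(r-2)+1)/(q-r)$ required by Theorem~\ref{thm:rosenthal}, that theorem gives $\E\big|\sum_{t=1}^T\xi_{N,T,i,t,j}\big|^{2+\delta}\lesssim T^{1+\delta/2}+T\lesssim T^{1+\delta/2}$. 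Dividing by $T^{1+\delta/2}$, applying Minkowski's inequality across the $p$ coordinates ($p$ being fixed), and using the Lyapunov inequality, I obtain $\E|S_{N,T,i}|^{2+\delta}=O(1)$ and hence $\max_{i\in[N]}\Var(S_{N,T,i})=O(1)$, uniformly over $N$ and $T$. Separately, using assumption (iii) and the hypothesis that $\frac{1}{N}\sum_{i=1}^N\Var\big(\tfrac{1}{\sqrt{T}}\sum_{t=1}^T\xi_{N,T,i,t}\big)$ converges to the finite matrix $\Xi$, I get $\frac{1}{N}\sum_{i=1}^N\Var(S_{N,T,i})=\lambda^\top\big(\frac{1}{N}\sum_{i=1}^N\Var(\tfrac{1}{\sqrt{T}}\sum_{t=1}^T\xi_{N,T,i,t})\big)\lambda\to\sigma^2$ as $N,T\to\infty$.

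With these two facts in hand I would close the argument at the level of characteristic functions, which handles the \emph{joint} passage $N,T\to\infty$ without appealing to a single-index triangular-array CLT. By independence, $\phi_{N,T}(t):=\E[e^{\mathrm{i}tW_{N,T}}]=\prod_{i=1}^N\E[e^{\mathrm{i}tS_{N,T,i}/\sqrt{N}}]$, and the elementary bound $|e^{\mathrm{i}x}-1-\mathrm{i}x+x^2/2|\le c_\delta|x|^{2+\delta}$ together with $\E S_{N,T,i}=0$ gives $\E[e^{\mathrm{i}tS_{N,T,i}/\sqrt{N}}]=1-\tfrac{t^2}{2N}\Var(S_{N,T,i})+R_{N,T,i}$ with $|R_{N,T,i}|\le c_\delta|t|^{2+\delta}N^{-1-\delta/2}\E|S_{N,T,i}|^{2+\delta}$. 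The moment bound yields $\sum_{i=1}^N|R_{N,T,i}|\lesssim|t|^{2+\delta}N^{-\delta/2}\to0$, the variance step yields $\sum_{i=1}^N\tfrac{t^2}{2N}\Var(S_{N,T,i})\to\tfrac{t^2\sigma^2}{2}$, and $\max_{i\in[N]}\big(\tfrac{t^2}{2N}\Var(S_{N,T,i})+|R_{N,T,i}|\big)=O(1/N)\to0$. Hence, by the elementary fact that $\prod_i(1+a_i)\to e^{L}$ whenever $\sum_i a_i\to L$, $\sum_i|a_i|$ stays bounded, and $\max_i|a_i|\to0$, we get $\phi_{N,T}(t)\to e^{-t^2\sigma^2/2}$ for every $t\in\R$ as $N,T\to\infty$; L\'evy's continuity theorem and the Cram\'er--Wold device then deliver the claim.

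The main obstacle is the middle step, and it has two faces. First, one needs the partial-sum moment bound to be the \emph{sharp} $O(T^{1+\delta/2})$ rather than the trivial $O(T^{2+\delta})$, since only the former makes the Lyapunov ratio decay like $N^{-\delta/2}$; this is precisely what Theorem~\ref{thm:rosenthal} buys, and it is here that the mixing exponent must exceed the threshold $(q\delta+1)/(q-2-\delta)$, so matching $r=2+\delta$ to the range $[2,q)$ of that theorem has to be checked carefully (the milder threshold $(q-1)/(q-2)$ in assumption (ii) is what underpins the $O(1)$ variance bounds). Second, one must legitimately take $N$ and $T$ to infinity jointly, which is why I prefer the quantitative characteristic-function computation above to quoting an off-the-shelf CLT. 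The remaining pieces --- the Cram\'er--Wold reduction, positivity of $\sigma^2$, and the continuity theorem --- are routine.
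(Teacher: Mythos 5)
Your proof is correct and follows the same skeleton as the paper's: Cram\'er--Wold reduction, cross-sectional independence of the row sums $S_{N,T,i}$, and the Rosenthal inequality of Theorem~\ref{thm:rosenthal} applied with $r=2+\delta$ (whose admissibility condition $a>(q(r-2)+1)/(q-r)=(q\delta+1)/(q-2-\delta)$ you correctly match to assumption (ii)) to get the sharp $O(T^{1+\delta/2})$ bound on $\E\big|\sum_{t=1}^T\lambda^\top\xi_{N,T,i,t}\big|^{2+\delta}$, so that the Lyapunov ratio decays like $N^{-\delta/2}$. Two points of execution differ. First, where you close with an explicit characteristic-function expansion and the product lemma, the paper cites the Lyapunov CLT for independent arrays (Billingsley, Theorem 27.3) together with Phillips and Moon's Theorem 2 to license the joint passage $N,T\to\infty$; your quantitative computation is self-contained and handles the joint limit directly, at the small cost of tacitly needing $\delta\le1$ for the bound $|e^{\mathrm{i}x}-1-\mathrm{i}x+x^2/2|\lesssim|x|^{2+\delta}$ (harmless, since one may always shrink $\delta$). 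Second, you take the existence of the limit defining $\Xi$ entirely as a hypothesis, whereas the paper additionally establishes that the inner limit $\lim_{T\to\infty}\Var\big(T^{-1/2}\sum_{t=1}^Tz^\top\xi_{N,T,i,t}\big)$ exists for each fixed $i,N$ by bounding the autocovariances with the $\tau$-mixing covariance inequality, $|\Cov(z^\top\xi_{N,T,i,0},z^\top\xi_{N,T,i,k})|\leq\tau_k^{(q-2)/(q-1)}\|z^\top\xi_{N,T,i,0}\|_q^{q/(q-1)}$, which is summable precisely when $a>(q-1)/(q-2)$, and then invoking dominated convergence and the equality of joint and iterated limits. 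Since the statement itself posits $\Xi$ as a finite limit, your shortcut is defensible; the paper's extra step is what makes the threshold $(q-1)/(q-2)$ in assumption (ii) visibly do work, rather than leaving it implicit as in your parenthetical remark.
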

\begin{proof}
	By the Cram\'{e}r-Wold device, see \cite{billingsley1995probability}, Theorem 29.4,
	\begin{equation*}
	\frac{1}{\sqrt{NT}}\sum_{i=1}^N\sum_{t=1}^T\xi_{N,T,i,t} \xrightarrow{\mathrm{d}} N(0,\Xi)\qquad \text{as}\qquad N,T\to\infty
	\end{equation*}
	in $\R^p$ if and only if for every $z\in\R^p$, the following weak convergence holds in $\R$
	\begin{equation*}
	z^\top\left(\frac{1}{\sqrt{NT}}\sum_{i=1}^N\sum_{t=1}^T\xi_{N,T,i,t}\right) \xrightarrow{\mathrm{d}} N(0,z^\top\Xi z)\qquad \text{as}\qquad N,T\to\infty.
	\end{equation*}
	Note that under maintained assumptions, for each $N,T$ and $z\in\R^p$,
	\begin{equation*}
	z^\top\left(\frac{1}{\sqrt{NT}}\sum_{i=1}^N\sum_{t=1}^T\xi_{N,T,i,t}\right) = \sum_{i=1}^Nz^\top\left(\frac{1}{\sqrt{NT}}\sum_{t=1}^T\xi_{N,T,i,t}\right)
	\end{equation*}
	is a sum of $N$ independent zero-mean random variables. By independence and stationarity, the variance of this sum is
	\begin{equation*}
	\begin{aligned}
		\sigma_{N,T,z}^2 & \triangleq \frac{1}{N}\sum_{i=1}^N\Var\left(\frac{1}{\sqrt{T}}\sum_{t=1}^Tz^\top\xi_{N,T,i,t}\right) \\
		& = \frac{1}{N}\sum_{i=1}^N\left\{\Var(z^\top\xi_{N,T,i,t}) + 2\sum_{k=1}^{T-1}\left(1-\frac{k}{T}\right)\Cov(z^\top\xi_{N,T,i,0},z^\top\xi_{N,T,i,k}) \right\}.
	\end{aligned}
	\end{equation*}
	If we show that the limit in the parentheses exists for every $i,N\in\Nn$, then the joint limit of $\sigma^2_{N,T,z}$ as $N,T\to\infty$ is the same as the sequential limit
	\begin{equation*}
		\lim_{N\to\infty}\lim_{T\to\infty}\frac{1}{N}\sum_{i=1}^N\left\{\Var(z^\top\xi_{N,T,i,t}) + 2\sum_{k=1}^{T-1}\left(1-\frac{k}{T}\right)\Cov(z^\top\xi_{N,T,i,0},z^\top\xi_{N,T,i,k}) \right\};
	\end{equation*}
	see \cite{apostol1974mathematical}, Theorem 8.39. By \cite{babii2020inference}, Lemma A.1.1, for every $k\geq 1$
	\begin{equation*}
		|\Cov(z^\top\xi_{N,T,i,0},z^\top\xi_{N,T,i,k})| \leq \tau_k^\frac{q-2}{q-1}\|z^\top\xi_{N,T,i,0}\|_q^{q/(q-1)} = O(k^{-a}),
	\end{equation*}
	where the second inequality follows under (i)-(ii). Moreover, $\sum_{k=1}^\infty k^{-a}<\infty$ under (ii). Therefore, by Lebesgue's dominated convergence theorem, for every $i,N\in\Nn$,
	\begin{equation*}
		\lim_{T\to\infty}\sum_{k=1}^{T-1}\left(1-\frac{k}{T}\right)\Cov(z^\top\xi_{N,T,i,0},z^\top\xi_{N,T,i,k})<\infty,
	\end{equation*}
	and whence under (ii)
	\begin{equation*}
		\lim_{N,T\to\infty}\sigma^2_{N,T} = \lim_{N,T\to\infty}\frac{1}{N}\sum_{i=1}^N\Var\left(\frac{1}{\sqrt{T}}\sum_{t=1}^Tz^\top\xi_{N,T,i,t}\right) = z^\top\Xi z<\infty.
	\end{equation*}
	The statement of the theorem follows by the central limit theorem for independent random variables, provided that the following Lyapunov condition holds
	\begin{equation*}
	\lim_{N,T\to\infty}\frac{1}{(NT)^{1+\delta/2}}\sum_{i=1}^N\E\left| \sum_{t=1}^Tz^\top\xi_{N,T,i,t}\right|^{2+\delta}=0;
	\end{equation*}
	see \cite{billingsley1995probability}, Theorem 27.3 and \cite{phillips1999linear}, Theorem 2.
	
	By Theorem~\ref{thm:rosenthal}, for some $c_{q,\delta}$ that depends only on $q$ and $\delta$, 
	\begin{equation*}
	\begin{aligned}
		\E\left| \sum_{t=1}^Tz^\top\xi_{N,T,i,t}\right|^{2+\delta} & \leq c_{q,\delta}\left\{T^{1+\delta/2}\|z^\top\xi_{N,T,i,t}\|_q^{q(1+\delta/2)/(q-1)} + T\|z^\top\xi_{N,T,i,t}\|_q^{q(1+\delta)/(q-1)} \right\}.
	\end{aligned}
	\end{equation*}
	 Therefore, the Lyapunov condition holds under (i).
\end{proof}

\vfill
}

\newpage

\setcounter{page}{1}
\setcounter{section}{0}
\setcounter{equation}{0}
\setcounter{table}{0}
\setcounter{figure}{0}
\renewcommand{\theequation}{OA.\arabic{equation}}
\renewcommand\thetable{OA.\arabic{table}}
\renewcommand\thefigure{OA.\arabic{figure}}
\renewcommand\thesection{OA.\arabic{section}}
\renewcommand\thepage{Online Appendix - \arabic{page}}
\renewcommand\thetheorem{OA.\arabic{theorem}}

\begin{center}
	{\LARGE\textbf{ONLINE APPENDIX}}	
\end{center}

\section{Data description \label{appsec:data-description}}

\subsection{Firm-level data} 

The full list of firm-level data is provided in Table \ref{tab:data}. 
We also add two daily firm-specific stock market predictor variables: stock returns and a realized variance measure, which is defined as the rolling sample variance over the previous 60 days (i.e.\ 60-day historical volatility).

\subsubsection{Firm sample selection}

We select a sample of firms based on data availability. First, we remove all firms from I/B/E/S which have missing values in earnings time series. Next, we retain firms that we can match with CRSP dataset. Finally, we keep firms that we can match with the RavenPack dataset.

\subsubsection{Firm-specific text data}

We create a link table of RavenPack ID and PERMNO identifiers which enables us to merge I/B/E/S and CRSP data with firm-specific textual analysis generated data from RavenPack. The latter is a rich dataset that contains intra-daily news information about firms. There are several editions of the dataset; in our analysis, we use the Dow Jones (DJ) and Press Release (PR) editions. The former contains relevant information from Dow Jones Newswires, regional editions of the Wall Street Journal, Barron's and MarketWatch. The PR edition contains news data, obtained from various press releases and regulatory disclosures, on a daily basis from a variety of newswires and press release distribution networks, including exclusive content from PRNewswire, Canadian News Wire, Regulatory News Service, and others. The DJ edition sample starts at $1^{st}$ of January, 2000, and PR edition data starts at $17^{th}$ of January, 2004. 

\smallskip

We construct our news-based firm-level covariates by filtering only highly relevant news stories. More precisely, for each firm and each day, we filter out news that has the \emph{Relevance Score} (REL) larger or equal to 75, as is suggested by the RavenPack News Analytics guide and used by practitioners; see for example \cite{kolanovic2017big}. REL is a score between 0 and 100 which indicates how strongly a news story is linked with a particular firm. A score of zero means that the entity is vaguely mentioned in the news story, while 100 means the opposite. A score of 75 is regarded as a significantly relevant news story. After applying the REL filter, we apply a novelty of the news filter by using the \emph{Event Novelty Score} (ENS); we keep data entries that have a score of 100. Like REL, ENS is a score between 0 and 100. It indicates the novelty of a news story within a 24-hour time window. A score of 100 means that a news story was not already covered by earlier announced news, while a subsequently published news story score on a related event is discounted, and therefore its scores are less than 100. Therefore, with this filter, we consider only novel news stories. We focus on {\it five sentiment indices} that are available in both DJ and PR editions. They are as follows.

\paragraph{\bf Event Sentiment Score} (ESS), for a given firm, represents the strength of the news measured using surveys of financial expert ratings for firm-specific events. The score value ranges between 0 and 100 --- values above (below) 50 classify the news as being positive (negative), 50 being neutral. 

\paragraph{\bf Aggregate Event Sentiment} (AES) represents the ratio of positive events reported on a firm compared to the total count of events measured over a rolling 91-day window in a particular news edition (DJ or PR). An event with ESS $>$ 50 is counted as a positive entry while ESS $<$ 50 is negative. Neutral news (ESS = 50) and news that does not receive an ESS score do not enter into the AES computation. As ESS, the score values are between 0 and 100. 

\paragraph{\bf Aggregate Event Volume} (AEV) represents the count of events for a firm over the last 91 days within a certain edition. As in the AES case, news that receives a non-neutral ESS score is counted and therefore accumulates positive and negative news. 

\paragraph{\bf Composite Sentiment Score} (CSS) represents the news sentiment of a given news story by combining various sentiment analysis techniques. The direction of the score is determined by looking at emotionally charged words and phrases and by matching stories typically rated by experts as having short-term positive or negative share price impact. The strength of the scores is determined by intra-day price reactions modeled empirically using tick data from approximately 100 large-cap stocks. As for ESS and AES, the score takes values between 0 and 100, 50 being the neutral.  

\paragraph{\bf News Impact Projections} (NIP) represents the degree of impact a news flash has on the market over the following two-hour period. The algorithm produces scores to accurately predict a relative volatility --- defined as scaled volatility by the average of volatilities of large-cap firms used in the test set --- of each stock price measured within two hours following the news. Tick data are used to train the algorithm and produce scores, which take values between 0 and 100, 50 representing zero impact news.

\smallskip

For each firm and each day with firm-specific news, we compute the average value of the specific sentiment score. In this way, we aggregate across editions and groups, where the latter is defined as a collection of related news. We then map the indices that take values between 0 and 100 onto $[-1,1]$. Specifically, let \(x_i \in \{\text{ESS}, \text{AES}, \text{CSS}, \text{NIP}\}\) be the average score value for a particular day and firm. We map $x_i \mapsto \bar x_i\in[-1,1]$ by computing $\bar x_i$ = $(x_i - 50)/50.$ For days with no news, we impute zero values. Note that series are centered around zero, where zero value means zero impact news. Therefore, imputing zeros is the same as assuming that no news on a given day implies zero impact news, which we believe is a reasonable assumption.

\clearpage
{\scriptsize
	\begin{longtable}{r |ll cc}
		& Ticker & Firm name & PERMNO & RavenPack ID \\ 
		\hline
		1 & MMM & 3M & 22592 & 03B8CF \\ 
		2 & ABT & Abbott labs& 20482 & 520632 \\ 
		3 & AUD & Automatic data processing & 44644 & 66ECFD \\ 
		4 & ADTN & Adtran & 80791 & 9E98F2 \\ 
		5 & AEIS & Advanced energy industries & 82547 & 1D943E \\ 
		6 & AMG & Affiliated managers group & 85593 & 30E01D \\ 
		7 & AKST & A K steel holding & 80303 & 41588B \\ 
		8 & ATI & Allegheny technologies & 43123 & D1173F \\ 
		9 & AB & AllianceBernstein holding l.p. & 75278 & CB138D \\ 
		10 & ALL & Allstate corp. & 79323 & E1C16B \\ 
		11 & AMZN & Amazon.com & 84788 & 0157B1 \\ 
		12 & AMD & Advanced micro devices & 61241 & 69345C \\ 
		13 & DOX & Amdocs ltd. & 86144 & 45D153 \\ 
		14 & AMKR & Amkor technology & 86047 & 5C8D61 \\ 
		15 & APH & Amphenol corp. & 84769 & BB07E4 \\ 
		16 & AAPL & Apple & 14593 & D8442A \\ 
		17 & ADM & Archer daniels midland & 10516 & 2B7A40 \\ 
		18 & ARNC & Arconic & 24643 & EC821B \\ 
		19 & ATTA & AT\&T & 66093 & 251988 \\ 
		20 & AVY & Avery dennison corp. & 44601 & 662682 \\ 
		21 & BHI & Baker hughes & 75034 & 940C3D \\ 
		22 & BAC & Bank of america corp. & 59408 & 990AD0 \\ 
		23 & BAX & Baxter international inc. & 27887 & 1FAF22 \\ 
		24 & BBT & BB\&T corp. & 71563 & 1A3E1B \\ 
		25 & BDX & Becton dickinson \& co. & 39642 & 873DB9 \\ 
		26 & BBBY & Bed bath \& beyond inc. & 77659 & 9B71A7 \\ 
		27 & BHE & Benchmark electronics inc. & 76224 & 6CF43C \\ 
		28 & BA & Boeing co. & 19561 & 55438C \\ 
		29 & BK & Bank of new york mellon corp. & 49656 & EF5BED \\ 
		30 & BWA & BorgWarner inc. & 79545 & 1791E7 \\ 
		31 & BP & BP plc & 29890 & 2D469F \\ 
		32 & EAT & Brinker international inc. & 23297 & 732449 \\ 
		33 & BMY & Bristol-Myers squibb co. & 19393 & 94637C \\ 
		34 & BRKS & Brooks automation inc. & 81241 & FC01C0 \\ 
		35 & CA & CA technologies inc. & 25778 & 76DE40 \\ 
		36 & COG & Cabot oil \& gas corp. & 76082 & 388E00 \\ 
		37 & CDN & Cadence design systems inc. & 11403 & CC6FF5 \\ 
		38 & COF & Capital one financial corp. & 81055 & 055018 \\ 
		39 & CRR & Carbo ceramics inc. & 83366 & 8B66CE \\ 
		40 & CSL & Carlisle cos. & 27334 & 9548BB \\ 
		41 & CCL & Carnival corporation \& plc & 75154 & 067779 \\ 
		42 & CERN & Cerner corp. & 10909 & 9743E5 \\ 
		43 & CHRW & C.H. robinson worldwide inc. & 85459 & C659EB \\ 
		44 & SCHW & Charles schwab corp. & 75186 & D33D8C \\ 
		45 & CHKP & Check point software technologies ltd. & 83639 & 531EF1 \\ 
		46 & CHV & Chevron corp. & 14541 & D54E62 \\ 
		47 & CI & CIGNA corp. & 64186 & 86A1B9 \\ 
		48 & CTAS & Cintas corp. & 23660 & BFAEB4 \\ 
		49 & CLX & Clorox co. & 46578 & 719477 \\ 
		50 & KO & Coca-Cola co. & 11308 & EEA6B3 \\ 
		51 & CGNX & Cognex corp. & 75654 & 709AED \\ 
		52 & COLM & Columbia sportswear co. & 85863 & 5D0337 \\ 
		53 & CMA & Comerica inc. & 25081 & 8CF6DD \\ 
		54 & CRK & Comstock resources inc. & 11644 & 4D72C8 \\ 
		55 & CAG & ConAgra foods inc. & 56274 & FA40E2 \\ 
		56 & STZ & Constellation brands inc. & 69796 & 1D1B07 \\ 
		57 & CVG & Convergys corp. & 86305 & 914819 \\ 
		58 & COST & Costco wholesale corp. & 87055 & B8EF97 \\ 
		59 & CCI & Crown castle international corp. & 86339 & 275300 \\ 
		60 & DHR & Danaher corp. & 49680 & E124EB \\ 
		61 & DRI & Darden restaurants inc. & 81655 & 9BBFA5 \\ 
		62 & DVA & DaVita inc. & 82307 & EFD406 \\ 
		63 & DO & Diamond offshore drilling inc. & 82298 & 331BD2 \\ 
		64 & D & Dominion resources inc. & 64936 & 977A1E \\ 
		65 & DOV & Dover corp. & 25953 & 636639 \\ 
		66 & DOW & Dow chemical co. & 20626 & 523A06 \\ 
		67 & DHI & D.R. horton inc. & 77661 & 06EF42 \\ 
		68 & EMN & Eastman chemical co. & 80080 & D4070C \\ 
		69 & EBAY & eBay inc. & 86356 & 972356 \\ 
		70 & EOG & EOG resources inc. & 75825 & A43906 \\ 
		71 & EL & Estee lauder cos. inc. & 82642 & 14ED2B \\ 
		72 & ETH & Ethan allen interiors inc. & 79037 & 65CF8E \\ 
		73 & ETFC & E*TRADE financial corp. & 83862 & 28DEFA \\ 
		74 & XOM & Exxon mobil corp. & 11850 & E70531 \\ 
		75 & FII & Federated investors inc. & 86102 & 73C9E2 \\ 
		76 & FDX & FedEx corp. & 60628 & 6844D2 \\ 
		77 & FITB & Fifth third bancorp & 34746 & 8377DB \\ 
		78 & FISV & Fiserv inc. & 10696 & 190B91 \\ 
		79 & FLEX & Flex ltd. & 80329 & B4E00D \\ 
		80 & F & Ford motor co. & 25785 & A6213D \\ 
		81 & FWRD & Forward air corp. & 79841 & 10943B \\ 
		82 & BEN & Franklin resources inc. & 37584 & 5B6C11 \\ 
		83 & GE & General electric co. & 12060 & 1921DD \\ 
		84 & GIS & General mills inc. & 17144 & 9CA619 \\ 
		85 & GNTX & Gentex corp. & 38659 & CC339B \\ 
		86 & HAL & Halliburton Co. & 23819 & 2B49F4 \\ 
		87 & HLIT & Harmonic inc. & 81621 & DD9E41 \\ 
		88 & HIG & Hartford financial services group inc. & 82775 & 766047 \\ 
		89 & HAS & Hasbro inc. & 52978 & AA98ED \\ 
		90 & HLX & Helix energy solutions group inc. & 85168 & 6DD6BA \\ 
		91 & HP & Helmerich \& payne inc. & 32707 & 1DE526 \\ 
		92 & HSY & Hershey co. & 16600 & 9F03CF \\ 
		93 & HES & Hess corp. & 28484 & D0909F \\ 
		94 & HON & Honeywell international inc. & 10145 & FF6644 \\ 
		95 & JBHT & J.B. Hunt transport services Inc. & 42877 & 72DF04 \\ 
		96 & HBAN & Huntington bancshares inc. & 42906 & C9E107 \\ 
		97 & IBM & IBM corp. & 12490 & 8D4486 \\ 
		98 & IEX & IDEX corp. & 75591 & E8B21D \\ 
		99 & IR & Ingersoll-Rand plc & 12431 & 5A6336 \\ 
		100 & IDTI & Integrated device technology inc. & 44506 & 8A957F \\ 
		101 & INTC & Intel corp. & 59328 & 17EDA5 \\ 
		102 & IP & International paper co. & 21573 & 8E0E32 \\ 
		103 & IIN & ITT corp. & 12570 & 726EEA \\ 
		104 & JAKK & Jakks pacific inc. & 83520 & 5363A2 \\ 
		105 & JNJ & Johnson \& johnson & 22111 & A6828A \\ 
		106 & JPM & JPMorgan chase \& co. & 47896 & 619882 \\ 
		107 & K & Kellogg co. & 26825 & 9AF3DC \\ 
		108 & KMB & Kimberly-Clark corp. & 17750 & 3DE4D1 \\ 
		109 & KNGT & Knight transportation inc. & 80987 & ED9576 \\ 
		110 & LSTR & Landstar system inc. & 78981 & FD4E8D \\ 
		111 & LSCC & Lattice semiconductor corp. & 75854 & 8303CD \\ 
		112 & LLY & Eli lilly \& co. & 50876 & F30508 \\ 
		113 & LFUS & Littelfuse inc. & 77918 & D06755 \\ 
		114 & LNC & Lincoln national corp. & 49015 & 5C7601 \\ 
		115 & LMT & Lockheed martin corp. & 21178 & 96F126 \\ 
		116 & MTB & M\&T bank corp. & 35554 & D1AE3B \\ 
		117 & MANH & Manhattan associates inc. & 85992 & 031025 \\ 
		118 & MAN & ManpowerGroup inc. & 75285 & C0200F \\ 
		119 & MAR & Marriott international inc. & 85913 & 385DD4 \\ 
		120 & MMC & Marsh \& mcLennan cos. & 45751 & 9B5968 \\ 
		121 & MCD & McDonald's corp. & 43449 & 954E30 \\ 
		122 & MCK & McKesson corp. & 81061 & 4A5C8D \\ 
		123 & MDU & MDU resources group inc. & 23835 & 135B09 \\ 
		124 & MRK & Merck \& co. inc. & 22752 & 1EBF8D \\ 
		125 & MTOR & Meritor inc & 85349 & 00326E \\ 
		126 & MTG & MGIC investment corp. & 76804 & E28F22 \\ 
		127 & MGM & MGM resorts international & 11891 & 8E8E6E \\ 
		128 & MCHP & Microchip technology inc. & 78987 & CDFCC9 \\ 
		129 & MU & Micron technology inc. & 53613 & 49BBBC \\ 
		130 & MSFT & Microsoft corp. & 10107 & 228D42 \\ 
		131 & MOT & Motorola solutions inc. & 22779 & E49AA3 \\ 
		132 & MSM & MSC industrial direct co.& 82777 & 74E288 \\ 
		133 & MUR & Murphy oil corp. & 28345 & 949625 \\ 
		134 & NBR & Nabors industries ltd. & 29102 & E4E3B7 \\ 
		135 & NOI & National oilwell varco inc. & 84032 & 5D02B7 \\ 
		136 & NYT & New york times co. & 47466 & 875F41 \\ 
		137 & NFX & Newfield exploration co. & 79915 & 9C1A1F \\ 
		138 & NEM & Newmont mining corp. & 21207 & 911AB8 \\ 
		139 & NKE & NIKE inc. & 57665 & D64C6D \\ 
		140 & NBL & Noble energy inc. & 61815 & 704DAE \\ 
		141 & NOK & Nokia corp. & 87128 & C12ED9 \\ 
		142 & NOC & Northrop grumman corp. & 24766 & FC1B7B \\ 
		143 & NTRS & Northern trust corp. & 58246 & 3CCC90 \\ 
		144 & NUE & NuCor corp. & 34817 & 986AF6 \\ 
		145 & ODEP & Office depot inc. & 75573 & B66928 \\ 
		146 & ONB & Old national bancorp & 12068 & D8760C \\ 
		147 & OMC & Omnicom group inc. & 30681 & C8257F \\ 
		148 & OTEX & Open text corp. & 82833 & 34E891 \\ 
		149 & ORCL & Oracle corp. & 10104 & D6489C \\ 
		150 & ORBK & Orbotech ltd. & 78527 & 290820 \\ 
		151 & PCAR & Paccar inc. & 60506 & ACF77B \\ 
		152 & PRXL & Parexel international corp. & 82607 & EF8072 \\ 
		153 & PH & Parker hannifin corp. & 41355 & 6B5379 \\ 
		154 & PTEN & Patterson-uti energy inc. & 79857 & 57356F \\ 
		155 & PBCT & People's united financial inc. & 12073 & 449A26 \\ 
		156 & PEP & PepsiCo inc. & 13856 & 013528 \\ 
		157 & PFE & Pfizer inc. & 21936 & 267718 \\ 
		158 & PIR & Pier 1 imports inc. & 51692 & 170A6F \\ 
		159 & PXD & Pioneer natural resources co. & 75241 & 2920D5 \\ 
		160 & PNCF & PNC financial services group inc. & 60442 & 61B81B \\ 
		161 & POT & Potash corporation of saskatchewan inc. & 75844 & FFBF74 \\ 
		162 & PPG & PPG industries inc. & 22509 & 39FB23 \\ 
		163 & PX & Praxair inc. & 77768 & 285175 \\ 
		164 & PG & Procter \& gamble co. & 18163 & 2E61CC \\ 
		165 & PTC & PTC inc. & 75912 & D437C3 \\ 
		166 & PHM & PulteGroup inc. & 54148 & 7D5FD6 \\ 
		167 & QCOM & Qualcomm inc. & 77178 & CFF15D \\ 
		168 & DGX & Quest diagnostics inc. & 84373 & 5F9CE3 \\ 
		169 & RL & Ralph lauren corp. & 85072 & D69D42 \\ 
		170 & RTN & Raytheon co. & 24942 & 1981BF \\ 
		171 & RF & Regions financial corp. & 35044 & 73C521 \\ 
		172 & RCII & Rent-a-center inc. & 81222 & C4FBDC \\ 
		173 & RMD & ResMed inc. & 81736 & 434F38 \\ 
		174 & RHI & Robert half international inc. & 52230 & A4D173 \\ 
		175 & RDC & Rowan cos. inc. & 45495 & 3FFA00 \\ 
		176 & RCL & Royal caribbean cruises ltd. & 79145 & 751A74 \\ 
		177 & RPM & RPM international inc. & 65307 & F5D059 \\ 
		178 & RRD & RR R.R. donnelley \& sons co. & 38682 & 0BE0AE \\ 
		179 & SLB & Schlumberger ltd. n.v. & 14277 & 164D72 \\ 
		180 & SCTT & Scotts miracle-gro co. & 77300 & F3FCC3 \\ 
		181 & SM & SM st. mary land \& exploration co. & 78170 & 6A3C35 \\ 
		182 & SONC & Sonic corp. & 76568 & 80D368 \\ 
		183 & SO & Southern co. & 18411 & 147C38 \\ 
		184 & LUV & Southwest airlines co. & 58683 & E866D2 \\ 
		185 & SWK & Stanley black \& decker inc. & 43350 & CE1002 \\ 
		186 & STT & State street corp. & 72726 & 5BC2F4 \\ 
		187 & TGNA & TEGNA inc. & 47941 & D6EAA3 \\ 
		188 & TXN & Texas instruments inc. & 15579 & 39BFF6 \\ 
		189 & TMK & Torchmark corp. & 62308 & E90C84 \\ 
		190 & TRV & The travelers companies inc. & 59459 & E206B0 \\ 
		191 & TBI & TrueBlue inc. & 83671 & 9D5D35 \\ 
		192 & TUP & Tupperware brands corp. & 83462 & 2B0AF4 \\ 
		193 & TYC & Tyco international plc & 45356 & 99333F \\ 
		194 & TSN & Tyson foods inc. & 77730 & AD1ACF \\ 
		195 & X & United states Steel corp. & 76644 & 4E2D94 \\ 
		196 & UNH & UnitedHealth group inc. & 92655 & 205AD5 \\ 
		197 & VIAV & Viavi solutions inc. & 79879 & E592F0 \\ 
		198 & GWW & W.W. grainger inc. & 52695 & 6EB9DA \\ 
		199 & WDR & Waddell \& reed financial inc. & 85931 & 2F24A5 \\ 
		200 & WBA & Walgreens boots alliance inc. & 19502 & FACF19 \\ 
		201 & DIS & Walt disney co. & 26403 & A18D3C \\ 
		202 & WAT & Waters corp. & 82651 & 1F9D90 \\ 
		203 & WBS & Webster financial corp. & 10932 & B5766D \\ 
		204 & WFC & Wells fargo \& co. & 38703 & E8846E \\ 
		205 & WERN & Werner enterprises inc. & 10397 & D78BF1 \\ 
		206 & WABC & Westamerica bancorp & 82107 & 622037 \\ 
		207 & WDC & Western digital corp. & 66384 & CE96E7 \\ 
		208 & WHR & Whirlpool corp. & 25419 & BDD12C \\ 
		209 & WFM & Whole foods market inc. & 77281 & 319E7D \\ 
		210 & XLNX & Xilinx inc. & 76201 & 373E85 \\ 
		\hline
		\caption{\footnotesize Final list of firms --- The table contains the information about the full list of firms: tickers, firm names, CRSP PERMNO code and RavenPack ID. Tickers and firm names are taken as of June, 2017. PERMNO and RavenPack ID columns are used to match firms and firm news data. \label{tab:list_firms}} 
	\end{longtable}
}

\end{document}